\newtheorem{definition}{Definition}
\newtheorem{theorem}{Theorem}
\newtheorem{prop}{Proposition}
\newtheorem{lemma}{Lemma}
\newtheorem{ass}{Assumption}
\newtheorem{rem}{Remark}
\newtheorem{cor}{Corollary}
\begin{document}

%

%

\twocolumn[

\aistatstitle{Multi-Agent Learning in Contextual Games under Unknown Constraints}

\aistatsauthor{ Anna M. Maddux \And Maryam Kamgarpour }

\aistatsaddress{ EPFL Lausanne \And  EPFL Lausanne} ]

\begin{abstract}
    We consider the problem of learning to play a repeated contextual game with unknown reward and unknown constraints functions. Such games arise in applications  where each agent's action needs to belong to a feasible set, but the feasible set is a priori unknown. For example, in constrained multi-agent reinforcement learning, the constraints on the agents' policies are a function of the unknown dynamics and hence, are themselves unknown. Under kernel-based regularity assumptions on the unknown functions, we develop a no-regret, no-violation approach which exploits similarities among different reward and constraint outcomes. The no-violation property ensures that the time-averaged sum of constraint violations converges to zero as the game is repeated. We show that our algorithm, referred to as c.z.AdaNormalGP, obtains kernel-dependent regret bounds and that the cumulative constraint violations have sublinear kernel-dependent upper bounds. In addition we introduce the notion of constrained contextual coarse correlated equilibria (c.z.CCE) and show that $\epsilon$-c.z.CCEs can be approached whenever players' follow a no-regret no-violation strategy. Finally, we experimentally demonstrate the effectiveness of c.z.AdaNormalGP on an instance of multi-agent reinforcement learning.
\end{abstract}

\section{Introduction}
Several real-world problems such as auctions, traffic routing and multi-robot applications involve multiple self-interested agents that repeatedly interact with each other. Such problems can be described as \textit{repeated  games} \cite{cesa-bianchi06}. In most cases, the underlying game is unknown to the agents, consider for example, the outcome of an auction or travel times in a transportation network. Thus, a significant body of research has focused on deriving  algorithms that allow agents to learn a strategy that is aligned with their goals. 

The goal of a given learning agent in a repeated game is to maximize her cumulative reward. However, without making any assumptions on other agents' learning approach, such a goal is difficult to attain. In this setting, a non-trivial yet attainable performance measure is no-regret. A learning agent aims to learn a no-regret strategy by repeatedly playing an action and observing the corresponding reward. Existing no-regret algorithms for repeated games \cite{littlestone1994,freund95} are based on the assumption that agents can play any action. 

In many practical scenarios agents' actions are subject to a variety of constraints and a priori the set of feasible actions may not be known. These constraints can represent various requirements, such as safety regulations \cite{sun2017a, usmanova21}, fairness considerations, or task-specific limitations \cite{guo2022}. For example, in thermal control of buildings the occupants' comfort temperature and energy cost are functions of the temperature controller parameters and unknown if the dynamics of the controller are unknown.
The problem of no-regret learning under unknown constraints has been gaining increasing attention (see related work below). However, this problem has not been sufficiently explored in a game setting.

In this paper, we consider the class of \textit{contextual games with unknown constraints}. 
\textit{Repeated contextual games} \cite{sessa21} generalize the class of static games, where agents additionally observe contextual information at each round. In routing games, for instance, agents may observe the network's capacity or weather conditions.  Leveraging the context information, the agents can learn a no-regret policy, which  maps  the observed context to their actions which result in a stronger notion of regret. In constrained games, however, the agents need to operate within the boundaries of their unknown constraints. We thus address  learning a no-regret  no-violation strategy for a constrained contextual game, where no-violation implies a sublinear bound on the cumulative constraint violations.


\textbf{Related work.}
No-regret learning under unknown constraints has been extensively studied by various research fields including online convex optimization \cite{guo2022, yi2021}, black-box optimization \cite{jones22}, reinforcement learning (RL) \cite{Wei_Liu_Ying_2022}, and multi-agent reinforcement learning (MARL) \cite{chen2022}. In safety-critical applications, a line of research \cite{sui15,turchetta2019safe} has focused on developing methods, where no constraint violations are tolerated. These methods are restricted to a single player setting and require knowing an initially feasible point. For many real-world problems, however, it may be sufficient to restrict the amount of constraint violations. In designing a controller to regulate the temperature in a building \cite{dinatale2022}, for instance, deviating from the occupants' comfort temperature is tolerable but should be avoided. Several works \cite{sun2017a, Yu2020, Wei_Liu_Ying_2022, zhou22} consider the so-called soft constraints, in which the goal is to limit the sum of constraint function values over the rounds. Thus, violations at one round can be compensated in a different round. In our paper, we consider a stronger notion of constraint satisfaction over rounds. In particular, our aim is to limit the cumulative constraint \emph{violations} rather than values \cite{yi2021,guo2022,jones22}. Perhaps closest to ours is the setup of constrained online  optimization \cite{yi2021, guo2022}, where the unknown time-varying loss and constraint functions are convex. In our setup of repeated games with unknown constraints, the reward of each agent is static and the time-variations are due to the dependence of the reward function on the actions of other agents. Our goal is to leverage this structure of the game to achieve improved no-regret no-violations,  in the absence of convexity and with the inclusion of contextual information.  



Incorporating contextual information into no-regret learning has been extensively studied in the bandit literature \cite{slivkins2014a} for reward functions that depend linearly on context features \cite{li2010, chu2011, lin2022} and more generally for reward functions that are an arbitrary linear function of the contexts' images in some corresponding reproducing kernel Hilbert space (RKHS) \cite{ong2011, valko2013}. The RKHS assumption has  been considered in contextual games \cite{sessa21} and contextual no-regret algorithms have been derived for this class of games. While \cite{sessa21} leveraged the game structure to improve the regret rate compared to an online learning scenario, it did not incorporate  unknown constraints and thus, did not come with a provable violation bound.


\textbf{Contributions.} 

We propose a novel no-regret, no-violation algorithm c.z.AdaNormalGP for playing unknown contextual games with constraints. For the handling of constraints in no-regret learning, we establish a connection between playing a (contextual) game with unknown constraints and the sleeping expert problem \cite{blum1995,freund1997}. The sleeping expert problem was initially formulated to address dynamically available action sets rather than constraints on the action sets. While our feasible action set is static, the challenges in our setting is that this set is a priori unknown. 
To derive a no-regret algorithm with bounded rate of constraint violation, we exploit the assumption that similar contexts and agents' actions lead to similar reward and constraint values and similarly to \cite{sessa21,jones22} use the RKHS framework to learn the constraint functions online based on past game data. We then view an action as an asleep expert if it is infeasible with respect to the estimate of the constraint function at a given round. 


We establish a regret rate and a cumulative constraint violation rate for our algorithms c.z.AdaNormalGP. Our regret rates are the same, up to logarithmic factor in $T$ and $K$,  as those for unconstrained contextual games. Our violation rate matches that of \cite{jones22}, which only considers a static online optimization problem without any context. Concretely, c.z.AdaNormalGP with $K$ actions obtains the following bounds after $T$ rounds:
{\small
\begin{itemize}
    \item For finite context spaces, AdaNormalGP achieves {\small$\mathcal{O}(\sqrt{|\mathcal{Z}|T\log K}+\gamma_0^T\sqrt{T})$} regret and {\small$\mathcal{O}(\gamma_m^T\sqrt{T})$} cumulative constraint violation for each constraint {\small$m\in[M]$}.
    \item For infinite context spaces, AdaNormalGP achieves {\small$\mathcal{O}(T^{\frac{d+1}{d+2}}\sqrt{\log(K)}+\gamma_0^T\sqrt{T})$} regret and {\small$\mathcal{O}(\gamma_m^T\sqrt{T})$} cumulative constraint violation for each constraint {\small$m\in[M]$}.
\end{itemize}
}
We further define the new notion of
constrained contextual Coarse Correlated Equilibria (c.z.CCE) for constrained contextual games and show that
c.z.CCEs can be approached whenever agents follow a no-regret, no-violation algorithm such as c.z.AdaNormalGP.




\section{Problem Setup}\label{sec:problem_statement}
We consider a repeated contextual game among $N$ agents or players, where the two terms are used interchangeably. At every round, each player selects a feasible action and receives a payoff which depends on the action profile chosen by all players and the \textit{context} at that round. More formally, let $\mathcal{A}_i$ denote the action set of player $i$ and let $\mathcal{Z}$ represent the set of possible contexts. We assume {\small$\mathcal{A}_i^f(z):=\{a_i\in\mathcal{A}_i\ |\ g_{i,m}(a_i,z)\leq 0,\quad\forall m\in[M]\}$} to be the context-dependent feasible action set, where $z\in\mathcal{Z}$ and {\small$\{g_{i,m}:\mathcal{A}_i\times\mathcal{Z}\rightarrow\mathbb{R}\}_{m\in[M]}$} is a set containing $M$ constraint functions. Furthermore, define {\small$r_i:\mathcal{A}\times\mathcal{Z}\rightarrow[0,1]$} to be the reward function of each player $i$, where {\small$\mathcal{A}:=\mathcal{A}_1\times\ldots\times\mathcal{A}_N$} is the joint action space. We assume that each agent's reward function $r_i$ and constraint functions $g_{i,m}$, {\small$m\in[M]$}, are unknown to her, and thus also her context-dependent feasible action set {\small$\mathcal{A}_i^f(z)$ for all $z\in\mathcal{Z}$}. Then, a repeated contextual game with unknown constraints proceeds as follows. At every round $t$, context $z^t$ is revealed. The players observe $z^t$ and, based on it, simultaneously each player $i\in[N]$ selects an action $a_i^t\in\mathcal{A}_i$ which needs to be feasible, i.e., {\small$a_i^t\in\mathcal{A}_i^f(z^t)$}. Then, players obtain rewards $r_i(a_i^t,a_{-i}^t,z^t)$ and constraints $g_{i,m}(a_i^t,z^t)$, where {\small $m\in[M]$} and {\small$a_{-i}^t:=(a_1^t,\ldots,a_{i-1}^t,a_{i+1}^t,\ldots,a_N^t)$}.

We define $\Pi_i$ to be the set of all policies $\pi_i:\mathcal{Z}\rightarrow\mathcal{A}_i$ mapping contexts to actions. After $T$ rounds, the performance of player $i$ is measured by the \textit{constrained contextual regret}:
{\small
\begin{align}\label{def:contextual_regret}
\begin{split}
    R_i^T=&\max_{\substack{\pi_i\in\Pi_i}} \sum_{t=1}^T r_i(\pi_i(z^t),a_{-i}^t,z^t)-\sum_{t=1}^T r_i(a_i^t,a_{-i}^t,z^t)\\
    &\text{s.t.}\quad g_{i,m}(\pi_i(z^t),z^t)\leq 0,\quad\forall m\in[M], \forall t\geq 1.
\end{split}
\end{align}
}

We define the cumulative constraint violation for each constraint $g_{i,m}$ with $m\in[M]$:
{\small
\begin{align}\label{def:constraint_violation}
    \mathcal{V}_{i,m}^T=\sum_{t=1}^T[g_{i,m}(a_i^t,z^t)]_+,
\end{align}
}

where $[x]_+:=\max\{0,x\}$. The constrained contextual regret, short regret, measures the gain player $i$ could have achieved by following her best feasible fixed policy in hindsight had the sequence of other players' actions {\small$\{a_{-i}\}_{t=1}^T$}, the reward $r_i(\cdot)$ and the constraint $g_{i,m}(\cdot)$ functions been known to her. The cumulative constraint violation measures the accumulated violations over all rounds $T$ for each constraint function. A strategy is \textit{no-regret} for player $i$ if {\small$R_i^T/T\rightarrow 0$} as {\small$T\rightarrow\infty$}. Similarly, we call a strategy \textit{no-violation} for player $i$ if {\small$\mathcal{V}_{i,m}^T/T\rightarrow 0$} for all $m\in[M]$ as {\small$T\rightarrow\infty$}. 

The problem we address in this paper is the design  of an algorithm that is simultaneously \textit{no-regret} and \textit{no-violation}. To ensure that the learning problem is meaningful, we assume that a feasible policy exists for each player. 

\begin{ass}[Feasibility assumption]\label{ass:feasibility_context}
    We assume that for each agent $i\in\mathcal{N}$ the following problem
    {\small
    \begin{align}\label{problem_context}
    \begin{split}
        &\max_{\pi_i\in\Pi_i}\quad \sum_{t=1}^T r_i(\pi_i(z^t),a_{-i}^t,z^t)\\
        &\text{s.t.}\quad g_{i,m}(\pi_i(z^t),z^t)\leq 0,\quad\forall m\in[M], \forall t\geq 1
    \end{split}
    \end{align}
    }
    
    is feasible with optimal constrained solution $\overline{\pi}_i^*$.
\end{ass}
Observe that this is a natural assumption for requiring no-cumulative violation. In particular, in a non-contextual game setting this assumption is equivalent existence of a feasible action.

\subsection{Feedback model and regularity assumption}



At the end of each round, player $i$ receives feedback information that she can use to update her strategy and thereby improve her performance in terms of regret and constraint violations. In a full-information feedback model player $i$ observes her reward for any action $a_i\in\mathcal{A}_i$ given the other players' actions $a_{-i}^t$ and the observed context $z^t$, i.e., $\mathbf{r}_i^t=[r_i^t(a_i,a_{-i}^t,z^t)]_{a_i\in\mathcal{A}_i}$. A more realistic feedback model is bandit feedback, where player $i$ observes her reward $r_i^t(a_i^t,a_{-i}^t,z^t)$ for the played action profile $a^t$. In the constrained setting, we extend the term bandit feedback such that after each round player $i$ additionally observes her constraint values $g_{i,m}(a_i^t,z^t)$, for all $m\in[M]$, for her played action $a_i^t$. Additionally, in many practical settings player $i$ can observe the other players' played actions $a_{-i}^t$ or some aggregative function $\gamma(a_{-i}^t)$ thereof, e.g. in routing games, agents may observe the total occupancy of each edge in the road network or in electricity markets, agents may observe the aggregate load.

\begin{ass}[Feedback assumption]\label{ass:feedback_context}
    We consider a noisy bandit feedback model, where, at every round $t$, player $i$ observes a noisy measurement of the reward $\Tilde{r}_i^t=r_i(a_i^t,a_{-i}^t,z^t)+\epsilon_i^t$ and a noisy measurement of each constraint $\Tilde{g}_{i,m}^t=g_{i,m}(a_i^t)+\epsilon_{i,m}^t$. The noise $\epsilon_i^t$ is $\sigma_{i,0}$-sub-Gaussian and $\epsilon_{i,m}^t$ is $\sigma_{i,m}$-sub-Gaussian for each $m\in[M]$. Furthermore, we assume player $i$ also observes the played actions $a_{-i}^t$ of the other players.
\end{ass}

Repeated contextual games with unknown constraints generalize the class of repeated static games in two ways: Firstly, as discussed in \cite{sessa21, valko2013}, the game may change from round to round due to a potentially different context $z^t$. Secondly and so far unaddressed by the literature, the action set $\mathcal{A}_i$ of each player may be subject to unknown (context-dependent) constraints. This makes this generalization challenging as a priori a player does not know whether an action is feasible or not.

Attaining \textit{no-regret} and \textit{no-violation} is impossible  without any regularity assumptions on the reward and constraint functions \cite{srinivas2009gaussian}. In many practical settings, including routing games and thermal control, similar inputs lead to similar rewards and constraint values which implies some regularity on the reward and constraint functions. In the application of thermal control, for instance, similar choices of control input parameters and set point temperatures result in a similar energy consumption.  

\begin{ass}[Regularity assumption]\label{ass:regularity_context}
    We assume the action set $\mathcal{A}_i$ is finite with {\small$|\mathcal{A}_i|=K$} and consider a general context set $\mathcal{Z}\subseteq\mathbb{R}^d$. Let {\small$\mathcal{D}=\mathcal{A}\times\mathcal{Z}$}. We assume the unknown reward function {\small$r_i:\mathcal{A}\times\mathcal{Z}\rightarrow [0,1]$} has a bounded norm {\small$\|r_i\|_{k_{i,0}}=\sqrt{\langle r_i,r_i\rangle_{k_{i,0}}}\leq B_{i,0}$} in a reproducing kernel Hilbert space (RKHS, see \cite{rasmussen2005}) associated with a positive semi-definite kernel function $k_{i,0}(\cdot,\cdot)$. The RKHS is denoted by {\small$\mathcal{H}_{k_{i,0}}(\mathcal{D})$}. We further assume bounded variance by restricting $k_{i,0}(x, x')\leq 1$ for all $x,x'\in\mathcal{D}$. Let {\small$\mathcal{D}_i=\mathcal{A}_i\times\mathcal{Z}$}. Similarly, we assume each unknown constraint function {\small$g_{i,m}:\mathcal{A}_i\times\mathcal{Z}\rightarrow[0,1]$}, {\small$m\in[M]$}, has a bounded norm {\small$\|g_{i,m}\|_{k_{i,m}}\leq B_{i,m}$} in a RKHS associated with a positive semi-definite kernel function $k_{i,m}(\cdot,\cdot)$. The RKHS is denoted by {\small$\mathcal{H}_{k_{i,m}}(\mathcal{D}_i)$} and we assume that $k_{i,m}(x_i, x_i')\leq 1$ for all $x_i,x_i'\in\mathcal{D}_i$. 
\end{ass}

For general RKHS {\small$\mathcal{H}_k(\mathcal{X})$} with positive semi-definite kernel function $k(\cdot,\cdot)$, the RKHS norm $\|f\|_k$  measures smoothness\footnote{This can be seen from {\small$|f(x)-f(x')|=|\langle f,k(x,\cdot)-k(x',\cdot)\rangle|\leq \|f\|_k\|k(x,\cdot)-k(x',\cdot)\|_k$} for any {\small$f\in\mathcal{H}_k(\mathcal{X})$}, using the reproducing property and Cauchy-Schwarz inequality.} of {\small$f\in\mathcal{H}_k(\mathcal{X})$} with respect to the kernel function $k(\cdot,\cdot)$, while the kernel $k(\cdot,\cdot)$ encodes similarities between different points $x,x'\in\mathcal{X}$. 
Assuming that some unknown function has bounded norm in a RKHS is standard in black-box optimization \cite{srinivas2009gaussian,chowdhury2017kernelized} and  was recently also exploited in repeated (contextual) games \cite{sessa2019no,sessa21}. It is analogous to assuming bounded weights in linear parametric bandits \cite{chu2011,abbasi-yadkori2011}. The bound of $1$ on $k(x,x')$ is only required to obtain scale-free bounds. Otherwise, if $k(x, x') \leq n$ for all $x,x'\in\mathcal{X}$ or $\|f\|_k \leq nB$, our bounds increase by a factor of $n$ \cite{agrawal2014}. Typical kernel choices are the \textit{Squared Exponential}, \textit{Matérn} and the \textit{polynomial} kernel:
{\small
\begin{align*}
    &k_{SE}(x,x') = \exp\bigg(-\frac{s^2}{2l^2}\bigg),\\
    & k_{\text{Matérn}}(x,x') = \frac{2^{1-\nu}}{\Gamma(\nu)}\bigg(\frac{s\sqrt{2\nu}}{l}\bigg)^\nu B_\nu\bigg(\frac{s\sqrt{2\nu}}{l}\bigg)^\nu,\\
    & k_{\text{poly}}(x,x')=\bigg(b+\frac{x^Tx'}{l}\bigg)^d,
\end{align*}
}

where $l,\nu, b>0,$ are kernel hyperparameters, $s=\|x-x'\|_2$ encodes similarities between points $x,x'\in\mathcal{X}$, and $B_\nu(\cdot)$ is the modified Bessel function. We emphasize that Assumption \ref{ass:feedback_context} and \ref{ass:regularity_context} allow player $i$ to learn about her unknown reward $r_i$ and constraint $g_{i,m}$ functions.

The Gaussian process (GP) framework can be used to learn unknown functions {\small$f\in\mathcal{H}_k(\mathcal{X})$} with {\small$\|f\|_k<\infty$} since such functions can be modelled as a sample from a GP \cite[Section~6.2]{rasmussen2005}. A GP over $\mathcal{X}$ is a probability distribution over functions {\small$f(x) \sim \mathcal{GP}(\mu(x), k(x, x'))$}, specified by its mean and covariance functions $\mu(\cdot)$ and $k(\cdot, \cdot)$, respectively. A GP prior {\small$\mathcal{GP}(0,k(\cdot,\cdot))$} over the initial distribution of the unknown function $f$, where $k(\cdot,\cdot)$ is the kernel function associated with the RKHS {\small$\mathcal{H}_k(\mathcal{X})$}, is used to capture the uncertainty over $f$. Then, the GP framework can be used to predict function values $f(x)$ for any points $x\in\mathcal{X}$ based on a history of measurements $\{y^{\tau}\}_{\tau=1}^t$ at points $\{x^{\tau}\}_{\tau=1}^t$ with {\small$y^\tau = f(x^\tau)+\epsilon^\tau$} and {\small$\epsilon^\tau\sim\mathcal{N}(0,\sigma^2)$}. Conditioned on the history of measurements, the posterior distribution over $f$ is a GP with mean and variance functions:
{\small
\begin{align}
    &\mu^t(x) = \textbf{k}^t(x)^\top (\textbf{K}^t+\sigma^2 \textbf{I}^t)^{-1}\textbf{y}^t \label{eq:mu}\\   
    &(\sigma^t)^2(x)=k(x,x)-\textbf{k}^t(x)^\top(\textbf{K}^t+\sigma^2 \textbf{I})^{-1}\textbf{k}^t(x), \label{eq:var}
\end{align}
}

where $\textbf{k}^t(x)=[k(x^\tau,x)]_{\tau=1}^t$, $\textbf{y}^t=[y^\tau]_{\tau=1}^t$, and $\textbf{K}^t=[k(x^\tau,x^{\tau'})]_{\tau,\tau'=1}^t$ is the kernel matrix.


In the next section, we present our algorithm which exploits the feedback model and the regularity assumption to derive a no-regret no-violation algorithm. We focus on the perspective of a single player and drop the subscript $i$ wherever this is possible without causing confusion.

\section{The c.z.AdaNormalGP Algorithm}\label{sec:c.z.AdaNormalGP}

\begin{algorithm}[!t]
\caption{c.z.AdaNormalGP algorithm}\label{alg:contextual_constrainedGPHedge}
\begin{algorithmic}[1]
    \For{\texttt{$t=1,\ldots,T$}}
        \State \texttt{Observe context $z^t$.}
        \If{{\tiny$\exists m\in[M]$} such that {\tiny$\min_{\pi_i\in\Pi_i}\text{LCB}_m^t(\pi_i(z^t),z^t)>0$}}
            \State \texttt{Declare infeasibility.}
        \EndIf
        \State \texttt{Compute distribution $p^t(z^t)$ using $z^t$ and\\
        \hspace{\algorithmicindent}{\small$[a_i^\tau,a_{-i}^\tau,z^\tau,\Tilde{r}^\tau]_{\tau=1}^{t-1}$}}
        \State \texttt{Sample $a_i^t\sim \overline{p}^t(z^t)$, where}
        \begin{align*}
            {\tiny\overline{p}_{a_i}^t(z^t)=\begin{cases}
        \frac{p_{a_i}^t(z^t)}{\sum_{k\in\mathcal{A}_i} \overline{p}_{k}^t(z^t)}, & \text{if $\text{LCB}_m^t(a_i,z^t)\leq 0$ $\forall m\in[M],$}\\
        0, & \text{otherwise}.
        \end{cases}}
        \end{align*}
        \State \texttt{Observe noisy reward $\Tilde{r}^t$ and constraints\\
        \hspace{\algorithmicindent} $\Tilde{g}_{m}^t$ and action profile $a_{-i}^t$.}
        \State \texttt{Append {\small$(a_i^t,a_{-i}^t,\Tilde{r}^t,z^t)$} and {\small$(a_i^t,\Tilde{g}_{m}^t,z^t)$} to the\\
        \hspace{\algorithmicindent}history of play.}
        \State \texttt{Update $\mu_0^t, \sigma_0^t$ and $\mu_m^t, \sigma_m^t$ via \eqref{eq:mu}-\eqref{eq:var}.}
    \EndFor
\end{algorithmic}
\end{algorithm}

In our approach, we build upper confidence bounds on the reward function and lower confidence bounds on the constraint functions using the GP framework proposed in the previous section. We thereby obtain optimistic estimates of the reward and the constraints which enable us to derive bounds for the regret and cumulative constraint violations. Particularly, for the constraints, we use lower confidence bounds to ensure that even in early rounds when we have poor estimates of the constraints, we can find feasible actions. Considering optimistic estimates of the constraints is not limiting since as the amount of past game data increases over the rounds uncertainty about the unknown constraints decreases and thus the confidence bounds shrink towards the true constraint functions

Making use of past game data {\small$[(a_i^\tau,a_{-i}^\tau,\Tilde{r}^\tau,\Tilde{g}_m^\tau)]_{\tau=1}^{t-1}$}, at round $t$,  the posterior mean $\mu_m^t(\cdot)$ and posterior variance $(\sigma_m^t)^2(\cdot)$, for {\small$m\in\{0\}\cup[M]$}, can be used to obtain an upper and a lower confidence bound on $r$ and $g_m$, respectively:
{\small
\begin{align}
    &\text{UCB}_m^t(x) := \mu_m^{t-1}(x) + \beta_m^t\sigma_m^{t-1}(x),\quad\forall x\in\mathcal{X}\label{eq:ucb}\\
    &\text{LCB}_m^t(x) := \mu_m^{t-1}(x) - \beta_m^t\sigma_m^{t-1}(x),\quad\forall x\in\mathcal{X}\label{eq:lcb},
\end{align}
}

where {\small$\mathcal{X}=\mathcal{A}\times\mathcal{Z}$} for the reward function $r$ and {\small$\mathcal{X}=\mathcal{A}_i\times\mathcal{Z}$} for each constraint function $g_m$. 
Parameter $\beta_m^t$ controls the width of the confidence bound. Importantly, if $\beta_m^t$ is specified adequately, the reward $r$ and the constraint $g_{m}$ functions are bounded from below and above by the lower {\small$\text{LCB}_m^t$} and the upper {\small$\text{UCB}_m^t$} confidence bound with high probability. For the reward function, for example, if $\beta_0^t$ is specified as {\small$\beta_0^t=B_0+\sigma_0\sqrt{2(\gamma_0^{t-1}+1+\log(1/\delta)}$}, then {\small$\text{LCB}_0^t(x)\leq r(x)\leq\text{UCB}_0^t(x)$}, for all $x\in\mathcal{X}$ and $t \geq 1$, with probability $1-\delta$ \cite{chowdhury2017kernelized}. 
Here, $\gamma_0^{t}$, the \textit{maximum information gain} \cite{srinivas2009gaussian}, is a kernel-dependent quantity which for the reward $r$ is defined as:
{\small
\begin{align*}
    \gamma_0^t:= \max_{A\subset\mathcal{X}:|A|=t} I(y_A;r_A).
\end{align*}
}

Quantity $I(y_A,r_A)$ denotes the mutual information between the true reward values $\textbf{r}_A=[r(x)]_{x\in A}$ and the noisy reward measurements $\textbf{y}_A=[r(x)+\epsilon]_{x\in A}$.
It quantifies the maximal reduction in uncertainty about $r$ after observing points $A\subset\mathcal{X}$ and their corresponding noisy measurements $\textbf{y}_A$. The maximum information gain $\gamma_m^t$ of each constraint function $g_{m}$ is defined analogously.


We now introduce constrained contextual AdaNormalGP (c.z.AdaNormalGP) algorithm for repeated contextual games with a priori unknown constraints on the action set of each player. At each round, c.z.AdaNormalGP observes context $z^t$ and checks whether a feasible action exists with respect to any $\text{LCB}_m^t$ estimate of constraint $g_m$, since the constraints $g_{m}$ are unknown. The algorithm then computes a context-dependent distribution\footnote{The distribution update rule is inspired by the update rule of AdaNormalHedge in \cite{luo15}.} $p^t(z^t)\in\Delta_K$ over the action set $\mathcal{A}_i$ based on the $\text{UCB}_0^t$ estimate of the reward $r$ and the observed context $z^t$. Here $\Delta_K$ denotes the $K$-dimensional probability simplex. The distribution is renormalized to assign zero weight to an action $a_i\in\mathcal{A}_i$ which is infeasible with respect to any $\text{LCB}_m^t$ estimate, $m\in[M]$. A feasible action $a_i^t$ is sampled from the renormalized distribution. Once the computation rule of $p^t(z^t)$ (line $6$ of Algorithm \ref{alg:contextual_constrainedGPHedge}) is specified, c.z.AdaNormalGP is well-defined. 

Consider a fixed context $z\in\mathcal{Z}$. We establish that playing a repeated game with unknown constraints is equivalent to playing a sleeping expert problem \cite{blum1995,freund1997}. In the sleeping expert problem, at each round each expert (action $a_i$) reports whether it is awake (feasible) or asleep (infeasible). Since the constraint functions $g_m$ are unknown, c.z.AdaNormalGP maintains an optimistic estimate of the set of feasible actions at each round, defined as the set of actions $\mathcal{A}_i$ for which, for all $m\in[M]$, $\text{LCB}_m^t(a_i,z^t)\leq 0$. Thus, in a repeated game with unknown constraints the set of awake experts at round $t$ corresponds to the set of feasible actions with respect to the $\text{LCB}_m^t$ estimate of each constraint $g_m$ for all $m\in[M]$. Leveraging this connection, we specify a computation rule of $p^t(z^t)$ based on the update rule of the sleeping expert algorithm AdaNormalHedge \cite{luo15}. In the following, we specify $p^t(z^t)$ when the context space $\mathcal{Z}$ is finite (Algorithm \ref{alg:strategy_finite_Z}). In Appendix \ref{app:general_regret_bounds_infinite_Z} we specify a computation rule $p^t(z^t)$ for an infinite context space and provide bounds on the regret and the cumulative constraint violations for this setting. (Algorithm \ref{alg:strategy_infinite_Z}).


\begin{rem}\label{rem:general_bounds}

The above established connection implies that, for a fixed context $z\in\mathcal{Z}$, any update rule that corresponds to an update rule of a sleeping expert problem can be used to obtain high probability regret bounds. Moreover, since any sleeping expert problem can be reduced to a regular expert problem \cite{freund1997}, any update rule that corresponds to the update rule of the reduced expert problem can be used to obtain high probability regret bounds for repeated games with unknown constraints. These bounds depend on the regret which the expert algorithm obtains. We formally introduce and study general regret bounds in Appendix \ref{app:general_regret_bounds} for the case of finite and infinite context spaces.
\end{rem}


\subsection{Finite number of contexts}\label{sec:finite_context}

When the context set $\mathcal{Z}$ is finite, on a high level the distribution $p^t(z^t)\in\Delta_K$ (line 6 of Algorithm \ref{alg:contextual_constrainedGPHedge}) is computed by maintaining a distribution for each context $z\in\mathcal{Z}$ which is updated whenever $z$ is observed. Thus, playing a repeated contextual game reduces to playing $|\mathcal{Z}|$ repeated static games. For a specific context $z\in\mathcal{Z}$ the update rule is defined as the general update rule of AdaNormalHedge \cite{luo15}. Under this computation rule of $p^t(z^t)$, summarized in Algorithm \ref{alg:strategy_finite_Z}, c.z.AdaNormalGP achieves the following high-probability bounds.

\begin{theorem}\label{thm:finite_Z}
    Fix $\delta\in(0,1)$. Under Assumptions \ref{ass:feasibility_context}-\ref{ass:regularity_context}, if a player plays according to c.z.AdaNormalGP with $p^t(z^t)$ computed according to Algorithm \ref{alg:strategy_finite_Z} and {\small$\beta_m^t=B_m+\sigma_m\sqrt{2(\gamma_m^{t-1}+1+\log(2(M+1)/\delta))}$} for all {\small$m\in\{0\}\cup[M]$}, then with probability at least $1-\delta$:
    {\small 
    \begin{align}
    &R^T=\mathcal{O}\big(\sqrt{|\mathcal{Z}|T(\log(K)+\log(B)+\log(1+\log(K))}+\nonumber\\
&\quad\quad\quad\quad\sqrt{T\log(2/\delta)}+\beta_0^T\sqrt{T\gamma_0^T}\big) \label{eq:thm1_regret}\\
    &\mathcal{V}_{m}^T=\mathcal{O}\big(\beta_m^T\sqrt{T\gamma_m^T}\big),\hspace{0,3cm}\forall m\in[M],
    \end{align}
    }
    where {\small$B=1+\frac{3}{2}\frac{1}{K}\sum_{a_i=1}^K (1+\log(1+C_i^t(a_i)))\leq \frac{5}{2}+\frac{3}{2}\log(1+T)$} \label{eq_thm1_constraint}.
\end{theorem}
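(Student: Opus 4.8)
The plan is to condition on a high-probability event on which every confidence bound is valid, bound the violations directly from the fact that the algorithm only plays actions that are feasible with respect to the lower confidence bounds, and split the regret into a sleeping-expert part handled by AdaNormalHedge and a kernel-estimation part. First I would set up the good event: with the stated $\beta_m^t$, the concentration result of \cite{chowdhury2017kernelized} gives, for each fixed $m\in\{0\}\cup[M]$, that $\text{LCB}_m^t(x)\le f_m(x)\le\text{UCB}_m^t(x)$ for all $x,t$ with probability at least $1-\delta/(2(M+1))$, where $f_0=r$ and $f_m=g_m$; a union bound over the $M+1$ functions makes these hold simultaneously with probability $1-\delta/2$, reserving $\delta/2$ for a martingale argument below. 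For the violations, since the algorithm only plays $a_i^t$ with $\text{LCB}_m^t(a_i^t,z^t)\le 0$, on the good event $[g_m(a_i^t,z^t)]_+\le \text{UCB}_m^t(a_i^t,z^t)-\text{LCB}_m^t(a_i^t,z^t)=2\beta_m^t\sigma_m^{t-1}(a_i^t,z^t)$; summing, using monotonicity of $\beta_m^t$ (since $\gamma_m^{t-1}$ is non-decreasing) and the standard maximum-information-gain bound $\sum_{t=1}^T\sigma_m^{t-1}=\mathcal{O}(\sqrt{T\gamma_m^T})$, yields $\mathcal{V}_m^T=\mathcal{O}(\beta_m^T\sqrt{T\gamma_m^T})$.

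For the regret, write $r^t(a)=r(a,a_{-i}^t,z^t)$, $\pi^{*,t}:=\overline\pi_i^*(z^t)$, and let $\hat r^t(a)=\min\{\text{UCB}_0^t(a,a_{-i}^t,z^t),1\}$ be the clipped optimistic reward, which on the good event satisfies $r^t(a)\le\hat r^t(a)\in[0,1]$ (clipping preserves the upper bound because $r^t\le 1$). I would then decompose
\begin{align*}
R^T={}&\textstyle\sum_t\big(\hat r^t(\pi^{*,t})-\hat r^t(a_i^t)\big)+\sum_t\big(r^t(\pi^{*,t})-\hat r^t(\pi^{*,t})\big)\\
&\textstyle+\sum_t\big(\hat r^t(a_i^t)-r^t(a_i^t)\big).
\end{align*}
The second sum is $\le 0$ on the good event, and the third is bounded termwise by $2\beta_0^t\sigma_0^{t-1}(a_i^t,a_{-i}^t,z^t)$, which sums to $\mathcal{O}(\beta_0^T\sqrt{T\gamma_0^T})$ exactly as for the violations. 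The crux is the first sum, the regret measured on the clipped estimates.

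Here I invoke the sleeping-expert reduction: an action is declared asleep iff it is infeasible with respect to some $\text{LCB}_m^t$, and on the good event the comparator is always awake, since $\text{LCB}_m^t(\pi^{*,t},z^t)\le g_m(\pi^{*,t},z^t)\le 0$ by Assumption \ref{ass:feasibility_context}; hence it is a legitimate comparator for the sleeping-expert regret bound. Splitting the first sum into the AdaNormalHedge part $\sum_t(\hat r^t(\pi^{*,t})-\langle\overline p^t(z^t),\hat r^t\rangle)$ and the sampling part $\sum_t(\langle\overline p^t(z^t),\hat r^t\rangle-\hat r^t(a_i^t))$, the latter is a bounded martingale difference sequence (as $a_i^t\sim\overline p^t(z^t)$), controlled by $\mathcal{O}(\sqrt{T\log(2/\delta)})$ via Azuma--Hoeffding on the remaining $\delta/2$ event, so that all statements hold jointly with probability at least $1-\delta$. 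For the AdaNormalHedge part I apply the per-expert guarantee of \cite{luo15} on each context subsequence: against the awake comparator its regret is $\mathcal{O}(\sqrt{T_z(\log K+\log B+\log(1+\log K))})$, where $T_z$ counts occurrences of $z$ and $B$ is the data-dependent potential factor in the statement; summing over the $|\mathcal{Z}|$ contexts with $\sum_z\sqrt{T_z}\le\sqrt{|\mathcal{Z}|\sum_z T_z}=\sqrt{|\mathcal{Z}|T}$ (Cauchy--Schwarz) gives the first regret term, and collecting the four contributions yields the claim.

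The step I expect to be the main obstacle is the AdaNormalHedge bookkeeping: verifying that the clipped optimistic rewards keep the instantaneous regrets within the range required by \cite{luo15}, correctly tracking the data-dependent quantity $C_i^t(a_i)$ that enters $B$ through the AdaNormalHedge potential, and confirming that the nested $\log(1+\log K)$ factor emerges from the sleeping-expert analysis and survives the per-context decomposition and Cauchy--Schwarz step rather than being mishandled.
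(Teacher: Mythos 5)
Your proposal is correct and follows essentially the same route as the paper: a union bound over the $M+1$ confidence events, the violation bound via $[g_m(a_i^t,z^t)]_+\le \text{UCB}_m^t-\text{LCB}_m^t=2\beta_m^t\sigma_m^{t-1}$ summed with the information-gain lemma, and a regret decomposition into a clipped-UCB sleeping-expert term (handled per context by the AdaNormalHedge guarantee with the awake comparator $\overline\pi_i^*$, then Cauchy--Schwarz over contexts), an Azuma--Hoeffding sampling term, and a $\mathcal{O}(\beta_0^T\sqrt{T\gamma_0^T})$ estimation term. The only cosmetic difference is that the paper routes the estimation error through $\text{LCB}_0^t(a_i^t)$ rather than through $\hat r^t(a_i^t)-r^t(a_i^t)$ directly, which is the same inequality.
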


We provide a detailed proof in Appendix \ref{app:finite_Z}. 

\begin{cor}
Under the same assumptions as in Theorem \ref{thm:finite_Z}, if a player plays according to c.z.AdaNormalGP with $|\mathcal{Z}|=1$, then with probability $1-\delta$ the following regret\footnote{The static games setup is recovered by assuming $z^t = z^0$ for all $t$ and the regret is defined as {\tiny$R^T=\max_{\substack{a_i\in\mathcal{A}_i}} \sum_{t=1}^T r_i(a_i,a_{-i}^t)-\sum_{t=1}^T r_i(a_i^t,a_{-i}^t)
    $} subject to {\tiny$g_{i,m}(a_i,z^t)\leq 0,\quad \forall m\in[M],\forall t\geq 1$}.} bound for repeated static games with unknown constraints holds:
    {\small
    \begin{align*}
    R^T=\mathcal{O}\big(&\sqrt{T(\log(K)+\log(B)+\log(1+\log(K))}+\\
    &\sqrt{T\log(2/\delta)}+\beta_0^T\sqrt{T\gamma_0^T}\big).
    \end{align*}
    }
The high probability bound on the cumulative constraint violations is the same as in Theorem \ref{thm:finite_Z}.
\end{cor}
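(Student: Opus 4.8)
The plan is to derive this statement as a direct specialization of Theorem \ref{thm:finite_Z} to the single-context case, so the bulk of the work lies in verifying that the contextual and static notions of regret coincide rather than in re-running any high-probability analysis.

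First I would establish the reduction of the comparator class. When $|\mathcal{Z}|=1$, write $\mathcal{Z}=\{z^0\}$ so that $z^t=z^0$ for every round $t$. A policy $\pi_i:\mathcal{Z}\rightarrow\mathcal{A}_i$ is then completely determined by the single value $\pi_i(z^0)\in\mathcal{A}_i$, giving a bijection between $\Pi_i$ and $\mathcal{A}_i$. Consequently the maximization $\max_{\pi_i\in\Pi_i}$ in the constrained contextual regret \eqref{def:contextual_regret} becomes $\max_{a_i\in\mathcal{A}_i}$, and the feasibility constraint $g_{i,m}(\pi_i(z^t),z^t)\leq 0$ for all $t$ collapses to the single $t$-independent condition $g_{i,m}(a_i,z^0)\leq 0$, since the context is fixed. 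This matches exactly the static-game regret stated in the footnote, and Assumption \ref{ass:feasibility_context} guarantees the feasible comparator set is non-empty.

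Next I would check that the algorithm itself specializes correctly. With $|\mathcal{Z}|=1$, the per-context bookkeeping of Algorithm \ref{alg:strategy_finite_Z} maintains a single distribution that is updated at every round, which is precisely the static AdaNormalHedge-style update over the optimistically feasible action set. Hence the hypotheses and the confidence-bound construction used in Theorem \ref{thm:finite_Z}, including the choice of $\beta_m^t$, apply without modification, and the $1-\delta$ event on which the bounds hold is the same one.

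Finally I would invoke Theorem \ref{thm:finite_Z} and substitute $|\mathcal{Z}|=1$. The factor $\sqrt{|\mathcal{Z}|}$ in the leading regret term becomes $1$, yielding the stated regret bound verbatim, while the cumulative violation bound $\mathcal{V}_m^T=\mathcal{O}(\beta_m^T\sqrt{T\gamma_m^T})$ carries no dependence on $|\mathcal{Z}|$ and is therefore unchanged. The only point requiring genuine care — and the closest thing to an obstacle in an otherwise immediate argument — is the first step: confirming that restricting the benchmark from context-dependent policies to fixed actions does not alter its value. Since the bijection $\Pi_i\cong\mathcal{A}_i$ makes the two maximizations identical term by term, this is straightforward, and no new analysis is needed beyond what Theorem \ref{thm:finite_Z} already supplies.
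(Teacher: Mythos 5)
Your proposal is correct and matches the paper's (implicit) argument: the corollary is obtained exactly by specializing Theorem \ref{thm:finite_Z} to $|\mathcal{Z}|=1$, noting that policies collapse to fixed actions and that the violation bound has no dependence on $|\mathcal{Z}|$. Your explicit verification that the comparator class $\Pi_i$ reduces bijectively to $\mathcal{A}_i$ is a sound and welcome bit of added care, but the route is the same.
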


In repeated contextual games, c.GPMW \cite{sessa21} algorithm  achieves {\small$\mathcal{O}(\sqrt{|\mathcal{Z}|T\log K}+\sqrt{T\log(2/\delta)}+\beta_0^T\sqrt{T\gamma_0^T})$} which is the same as our regret bound up to logarithmic terms. Following the same arguments as in the proof of Theorem \ref{thm:finite_Z}, AdaNormalHedge can be extended to the contextual setting which implies a {\small$\mathcal{O}(\sqrt{|\mathcal{Z}|T(\log( K)+\log(B)+\log(1+\log K))}+\sqrt{T\log(2/\delta)})$} bound on the contextual sleeping expert regret. Note that, unlike AdaNormalHedge, c.z.AdaNormalGP does not consider full-information, but obtains a regret bound that is equal up to the term {\small$\mathcal{O}(\beta_0^T\sqrt{T\gamma_0^T})$}. Furthermore, both c.GPMW and AdaNormalHedge do not provide bounds on the cumulative constraint violations. In constrained black-box optimization, CONFIG \cite{jones22} algorithm obtains a bound on the cumulative constraint violations which is the same as ours.

\begin{algorithm*}[t]
\caption{Update rule for finite context space $\mathcal{Z}$}\label{alg:strategy_finite_Z}
\hspace*{\algorithmicindent} \textbf{Set} $R^1_{a_i}(z)=0$, $C^1_{a_i}(z)=0$ and let $p^1(z)$ be the uniform distribution $\forall z\in\mathcal{Z}$.
\begin{algorithmic}[1]
    \For {\texttt{$t=2,\ldots,T$}}
        \State \texttt{Compute reward estimate $\hat{r}^t$ with {\small$\hat{r}^t(a_i)=\min\{1,\text{UCB}_0^t(a_i,a_{-i}^t,z^t)\},\quad \forall a_i\in\mathcal{A}_i$}.}
        \State \texttt{Set for $z^t$ and every $a_i\in\mathcal{A}_i$ with {\small$\mathbb{E}[\hat{r}^t(a_i^t)]=\overline{p}^t(z^t)^\top \hat{r}^t$}: 
        {\small\begin{align}
        &R^t_{a_i}(z^t)=\begin{cases}
        R^{t-1}_{a_i}(z^t)+(\hat{r}^t(a_i)-\mathbb{E}[\hat{r}^t(a_i^t)]), & \text{if $\text{LCB}_m^t(a_i,z^t)\leq 0$, $\forall m\in[M],$}\\
        R_{a_i}^{t-1}(z^t), & \text{otherwise}.
        \end{cases} \label{alg:strategy_finite_Z_R}\\
        & C_{a_i}^t(z^t)=\begin{cases}
        C_{a_i}^{t-1}(z^t)+|\hat{r}^t(a_i)-\mathbb{E}[\hat{r}^t(a_i^t)]|, & \text{if $\text{LCB}_m^t(a_i,z^t)\leq 0$, $\forall m\in[M],$}\\
        C_{a_i}^{t-1}(z^t), & \text{otherwise}.
        \end{cases} \label{alg:strategy_finite_Z_C}
        \end{align}}
        }
        \State \texttt{Update   {\small $\label{alg:strategy_finite_Z_update}
            p_{a_i}^{t+1}(z^t)=\frac{w(R_{a_i}^t(z^t),C_{a_i}^t(z^t))}{\sum_{k\in\mathcal{A}_i} w(R_k^t(z^t),C_k^t(z^t))}$,} where {\small$w(R,C)=\frac{1}{2}(\exp([R+1]_+^2/3(C+1))-\exp([R-1]_+^2/3(C+1)))$}.}
    \EndFor
\end{algorithmic}
\end{algorithm*}


\subsection{Game equilibria}

In the previous section, we focused on the perspective of a single player and did not make any assumptions on how the other players select their actions. It is known that if all players follow a no-regret dynamics a coarse correlated equilibrium (CCE) can be approached \cite{hart2000simple}. A CCE of a game is a probability distribution $\rho\in\Delta^{|\mathcal{A}|}$ such that
{\small
\begin{align*}
    \mathbb{E}_{a\sim\rho} \big[r_i(a)\big]\geq \mathbb{E}_{a\sim\rho} \big[r_i(a_i',a_{-i})\big], \quad\forall i\in\mathcal{N}, a_i'\in\mathcal{A}_i.
\end{align*}
}

In other words, if an action profile is sampled from a CCE, then in expectation each player is better off following the sampled action than playing any other action. To deal with repeated contextual games \cite{sessa21} introduces the notion of contextual CCEs (z.CCE) and extends the results of \cite{hart2000simple} to this setting. In the following, we introduce the notion of constrained contextual CCEs (c.z.CCE) to address the fact that only a subset of a player's action set is feasible. 

\begin{definition}\label{def:czCCE}
    Let $z^1,\ldots,z^T$ be the revealed sequence of contexts. A constrained contextual coarse correlated equlibrium is a joint policy $\rho:\mathcal{Z}\rightarrow\Delta^{|\mathcal{A}|}$ such that
    {\small
    \begin{align*}
        &\frac{1}{T}\sum_{t=1}^T\mathbb{E}_{a\sim\rho_{z^t}}\big[[g_{i,m}(a_i, z^t)]_+\big]\leq0\\
        & \frac{1}{T} \sum_{t=1}^T\mathbb{E}_{a\sim\rho_{z^t}} \big[r_i(a,z^t)\big]\geq \frac{1}{T} \sum_{t=1}^T\mathbb{E}_{a\sim\rho_{z^t}} \big[r_i(\pi_i(z^t),a_{-i},z^t)\big],
    \end{align*}
    }
    
    for all $i\in\mathcal{N}$, $m\in[M]$ and for all feasible $\pi_i\in\Pi_i$, i.e., $g_{i,m}(\pi_i(z^t),z^t)\leq 0$ for all $t=1,\ldots,T$.
    An $\epsilon$-c.z.CCE is a joint policy $\rho:\mathcal{Z}\rightarrow\Delta^{|\mathcal{A}|}$ such that the above two inequalities are satisfied up to $\epsilon>0$ accuracy.
\end{definition}
The above definition requires that for a fixed context an action profile sampled from a c.z.CCE is feasible in expectation. To illustrate, suppose $\rho$ is a c.z.CCE and suppose there is a trusted device that, for any context $z^t$, samples a joint action from $\rho(z^t)$. When complying with such a device in expectation each player's action is feasible and each player is better off than when using any other feasible policy $\pi_i$. 

Next we show that a c.z.CCE can be approached when all players minimize their constrained contextual regret. We first define the empirical joint policy $\rho^T$ as round $T$ as 
{\small
\begin{align}\label{eq:emirical_joint_policy}
    \rho_z^T(a) = \begin{cases}
        \frac{1}{T_z}|\{t\in[T]: z^t=z, a^t=a\}|, &\text{if $z\in\mathcal{Z}^T$}\\
        \frac{1}{|\mathcal{A}_i|}, &\text{otherwise},
    \end{cases}
\end{align}
}

where $T_z=|\{t\in[T] : z^t=z\}|$ is the number of rounds that context $z$ was revealed and $\mathcal{Z}^T\subseteq\mathcal{Z}$ is the set of revealed contexts. Note that for unobserved contexts $\rho(z)$ can be any arbitrary  distribution.

\begin{prop}\label{prop:c.z.CEE}
    After $T$ rounds, the empirical joint policy $\rho^T$ is an $\epsilon$-c.z.CCE of the played constrained contextual game with $\epsilon\leq \max\{\epsilon_1,\epsilon_2\}$, where $\epsilon_1=\max_{i\in\mathcal{N}} R_i^T/T$ and $\epsilon_2=\max_{i\in\mathcal{N}, m\in[M]} \mathcal{V}_{i,m}^T/T$.
\end{prop}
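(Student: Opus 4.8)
The plan is to verify directly the two defining inequalities of an $\epsilon$-c.z.CCE in Definition \ref{def:czCCE} by showing that every expectation taken under the empirical joint policy $\rho^T$ collapses to a time-average of realized quantities, which can then be identified with the normalized regret $R_i^T/T$ and the normalized cumulative violation $\mathcal{V}_{i,m}^T/T$. First I would record the basic reduction. For each observed context $z\in\mathcal{Z}^T$, the distribution $\rho^T_z$ is by \eqref{eq:emirical_joint_policy} the empirical distribution of the realized joint action profiles over the rounds where $z$ occurred, so for any function $f$ of the joint action, $\mathbb{E}_{a\sim\rho^T_z}[f(a)]=\frac{1}{T_z}\sum_{t:z^t=z}f(a^t)$. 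Grouping the rounds $t=1,\dots,T$ by their context and using $\sum_{t:z^t=z}1=T_z$ then gives, for any per-round quantity $h_t(a)$ whose dependence on $t$ is only through $z^t$,
\begin{align*}
\frac{1}{T}\sum_{t=1}^T \mathbb{E}_{a\sim\rho^T_{z^t}}[h_t(a)] = \frac{1}{T}\sum_{t=1}^T h_t(a^t).
\end{align*}
All observed contexts lie in $\mathcal{Z}^T$ by construction, so the arbitrary choice of $\rho^T_z$ on unobserved contexts never enters these sums.

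Applying this reduction to the constraint term with $h_t(a)=[g_{i,m}(a_i,z^t)]_+$ yields $\frac{1}{T}\sum_{t=1}^T\mathbb{E}_{a\sim\rho^T_{z^t}}\big[[g_{i,m}(a_i,z^t)]_+\big]=\frac{1}{T}\sum_{t=1}^T[g_{i,m}(a_i^t,z^t)]_+=\mathcal{V}_{i,m}^T/T\le\epsilon_2$, which establishes the first inequality up to accuracy $\epsilon_2$. For the reward side, the same reduction applied to $h_t(a)=r_i(a,z^t)$ gives $\frac{1}{T}\sum_{t=1}^T\mathbb{E}_{a\sim\rho^T_{z^t}}[r_i(a,z^t)]=\frac{1}{T}\sum_{t=1}^T r_i(a_i^t,a_{-i}^t,z^t)$, while applying it to the deviation term $h_t(a)=r_i(\pi_i(z^t),a_{-i},z^t)$ — which depends on $a$ only through $a_{-i}$ — gives $\frac{1}{T}\sum_{t=1}^T r_i(\pi_i(z^t),a_{-i}^t,z^t)$. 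Subtracting, the gap between the two sides equals $\frac{1}{T}\big(\sum_{t=1}^T r_i(\pi_i(z^t),a_{-i}^t,z^t)-\sum_{t=1}^T r_i(a_i^t,a_{-i}^t,z^t)\big)$; since $\pi_i$ ranges only over feasible policies with $g_{i,m}(\pi_i(z^t),z^t)\le 0$ for all $m,t$, this gap is upper bounded by the constrained maximization defining $R_i^T$ in \eqref{def:contextual_regret}, hence by $R_i^T/T\le\epsilon_1$. Taking $\epsilon=\max\{\epsilon_1,\epsilon_2\}$ then satisfies both inequalities simultaneously, which is the claim.

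The calculation is otherwise routine; the one point requiring care — and which I regard as the crux — is the treatment of the deviation term $\mathbb{E}_{a\sim\rho^T_{z^t}}[r_i(\pi_i(z^t),a_{-i},z^t)]$. One must check that averaging over $a_{-i}$ drawn from the marginal of the empirical joint policy at context $z^t$ reproduces \emph{exactly} the comparator $\frac{1}{T}\sum_t r_i(\pi_i(z^t),a_{-i}^t,z^t)$ appearing in $R_i^T$, and that the restriction to feasible $\pi_i$ in Definition \ref{def:czCCE} coincides with the feasibility constraint under which $R_i^T$ is defined; only once these are aligned does the per-context grouping make the two quantities match term by term. A minor bookkeeping point is that $\rho^T$ is a distribution over joint actions, so the reductions above implicitly use that its per-context marginal on $a_{-i}$ is itself the empirical frequency of the other players' realized profiles at that context.
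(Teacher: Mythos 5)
Your proposal is correct and follows essentially the same route as the paper's proof in Appendix~\ref{app:game_equilibria}: both arguments collapse the expectations under $\rho^T$ into time-averages of realized quantities by grouping rounds by context, then identify the resulting gaps with $R_i^T/T$ and $\mathcal{V}_{i,m}^T/T$. Your statement of the reduction as a single identity for any $h_t(a)$ depending on $t$ only through $z^t$ is a slightly cleaner packaging of the paper's equations \eqref{eq:proof_prop_2}, \eqref{eq:proof_prop_3} and \eqref{eq:proof_prop_6}, but the substance is identical.
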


We provide a detailed proof in Appendix \ref{app:game_equilibria}. Proposition \ref{prop:c.z.CEE} implies that if all players follow no-regret, no-violation dynamics then the empirical join policy $\rho^T$ converges to a $c.z.CCE$ of the constrained contextual game, as $T\rightarrow\infty$.

\section{Experiments}\label{sec:experiments}
Here, we focus on the application of our approach to a  realistic multi-building temperature control design problem. In Appendix \ref{app:random_game} we benchmark our algorithm against existing approaches for randomly generated contextual $N$-player game. 

\subsection{Temperature controller design}\label{sec:experiments_robots}
Optimizing a building's energy consumption  is of prime importance, both for ecological and economical reasons, as nearly $40\%$ of the total energy is consumed by the building sector \cite{laustsen2008energy} and the price of electricity has been surging over the last few years. Several recent works have looked at this problem from a single agent perspective, using Bayesian optimization \cite{fiducioso2019safe}, control or reinforcement learning \cite{dinatale2022}. These approaches do not account for the dependence of the electricity price on the aggregate consumption of energy of multiple buildings. Our work particularly accounts for this dependence, generalizing the single player approaches to a multi-player game setting. 

We tune temperature controllers for multiple buildings using c.AdaNormalGP\footnote{We do not consider contexts in this application and thus refer to Algorithm \ref{alg:contextual_constrainedGPHedge} as c.AdaNormalGP.} with the goal of reducing the buildings' energy cost while satisfying operational constraints, such as pleasant room temperature. 

We abstract this problem as follows: We consider three buildings {\small$i\in\{1,2,3\}$} with a single thermal zone which we simulate using \textbf{Energym} \cite{scharnhorst2021energym} and \textbf{Modelica} \cite{fritzson1998modelica}. Each building is equipped with a heatpump which, similarly to \cite{jones22}, is regulated by a proportional controller \cite{ang2005pid}. The decision variable of each building corresponds to the parameters {\small$a_i=(K_i, \text{SP}_i, \text{ST}_i)$} of the building's proportional controller, where $K_i$ is the controller gain, $\text{SP}_i$ is the set point temperature during the day, and $\text{SP}_i$ the switching time from day- to nighttime set point temperature $\text{ST}_i$. Fixing the selected control parameters $a_i$, we simulate the energy consumption and temperature of the building based on the Energym and Modelica models for a fixed period of time. After this period, building observes its electricity cost $J_i(a_i,a_{-i})$ which is an aggregate function of the total energy consumption, and observes the maximum temperature deviation $g_i(a_i)$ from the setpoint temperature. Based on this feedback each building aims to tune its control parameters to reduce its energy cost while at the same time ensuring that the temperature stays within some occupant-specific comfort range. 

In our experiments, we discretize space of control parameters and let each building select its control parameters $a_i$ according to c.AdaNormalGP. We simulate the energy consumption and temperature for a period of $2$ days. Similarly to \cite{hall2022receding}, we model the cost function as an affine function of the aggregate consumption:
{\small
\begin{align*}
    J_i(a_i,a_{-i})=\sum_{h=1}^{48} \bigg(c_1\sum_{i=1}^3 l_i^h(a_i) + c_2\bigg)l_i^h(a_i),
\end{align*}
}

where $l_i^h(a_i)$ denotes the energy consumption of building $i$ at hour $h$. The positive constants $c_1$ and $c_2$ represent different price rates which depend on whether the energy is consumed during peak or off-peak hours. We further require that the maximum temperature deviation from the setpoint temperature during the night remains below an occupant-specific threshold: 
{\small
\begin{align*}
    \max_{h\in 1,\ldots,48} g_i^h(a_i) - g_{i,\text{thr}}\leq 0.
\end{align*}
}


\begin{figure}[htbp]
    \centering
    \includegraphics[scale = 0.5]{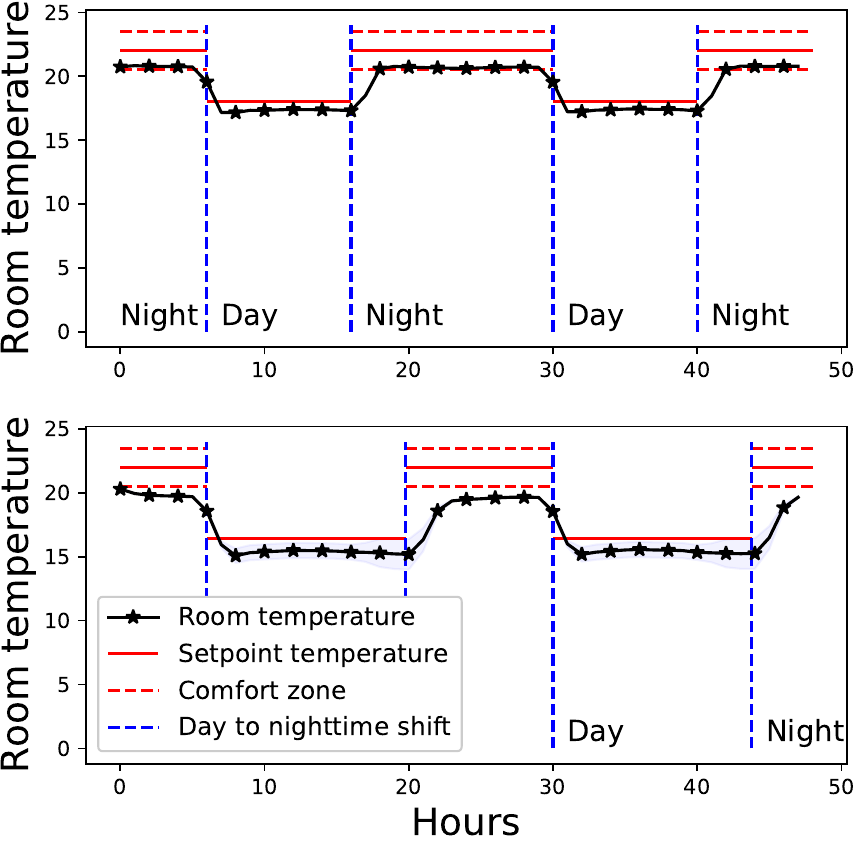}
    \caption{Mean temperature over $48$ hours, where the control inputs are sampled from the weights learned by c.AdaNormalGP (\textit{top}) and GPMW (\textit{bottom}).}
    \label{fig:temp_deviation_of_player_1}
\end{figure}

\begin{figure}[htbp]
    \centering
    \includegraphics[scale = 0.52]{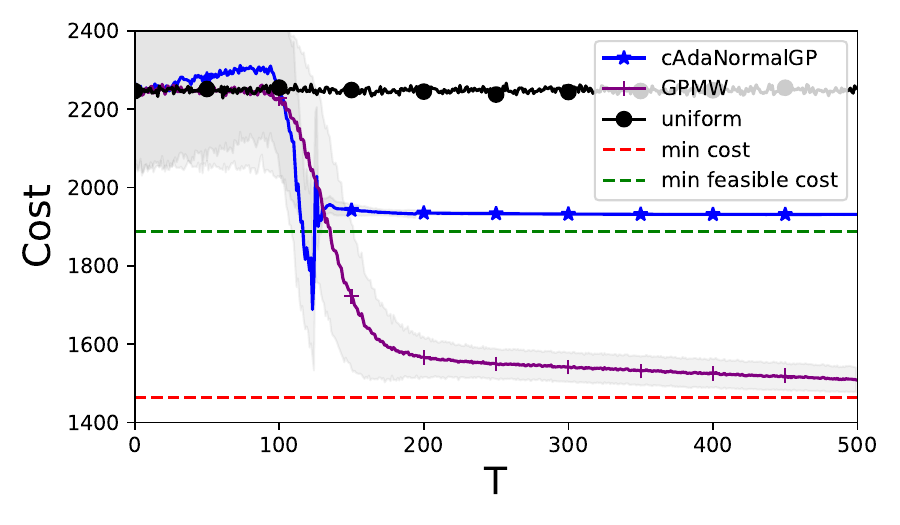}
    \caption{Mean energy cost achieved by c.AdaNormalGP, GPMW, and uniformly at random sampled control inputs for each round $t=1,\ldots T$. The minimum feasible- and the minimum cost are found exhaustively over the entire action space.}
    \label{fig:cost_funcs_of_player_1}
\end{figure}

We compare the performance of c.AdaNormalGP with GPMW \cite{sessa2019no}, which neglects constraints on the maximum temperature deviation from the occupants comfort range. To run c.AdaNormalGP and GPMW, we use the polynomial kernel for the cost function and the squared exponential kernel for the constraint function and run both algorithms for $T=500$ rounds. 
For example for building $1$, Figure \ref{fig:temp_deviation_of_player_1} shows the mean room temperature over $48$ hours resulting from $1000$ control inputs $a_1$ that are sampled according to c.AdaNormalGP (\textit{top}) and GPMW (\textit{bottom}). Following c.AdaNormalGP the mean temperature stays within the occupant's comfort range during the nighttime whereas under GPMW the deviation of the mean temperature exceeds the threshold and thus is below the occupant's comfort range. Furthermore, following c.AdaNormalGP and GPMW each building learns to select control parameters that, given the other buildings energy demand, reduce the building's energy cost compared to the case where the control parameters are selected uniformly at random. This is shown in Figure \ref{fig:cost_funcs_of_player_1} for building $1$. Although the energy cost is lower when following GPMW than when following c.AdaNormalGP such a cost is not attainable when also respecting the constraints on the occupant's comfort temperature range. In conclusion, following c.AdaNormalGP each building not only learns to select feasible control parameters but also attains nearly the minimal cost achievable under the temperature constraints.

\section{Conclusion and further discussion}

We considered the problem of repeated contextual games under unknown constraints. Establishing a connection  between this problem and the sleeping expert problem, we proposed the constrained contextual c.z.AdaNormalGP algorithm, a no-regret no-violation algorithm for playing such games with finite and infinite context spaces. Our algorithm attains high probability regret and cumulative constraint violation bounds, where the former is comparable to bounds obtained in the unconstrained setting. We further showed that if all players follow no-regret, no-violation dynamics then the empirical joint policy converges to a $c.z.CCE$ of the constrained contextual game as the game is repeated infinitely often. We demonstrated the effectiveness of c.z.AdaNormalGP and a multi-building temperature control design problem. Future research directions include learning in repeated games with coupling constraints.

\bibliography{bibliography}

\newpage
\appendix
\onecolumn

\section{Appendix}

\subsection{Supplementary Material for Section \ref{sec:c.z.AdaNormalGP}}

For proving Theorems \ref{thm:finite_Z}, \ref{thm:infinite_Z}, \ref{thm:general_algo_finite_Z} and \ref{thm:general_algo_infinite_Z} we make use of the following lemma and theorem.

\begin{lemma}\cite[Theorem~2]{chowdhury2017kernelized}\label{lem:confidence}
Let $\mathcal{H}_k$ be the RKHS of real-valued functions on $\mathcal{X}\subset\mathbb{R}^d$ with underlying kernel function $k$. Consider an unknown function $f:\mathcal{X}\rightarrow\mathbb{R}$ in $\mathcal{H}_k$ such that $\|f\|_k\leq D$, and the sampling model $y^t=f(x^t)+\epsilon^t$, where $\epsilon^t$ is $\sigma$-sub-Gaussian (with independence between times). By setting
\begin{align*}
    \beta^t=D+\sigma\sqrt{2(\gamma^{t-1}+1+\log(1/\delta))}
\end{align*}
the following holds with probability at least $1-\delta$:
\begin{align*}
    |\mu^{t-1}(x)-f(x)|\leq\beta^t\sigma^{t-1}(x),\hspace{0.3cm}\forall x\in\mathcal{X},\ \forall t\geq 1,
\end{align*}
where $\mu^{t-1}(\cdot)$ and $\sigma^{t-1}(\cdot)$ are given in \eqref{eq:mu}-\eqref{eq:var}. Here, $\gamma^{t-1}$ denotes the \textit{maximum information gain}, a kernel-dependent quantity defined in Section \ref{sec:c.z.AdaNormalGP}.
\end{lemma}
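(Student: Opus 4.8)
The plan is to prove the confidence bound from scratch via the feature-space view of Gaussian process regression, reducing it to a self-normalized martingale concentration inequality in the RKHS. Let $\phi(x)=k(x,\cdot)\in\mathcal{H}_k$ be the canonical feature map, so that $f(x)=\langle f,\phi(x)\rangle_k$ and $k(x,x')=\langle\phi(x),\phi(x')\rangle_k$ by the reproducing property. Under this identification the posterior mean $\mu^{t-1}$ of \eqref{eq:mu} coincides with the kernel ridge regression estimate with a regularization parameter $\lambda$ (a constant of order one, e.g.\ $1+2/T$, whose deviation from $1$ produces the additive ``$+1$'' inside the root of $\beta^t$), and $\sigma^{t-1}(\cdot)$ of \eqref{eq:var} is the corresponding predictive standard deviation. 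Writing $\mathbf{y}^{t-1}=\mathbf{f}^{t-1}+\boldsymbol{\epsilon}^{t-1}$ with $\mathbf{f}^{t-1}=[f(x^\tau)]_{\tau=1}^{t-1}$ and using $\mathbf{f}^{t-1}=\Phi^{t-1}f$ for the sampling operator $\Phi^{t-1}$, I would split the prediction error at an arbitrary $x$ into a deterministic bias term driven by $\mathbf{f}^{t-1}$ (shrinkage toward the prior) and a stochastic noise term driven by $\boldsymbol{\epsilon}^{t-1}$.

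First I would bound the bias. Using the ridge-regression identities $f-\hat f_{\mathrm{nl}}=\lambda V^{-1}f$ and the variance identity $(\sigma^{t-1})^2(x)=\lambda\,\|\phi(x)\|_{V^{-1}}^2$, where $V=\Phi^{t-1\,*}\Phi^{t-1}+\lambda\mathbf{I}$, Cauchy--Schwarz in $\mathcal{H}_k$ together with $V^{-1}\preceq\lambda^{-1}\mathbf{I}$ gives $|f(x)-\hat f_{\mathrm{nl}}(x)|\le\sqrt{\lambda}\,\|f\|_k\,\|\phi(x)\|_{V^{-1}}=\|f\|_k\,\sigma^{t-1}(x)\le D\,\sigma^{t-1}(x)$, where $\|f\|_k\le D$ is used. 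This step is elementary once the feature-space identities are set up and contributes the leading $D$ in $\beta^t$; note the $\lambda$-dependence cancels, so the bias is governed by the regularizer, not by the noise level.

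The crux is the noise term, $\langle\Phi^{t-1\,*}\boldsymbol{\epsilon}^{t-1},V^{-1}\phi(x)\rangle$. A Cauchy--Schwarz step in the $V^{-1}$ metric bounds it by $\|\Phi^{t-1\,*}\boldsymbol{\epsilon}^{t-1}\|_{V^{-1}}\cdot\|\phi(x)\|_{V^{-1}}=\|\Phi^{t-1\,*}\boldsymbol{\epsilon}^{t-1}\|_{V^{-1}}\cdot\sigma^{t-1}(x)/\sqrt{\lambda}$, and the push-through identity rewrites the self-normalized factor as $\|\Phi^{t-1\,*}\boldsymbol{\epsilon}^{t-1}\|_{V^{-1}}^2=\boldsymbol{\epsilon}^{t-1\,\top}\mathbf{K}^{t-1}(\mathbf{K}^{t-1}+\lambda\mathbf{I})^{-1}\boldsymbol{\epsilon}^{t-1}$. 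I would then establish, by the method of mixtures (pseudo-maximization against a Gaussian mixing measure over directions in $\mathcal{H}_k$, producing a nonnegative supermartingale and a maximal inequality), that with probability at least $1-\delta$, simultaneously for all $t\ge1$,
\begin{align*}
\boldsymbol{\epsilon}^{t-1\,\top}\mathbf{K}^{t-1}\big(\mathbf{K}^{t-1}+\lambda\mathbf{I}\big)^{-1}\boldsymbol{\epsilon}^{t-1}\le 2\sigma^{2}\log\frac{\sqrt{\det(\mathbf{I}+\lambda^{-1}\mathbf{K}^{t-1})}}{\delta}.
\end{align*}
The final ingredient is the identity that, under Gaussian noise, the mutual information at the sampled points equals $\tfrac12\log\det(\mathbf{I}+\lambda^{-1}\mathbf{K}^{t-1})$, so it is at most the maximum information gain $\gamma^{t-1}$ by definition. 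Substituting this bound, the noise term is at most $\sigma\sqrt{2(\gamma^{t-1}+1+\log(1/\delta))}\,\sigma^{t-1}(x)$, where the true sub-Gaussian level $\sigma$ survives precisely because the concentration scales with $\sigma^{2}$ while $\|\phi(x)\|_{V^{-1}}=\sigma^{t-1}(x)/\sqrt{\lambda}$ involves only the regularizer. Combining with the bias term yields $|\mu^{t-1}(x)-f(x)|\le\big(D+\sigma\sqrt{2(\gamma^{t-1}+1+\log(1/\delta))}\big)\sigma^{t-1}(x)=\beta^{t}\sigma^{t-1}(x)$.

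I expect the main obstacle to be the self-normalized tail bound in the infinite-dimensional RKHS. The finite-dimensional case uses a Gaussian method-of-mixtures argument, and extending it to $\mathcal{H}_k$ requires either a finite-rank feature truncation with a limiting/continuity argument or a direct Hilbert-space construction of the mixture supermartingale together with its normalization; care is also needed to make the inequality hold simultaneously for all $t\ge1$ (via a stopped supermartingale and a maximal inequality) rather than at a fixed $t$. By contrast, uniformity over all $x\in\mathcal{X}$ is free, since the pointwise Cauchy--Schwarz step holds deterministically once the single martingale event is realized.
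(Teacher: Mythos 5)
The paper does not prove this lemma: it is imported verbatim from \cite[Theorem~2]{chowdhury2017kernelized}, so there is no in-paper argument to compare against. Your outline faithfully reconstructs the proof in that reference: the bias/noise decomposition of the kernel ridge estimate, the identities $f-\hat f=\lambda V^{-1}f$ and $(\sigma^{t-1})^2(x)=\lambda\|\phi(x)\|^2_{V^{-1}}$, the self-normalized bound obtained by a method-of-mixtures supermartingale with Ville's maximal inequality, the $\log\det$-to-$\gamma^{t-1}$ conversion, and the attribution of the ``$+1$'' to the regularizer $\lambda=1+2/T$ are all exactly the ingredients of Chowdhury and Gopalan's argument, and you correctly identify the one genuinely hard step (the Gaussian mixture over directions does not exist on an infinite-dimensional $\mathcal{H}_k$, which is why the reference resorts to a ``double mixture'' over both directions and future evaluation points --- the direct Hilbert-space construction you anticipate). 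One cosmetic difference: their Theorem~1 controls $\boldsymbol{\epsilon}^{\top}\bigl((\mathbf{K}+\eta\mathbf{I})^{-1}+\mathbf{I}\bigr)^{-1}\boldsymbol{\epsilon}$ with $\lambda=1+\eta$, which dominates your quadratic form $\boldsymbol{\epsilon}^{\top}\mathbf{K}(\mathbf{K}+\lambda\mathbf{I})^{-1}\boldsymbol{\epsilon}$, so your stated inequality follows from theirs and the rest of your derivation goes through unchanged.
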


\begin{theorem}\cite[Theorem~3]{luo15}\label{thm:pseudo_ada}
    For the sleeping expert problem, the regret of AdaNormalHedge\footnote{We consider AdaNormalHedge (Algorithm 1 in \cite{luo15}), where the $p^t_{a_i}$ is predicted via equation (3) in \cite{luo15}.} is bounded as follows:
    \begin{align}
    \begin{split}
        R^T(a_i)=\sum_{t=1}^T \mathbbm{1}_{a_i,t}(\mathbb{E}[l^t]- l^t(a_i))
        &\leq \sqrt{3C_{a_i}^T(\log(K)+\log(B)+\log(1+\log(K)))},\\
        &\leq \sqrt{3T(\log(K)+\log(B)+\log(1+\log(K)))}
        \end{split}
    \end{align}
    where $l^t(a_i)$ is the loss of action (expert) $a_i\in\mathcal{A}_i$, {\small$l^t=[l^t(a_i)]_{a_i\in\mathcal{A}_i}\in[0,1]^K$} is the loss vector of all actions (experts), {\small$C_{a_i}^T=\sum_{t=1}^T |\mathbb{E}[l^t(a_i^t)]-l^t(a_i)|$} and {\small$B=1+\frac{3}{2}\frac{1}{K}\sum_{a_i\in\mathcal{A}_i} (1+\log(1+C_{a_i}^T))\leq \frac{5}{2}+\frac{3}{2}\log(1+T)$}. Here, $\mathbbm{1}_{a_i,t}$ denotes the indicator function defined as:
    \begin{align*}
        \mathbbm{1}_{a_i,t}=\begin{cases}
        1, & \text{if action (expert) $a_i$ is available (awake) at round $t$},\\
        0, & \text{otherwise}.
        \end{cases}
    \end{align*}
\end{theorem}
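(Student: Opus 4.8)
The plan is to follow the potential-function (Blackwell-style) analysis underlying AdaNormalHedge, adapted to the sleeping-expert bookkeeping enforced by the indicators $\mathbbm{1}_{a_i,t}$. The central object is the NormalHedge potential $\Phi(R,C)=\exp\!\big([R]_+^2/(3(C+1))\big)$, of which the algorithm's weight $w(R,C)=\tfrac12\big(\exp([R+1]_+^2/3(C+1))-\exp([R-1]_+^2/3(C+1))\big)$ is precisely the centered finite difference in the regret coordinate $R$. First I would record that, by the update rules \eqref{alg:strategy_finite_Z_R}--\eqref{alg:strategy_finite_Z_C}, an expert $a_i$ that is asleep at round $t$ (i.e. $\mathbbm{1}_{a_i,t}=0$) leaves both $R_{a_i}^t$ and $C_{a_i}^t$ unchanged, so its potential is frozen; hence only awake rounds contribute, and the quantities $R_{a_i}^T$ and $C_{a_i}^T$ maintained by the algorithm coincide with the regret $\sum_t\mathbbm{1}_{a_i,t}(\mathbb{E}[l^t]-l^t(a_i))$ and magnitude in the statement (the reward/loss sign convention $l=1-\hat{r}^t$ makes them agree exactly).

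The crux is a one-step inequality asserting that for every $R$, every $C\ge 0$, and every instantaneous regret $r$ with $|r|\le 1$,
\begin{align*}
\Phi(R+r,\,C+|r|)\;\le\;\Phi(R,C)\,\big(1+\delta(C)\big)\;+\;r\,w(R,C),
\end{align*}
with a multiplicative slack $\delta(C)=O\!\big(1/(C+1)\big)$. This is the technical heart: one writes the exponent as a quadratic in $r$, Taylor-expands $\Phi$ to second order, and checks that the scaling $3(C+1)$ makes the quadratic term self-bounding, so that the increment is dominated by the linear term $r\,w$ plus the controlled slack. I expect this step, together with the careful tracking of $\delta(C)$, to be the main obstacle, as it is exactly the mechanism that renders AdaNormalHedge parameter-free.

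Next I would sum the one-step inequality over all experts. Because the sampling distribution satisfies $\overline{p}^t_{a_i}\propto w(R_{a_i}^{t-1},C_{a_i}^{t-1})$ on the awake set and the stored awake regret is $r_{a_i}^t=\hat{r}^t(a_i)-\langle\overline{p}^t,\hat{r}^t\rangle$, the linear terms cancel in aggregate:
\begin{align*}
\sum_{a_i}\mathbbm{1}_{a_i,t}\,w(R_{a_i}^{t-1},C_{a_i}^{t-1})\,r_{a_i}^t=\Big(\textstyle\sum_{k}w_k^{t-1}\Big)\big(\langle\overline{p}^t,\hat{r}^t\rangle-\langle\overline{p}^t,\hat{r}^t\rangle\big)=0.
\end{align*}
Hence $\sum_{a_i}\Phi(R_{a_i}^t,C_{a_i}^t)\le(1+\delta)\sum_{a_i}\Phi(R_{a_i}^{t-1},C_{a_i}^{t-1})$, and telescoping from the initial value $\sum_{a_i}\Phi(R_{a_i}^1,C_{a_i}^1)=K$ (since $R^1=C^1=0$ gives $\Phi=1$ per expert) yields $\sum_{a_i}\Phi(R_{a_i}^T,C_{a_i}^T)\le K\prod_t(1+\delta)$. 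Taking logarithms converts the product into $\sum_t\delta\approx\sum_t 1/(C^t+1)\approx\log(1+C^T)$; the adaptive normalization of these per-expert slacks is precisely what produces the constant $B=1+\tfrac{3}{2}\tfrac1K\sum_{a_i}(1+\log(1+C_{a_i}^T))$ and the extra $\log(1+\log K)$ term.

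Finally, for any fixed comparator $a_i$ I would drop the remaining nonnegative potential terms to get $\Phi(R_{a_i}^T,C_{a_i}^T)\le K\prod_t(1+\delta)$, i.e. $[R_{a_i}^T]_+^2/\big(3(C_{a_i}^T+1)\big)\le \log K+\log B+\log(1+\log K)$; rearranging and taking the square root gives the first, $C_{a_i}^T$-dependent bound. The second bound is then immediate from $C_{a_i}^T=\sum_t\mathbbm{1}_{a_i,t}\,|r_{a_i}^t|\le T$, since each awake instantaneous regret has magnitude at most one and there are at most $T$ rounds.
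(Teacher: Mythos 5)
First, note that the paper does not prove this statement at all: Theorem~\ref{thm:pseudo_ada} is imported verbatim from \cite[Theorem~3]{luo15} and used as a black box in the proofs of Theorems~\ref{thm:finite_Z} and \ref{thm:infinite_Z}. So the relevant comparison is with Luo and Schapire's original argument, and your sketch reproduces its architecture faithfully: the potential $\Phi(R,C)=\exp([R]_+^2/3(C+1))$ with the weight $w$ as its centered discrete derivative in $R$, a one-step drift inequality, cancellation of the linear terms because the sampling distribution is proportional to $w$ restricted to the awake set, telescoping from $\sum_{a_i}\Phi(0,0)=K$, extraction of the comparator's potential, and finally $C_{a_i}^T\leq T$ (valid since $\hat{r}^t\in[0,1]^K$ forces instantaneous regrets into $[-1,1]$). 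Your treatment of sleeping is also exactly how Theorem~3 of \cite{luo15} extends their Theorem~1: asleep experts have frozen counters $(R_{a_i},C_{a_i})$ and hence frozen potentials, and only awake rounds enter the cancellation.

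The one step where your bookkeeping does not deliver the stated constants is the slack. You posit a \emph{multiplicative} one-step inequality $\Phi(R+r,C+|r|)\leq\Phi(R,C)(1+\delta(C))+r\,w(R,C)$ and then telescope a single product $\prod_t(1+\delta)$ across the whole expert sum. But $\delta$ depends on the per-expert counter $C_{a_i}^{t-1}$, so there is no uniform per-round factor; and even expert-by-expert, exponentiating $\sum_t\delta\approx\log(1+C_{a_i}^T)$ places a term of order $\log(1+C_{a_i}^T)=O(\log T)$ inside the square root, whereas the theorem's $\log B$ is doubly logarithmic in $T$ (note $B\leq\frac{5}{2}+\frac{3}{2}\log(1+T)$, so $\log B=O(\log\log T)$). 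The original lemma is \emph{additive}: $\Phi(R+x,C+|x|)-\Phi(R,C)\leq x\,w(R,C)+\mathrm{slack}$, with each expert's slack accruing only on its awake rounds and summing to at most $\frac{3}{2}\bigl(1+\log(1+C_{a_i}^T)\bigr)$; summing over experts then gives $\sum_{a_i}\Phi(R_{a_i}^T,C_{a_i}^T)\leq K\,B$ directly, with $B$ exactly as defined, and this additive form is what preserves the doubly-logarithmic character of $\log B$. As written, your argument proves the qualitatively similar but strictly weaker bound $\sqrt{3(C_{a_i}^T+1)\bigl(\log K+O(\log(1+C_{a_i}^T))\bigr)}$, and it also leaves the origin of the $\log(1+\log K)$ term unexplained. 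Since the downstream use in this paper only needs a sublinear sleeping-expert bound of this shape, the looseness would not break Theorems~\ref{thm:finite_Z} and \ref{thm:infinite_Z}, but it does fall short of the statement as quoted.
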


\subsubsection{Finite number of contexts: Proof of Theorem \ref{thm:finite_Z}}\label{app:finite_Z}

In this section, we provide a proof for Theorem \ref{thm:finite_Z}. By making use of Lemma \ref{lem:confidence}, we first show that upper and lower confidence bounds hold simultaneously for the reward and all constraint functions. Next, we show that c.z.AdaNormalGP does not declare infeasibility with high probability if the underlying game is feasible. We then prove an upper bound on the cumulative constraint violations making use of the previous two results and techniques from GP optimization \cite{srinivas2009gaussian} to account for not knowing the true constraint functions. To obtain an upper bound on the regret we leverage that for each context $z\in\mathcal{Z}$ a separate distribution is maintained and analyze the regret obtained by each context $z\in\mathcal{Z}$. Fixing a context $z\in\mathcal{Z}$, the proof further follows by decomposing the regret into the sum of two terms. The first term corresponds to the regret that a player incurs with respect to the optimistic reward estimates, i.e., the upper confidence bounds, when the set of feasible actions is a priori unknown. This term can be upper bounded by Theorem \ref{thm:pseudo_ada} thanks to the connection between constrained games and the sleeping expert problem. The second term stems from not knowing the true reward function and can be upper bounded by using techniques from GP optimization.

\begin{lemma}\label{lem:union_confidence}
    With probability $1-\frac{\delta}{2}$ the following holds simultaneously:
    \begin{align}\label{eq:simultaneous_UCB_LCB}
        \begin{split}
            &\text{LCB}_0^t(a,z)\leq r(a,z)\leq \min\{1,\text{UCB}_0^t(a,z)\},\quad \forall a\in\mathcal{A}, \forall z\in\mathcal{Z}, \forall t\geq 1,\\
            & \text{LCB}_m^t(a_i,z)\leq g_{m}(a_i,z)\leq \text{UCB}_m^t(a_i,z),\quad \forall a_i\in\mathcal{A}_i,\forall z\in\mathcal{Z}, \forall t\geq 1, \forall m\in[M],
        \end{split}
    \end{align}
    with {\small$\beta_m^t=B_m+\sigma\sqrt{2(\gamma_m^{t-1}+1+\log(2(M+1)/\delta))}$} for {\small$m\in\{0\}\cup[M]$}. 
\end{lemma}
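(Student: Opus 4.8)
The plan is to reduce this to $M+1$ separate applications of Lemma \ref{lem:confidence}, one for the reward $r$ and one for each constraint $g_m$, and then stitch them together with a union bound. Under Assumption \ref{ass:regularity_context} the reward has bounded RKHS norm $\|r\|_{k_0}\leq B_0$ on $\mathcal{A}\times\mathcal{Z}$ and each constraint satisfies $\|g_m\|_{k_m}\leq B_m$ on $\mathcal{A}_i\times\mathcal{Z}$, while Assumption \ref{ass:feedback_context} guarantees that each of these functions is observed through sub-Gaussian noise. Hence the hypotheses of Lemma \ref{lem:confidence} are met for every one of the $M+1$ functions individually, with $D=B_0$ (respectively $D=B_m$) and the corresponding sub-Gaussian parameter $\sigma_0$ (respectively $\sigma_m$).

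First I would allocate a per-function failure budget of $\delta'=\tfrac{\delta}{2(M+1)}$. Applying Lemma \ref{lem:confidence} to $r$ at confidence $\delta'$ requires width $\beta_0^t=B_0+\sigma_0\sqrt{2(\gamma_0^{t-1}+1+\log(1/\delta'))}$; substituting $1/\delta'=2(M+1)/\delta$ recovers exactly the stated $\beta_0^t$. This produces the event $|\mu_0^{t-1}(x)-r(x)|\leq\beta_0^t\sigma_0^{t-1}(x)$ for all $x\in\mathcal{A}\times\mathcal{Z}$ and all $t\geq 1$, which by the definitions \eqref{eq:ucb}--\eqref{eq:lcb} of $\text{UCB}_0^t$ and $\text{LCB}_0^t$ rearranges to $\text{LCB}_0^t(a,z)\leq r(a,z)\leq\text{UCB}_0^t(a,z)$. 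Since Assumption \ref{ass:regularity_context} additionally imposes $r\in[0,1]$, I intersect the upper inequality with the trivial bound $r(a,z)\leq 1$ to obtain $r(a,z)\leq\min\{1,\text{UCB}_0^t(a,z)\}$, the tightened form appearing in \eqref{eq:simultaneous_UCB_LCB}. The identical argument with $D=B_m$ at confidence $\delta'$ yields $\text{LCB}_m^t(a_i,z)\leq g_m(a_i,z)\leq\text{UCB}_m^t(a_i,z)$ for every $m\in[M]$ on the domain $\mathcal{A}_i\times\mathcal{Z}$.

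Finally I would take a union bound over the $M+1$ resulting events. Each fails with probability at most $\delta'=\tfrac{\delta}{2(M+1)}$, so the probability that at least one fails is at most $(M+1)\cdot\tfrac{\delta}{2(M+1)}=\tfrac{\delta}{2}$; on the complementary event, of probability at least $1-\tfrac{\delta}{2}$, all the bounds in \eqref{eq:simultaneous_UCB_LCB} hold simultaneously for all rounds and all feasible arguments. There is no genuine obstacle here beyond careful bookkeeping: the only subtle point is the budget allocation. The paper deliberately spends exactly half of the total $\delta$ on these confidence bounds---this is the source of the factor $2$ inside $\log(2(M+1)/\delta)$---reserving the remaining $\tfrac{\delta}{2}$ for the separate high-probability concentration (martingale) argument that controls the stochastic regret terms, such as the $\sqrt{T\log(2/\delta)}$ contribution, in the main proof of Theorem \ref{thm:finite_Z}.
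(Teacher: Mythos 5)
Your proof is correct and follows the same route as the paper: apply Lemma \ref{lem:confidence} separately to $r$ and to each $g_m$ with per-function confidence $\tfrac{\delta}{2(M+1)}$, tighten the reward upper bound using $r\in[0,1]$, and combine the $M+1$ events via a union bound to get total failure probability $\tfrac{\delta}{2}$. Your remark about reserving the other half of $\delta$ for the Hoeffding--Azuma step in Theorem \ref{thm:finite_Z} also matches the paper's bookkeeping.
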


\begin{proof}
By Lemma \ref{lem:confidence} and since the true unknown reward function $r$ lies in $[0,1]$, with probability at least {\small$1-\frac{\delta}{2(M+1)}$} the reward function can be upper and lower bounded as follows:
    \begin{align}
        \begin{split}\label{eq:confidence_reward}
            &\text{LCB}_0^t(a,z)\leq r(a,z)\leq \min\{1,\text{UCB}_0^t(a,z)\},\hspace{0.3cm} \forall a\in\mathcal{A},\ \forall z\in\mathcal{Z}, \ \forall t\geq 1,
        \end{split}
    \end{align}
    where the upper- and lower confidence bounds $\text{UCB}_0^t(a,z)$ and $\text{LCB}_0^t(a,z)$ are defined as in \eqref{eq:ucb}-\eqref{eq:lcb} and {\small$\beta_0^t=B_0+\sigma\sqrt{2(\gamma_0^{t-1}+1+\log(2(M+1)/\delta))}$}. Analogously, with probability at least {\small$1-\frac{\delta}{2(M+1)}$} each unknown constraint function $g_{m}$, $m\in[M]$, can be upper and lower bounded as follows:
    \begin{align}
        \begin{split}\label{eq:confidence_constraint}
            \text{LCB}_m^t(a_i,z)\leq g_{m}(a_i,z)\leq \text{UCB}_m^t(a_i,z),\hspace{0.3cm} \forall a_i\in\mathcal{A}_i,\ \forall z\in\mathcal{Z},\ \forall t\geq 1,
        \end{split}
    \end{align}
    where $\text{UCB}_m^t(a_i,z)$ and $\text{LCB}_m^t(a_i,z)$ are defined as in \eqref{eq:ucb}-\eqref{eq:lcb} and {\small$\beta_m^t=B_m+\sigma\sqrt{2(\gamma_m^{t-1}+1+\log(2(M+1)/\delta))}$}.

    Note that the confidence bounds $\text{LCB}_0^t(a,z)$, $\text{UCB}_0^t(a,z)$, $\text{LCB}_m^t(a_i,z)$ and $\text{UCB}_m^t(a_i,z)$, $m\in[M]$, are random variables since they depend on the random sampling process of the algorithm (line 7 of Algorithm \ref{alg:contextual_constrainedGPHedge}) and on noisy observations of the reward $\Tilde{r}^t$ and constraints $\Tilde{g}_m^t$, $m\in[M]$. Define {\small$E_0:=\cap_{a\in\mathcal{A}}\cap_{z\in\mathcal{Z}}\cap_{t\in[T]}\{\text{LCB}_0^t(a,z)\leq r(a,z)\leq \min\{1,\text{UCB}_0^t(a,z)\}\}$} and define {\small$E_m:=\cap_{a_i\in\mathcal{A}_i}\cap_{z\in\mathcal{Z}}\cap_{t\in[T]}\{\text{LCB}_m^t(a_i,z)\leq g_{m}(a_i,z)\leq \text{UCB}_m^t(a_i,z)\}$} for all $m\in[M]$. Then, using De Morgan's law, the union bound and \eqref{eq:confidence_reward}-\eqref{eq:confidence_constraint} we have
    \begin{align*}
        \mathbb{P}(\cap_{m=0}^M E_m)
        &=1-\mathbb{P}((\cap_{m=0}^M E_m)^C)
        =1-\mathbb{P}(\cup_{m=0}^M E_m^C)\\
        &\geq 1-\sum_{m=0}^M\mathbb{P}(E_m^C)\\
        &\geq 1-\sum_{m=0}^M\frac{\delta}{2(M+1)}
         =1-\frac{\delta}{2}.
    \end{align*}
\end{proof}

\begin{lemma}\label{lem:infeasibility_not_declared}
    With probability $1-\frac{\delta}{2}$, c.z.AdaNormalGP (line 3 of Algorithm \ref{alg:contextual_constrainedGPHedge}) does not declare infeasibility.
\end{lemma}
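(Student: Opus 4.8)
The plan is to derive this as a direct corollary of the confidence bounds in Lemma \ref{lem:union_confidence} together with the feasibility guarantee of Assumption \ref{ass:feasibility_context}. First I would unpack the infeasibility test in line 3 of Algorithm \ref{alg:contextual_constrainedGPHedge}. Since $\Pi_i$ is the set of \emph{all} maps $\pi_i:\mathcal{Z}\to\mathcal{A}_i$, the policy can be chosen to output any desired action at the observed context $z^t$, so for each $m$,
\begin{align*}
    \min_{\pi_i\in\Pi_i}\text{LCB}_m^t(\pi_i(z^t),z^t)=\min_{a_i\in\mathcal{A}_i}\text{LCB}_m^t(a_i,z^t).
\end{align*}
Hence the algorithm declares infeasibility at round $t$ precisely when there exists a constraint $m\in[M]$ for which every action $a_i\in\mathcal{A}_i$ satisfies $\text{LCB}_m^t(a_i,z^t)>0$.

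Next I would condition on the event of Lemma \ref{lem:union_confidence}, which holds with probability $1-\frac{\delta}{2}$ and guarantees $\text{LCB}_m^t(a_i,z)\leq g_m(a_i,z)$ for all $a_i\in\mathcal{A}_i$, all $z\in\mathcal{Z}$, all $t\geq 1$, and all $m\in[M]$. By Assumption \ref{ass:feasibility_context} there exists a feasible policy $\overline{\pi}^*$ with $g_m(\overline{\pi}^*(z^t),z^t)\leq 0$ for every $m\in[M]$ and every round $t$. Evaluating the lower confidence bound at the feasible action $\overline{\pi}^*(z^t)$ then gives
\begin{align*}
    \min_{a_i\in\mathcal{A}_i}\text{LCB}_m^t(a_i,z^t)\leq \text{LCB}_m^t(\overline{\pi}^*(z^t),z^t)\leq g_m(\overline{\pi}^*(z^t),z^t)\leq 0,\qquad\forall m\in[M].
\end{align*}
Thus the infeasibility condition of line 3 fails at every round $t$, so on this event the algorithm never declares infeasibility, which yields the claimed $1-\frac{\delta}{2}$ probability.

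Because the argument reuses the same event as Lemma \ref{lem:union_confidence}, no additional union bound or probability budget is consumed. There is no substantive analytic obstacle here; the only points requiring care are making the $\min$-over-policies reduction explicit and applying the lower confidence bound at the \emph{assumed feasible} action rather than at an arbitrary action, so that the sign of the true constraint value (guaranteed nonpositive by Assumption \ref{ass:feasibility_context}) can be transferred through the inequality $\text{LCB}_m^t\leq g_m$.
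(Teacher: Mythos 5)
Your proposal is correct and follows essentially the same route as the paper: condition on the event of Lemma \ref{lem:union_confidence}, invoke the feasible policy $\overline{\pi}_i^*$ from Assumption \ref{ass:feasibility_context}, and chain $\text{LCB}_m^t(\overline{\pi}_i^*(z^t),z^t)\leq g_m(\overline{\pi}_i^*(z^t),z^t)\leq 0$ to show the test in line 3 never triggers. The only difference is that you make the reduction from a minimum over policies to a minimum over actions explicit, which the paper leaves implicit.
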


\begin{proof}
        
    By Assumption \ref{ass:feasibility_context} the problem given in \eqref{problem_context} is feasible and has optimal solution $\overline{\pi}_i^*$, i.e., 
    \begin{align*}
        \overline{\pi}_i^*\in\arg\max_{\pi_i\in\Pi_i} &\sum_{t=1}^T r(\pi_i(z^t),a_{-i}^t,z^t)\\
        \text{s.t.\ }& g_{m}(\pi_i(z^t),z^t)\leq 0, \quad\forall m\in[M],\forall t\geq 1.
    \end{align*}
    By Lemma \ref{lem:union_confidence}, with probability $1-\frac{\delta}{2}$, $\text{LCB}_m^t(a_i,z)\leq g_{m}(a_i,z)$ for all $a_i\in\mathcal{A}_i$, $z\in\mathcal{Z}$, $m\in[M]$, and $t\geq 1$. Thus, by setting $a_i=\overline{\pi}_i^*(z^t)$ with probability $1-\frac{\delta}{2}$ we have that $\text{LCB}_m^t(\overline{\pi}_i^*(z^t),z^t)\leq 0$ for all $m\in[M]$ and $t\geq 1$ and infeasibility is not declared.
\end{proof}

We proceed with upper-bounding the cumulative constraint violations for each constraint.

\begin{lemma}\label{lem:cum_violation_bound}
    With probability at least {\small$1-\frac{\delta}{2}$}  the cumulative constraint violations $\mathcal{V}_m^T$ for each constraint $g_m$, $m\in[M]$, are upper-bounded as follows:
    \begin{align*}
        \mathcal{V}_{m}^T =\mathcal{O}\bigg(\beta_m^T\sqrt{T\gamma_m^T}\bigg),\quad \forall m\in[M].
    \end{align*}
\end{lemma}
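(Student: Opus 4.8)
The plan is to exploit the algorithm's feasibility check directly: by the definition of the renormalized sampling distribution $\overline{p}^t(z^t)$ in Algorithm \ref{alg:contextual_constrainedGPHedge}, any played action $a_i^t$ is assigned positive probability only if $\text{LCB}_m^t(a_i^t,z^t)\leq 0$ for every $m\in[M]$. Hence the played action \emph{always} satisfies $\text{LCB}_m^t(a_i^t,z^t)\leq 0$ by construction, and I would condition on the high-probability event of Lemma \ref{lem:union_confidence}, which holds with probability at least $1-\frac{\delta}{2}$ and guarantees $g_m(a_i^t,z^t)\leq \text{UCB}_m^t(a_i^t,z^t)$ for all rounds and constraints simultaneously.

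The first key step is to bound each per-round violation by the posterior standard deviation. Using the definitions \eqref{eq:ucb}--\eqref{eq:lcb}, I would write $\text{UCB}_m^t(a_i^t,z^t)=\text{LCB}_m^t(a_i^t,z^t)+2\beta_m^t\sigma_m^{t-1}(a_i^t,z^t)$. Combining the confidence bound with the feasibility property $\text{LCB}_m^t(a_i^t,z^t)\leq 0$ gives
\begin{align*}
g_m(a_i^t,z^t)\leq \text{UCB}_m^t(a_i^t,z^t)\leq 2\beta_m^t\sigma_m^{t-1}(a_i^t,z^t),
\end{align*}
and since the right-hand side is nonnegative, $[g_m(a_i^t,z^t)]_+\leq 2\beta_m^t\sigma_m^{t-1}(a_i^t,z^t)$. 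Summing over $t$ and using that $\beta_m^t$ is nondecreasing in $t$ yields
\begin{align*}
\mathcal{V}_m^T=\sum_{t=1}^T[g_m(a_i^t,z^t)]_+\leq 2\beta_m^T\sum_{t=1}^T\sigma_m^{t-1}(a_i^t,z^t).
\end{align*}

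The remaining step is the standard GP-optimization argument for controlling the summed posterior standard deviations. By Cauchy--Schwarz, $\sum_{t=1}^T\sigma_m^{t-1}(a_i^t,z^t)\leq\sqrt{T\sum_{t=1}^T(\sigma_m^{t-1}(a_i^t,z^t))^2}$, and the sum of posterior variances along the sampled points is bounded by the maximum information gain, $\sum_{t=1}^T(\sigma_m^{t-1}(a_i^t,z^t))^2=\mathcal{O}(\gamma_m^T)$ (the classical bound of \citet{srinivas2009gaussian}, which relies on $k_{i,m}(x,x')\leq 1$ from Assumption \ref{ass:regularity_context}). Chaining these gives $\sum_{t=1}^T\sigma_m^{t-1}(a_i^t,z^t)=\mathcal{O}(\sqrt{T\gamma_m^T})$ and therefore $\mathcal{V}_m^T=\mathcal{O}(\beta_m^T\sqrt{T\gamma_m^T})$ on the event of Lemma \ref{lem:union_confidence}, i.e.\ with probability at least $1-\frac{\delta}{2}$.

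The conceptual core of the argument is genuinely short: the whole point is that restricting attention to $\text{LCB}_m^t\leq 0$ actions forces each violation to be bounded by the current uncertainty $\beta_m^t\sigma_m^{t-1}$, so no separate dual-variable or regret-style analysis is needed for the violation bound. The only technical obstacle I expect is the information-gain step, where one must invoke the variance-to-$\gamma_m^T$ bound with the correct constant (depending on the noise level $\sigma_{i,m}$ and the kernel normalization); this is where the $\sqrt{T\gamma_m^T}$ scaling, rather than a naive linear-in-$T$ bound, actually comes from. Everything else is an application of Lemma \ref{lem:union_confidence} together with the algorithm's construction and monotonicity of $\beta_m^t$.
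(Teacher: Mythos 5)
Your proposal is correct and follows essentially the same route as the paper's proof: condition on the confidence event of Lemma \ref{lem:union_confidence}, use that the played action satisfies $\text{LCB}_m^t(a_i^t,z^t)\leq 0$ by construction of $\overline{p}^t$, deduce $[g_m(a_i^t,z^t)]_+\leq 2\beta_m^t\sigma_m^{t-1}(a_i^t,z^t)$, and close with monotonicity of $\beta_m^t$ and the standard bound on the sum of posterior standard deviations via the maximum information gain (the paper invokes \citet{srinivas2009gaussian}, Lemma~5.4, which is exactly your Cauchy--Schwarz-plus-variance argument). The only cosmetic difference is that the paper reaches the per-round bound by adding and subtracting $\text{LCB}_m^t$ and splitting the positive part, which is equivalent to your direct chain of inequalities.
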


\begin{proof}
    With probability at least $1-\frac{\delta}{2}$ the following holds:
    \begin{align}
            \mathcal{V}_{m}^T &= \sum_{t=1}^T [g_{m}(a_i^t,z^t)]_+\\
            &= \sum_{t=1}^T [g_{m}(a_i^t,z^t)-\text{LCB}_m^t(a_i^t,z^t)+\text{LCB}_m^t(a_i^t,z^t)]_+\\
            &\leq \sum_{t=1}^T [g_{m}(a_i^t,z^t)-\text{LCB}_m^t(a_i^t,z^t)]_+ +\sum_{t=1}^T[\text{LCB}_m^t(a_i^t,z^t)]_+\\ 
            &\leq \sum_{t=1}^T [\text{UCB}_m^t(a_i^t,z^t)-\text{LCB}_m^t(a_i^t,z^t)]_+ +\sum_{t=1}^T[\underbrace{\text{LCB}_m^t(a_i^t,z^t)}_{\leq 0}]_+ \label{eq:use_claim1}\\ 
            &\leq\sum_{t=1}^T [\text{UCB}_m^t(a_i^t,z^t)-\text{LCB}_m^t(a_i^t,z^t)]_+\label{eq:0whp}\\
            &\leq 2\beta_m^T \sum_{t=1}^T \sigma_m^{t-1}(a_i^t,z^t)\label{eq:beta_increasing,0whp}\\
            &\leq C_1\beta_m^T\sqrt{T\gamma_m^T}\label{eq:srinivas},
    \end{align}
    where equation \eqref{eq:use_claim1} follows from Lemma \ref{lem:union_confidence} and holds with probability at least {\small$1-\frac{\delta}{2}$}. Equation \eqref{eq:0whp} follows from the fact that the sampled action $a_i^t$ at each round $t$ is feasible with respect to the lower confidence bound $\text{LCB}_m^t(\cdot)$ for all $m\in[M]$ (line  7 of Algorithm \ref{alg:contextual_constrainedGPHedge}) and that with probability at least {\small$1-\frac{\delta}{2}$} infeasibility is not declared. Equation \eqref{eq:beta_increasing,0whp} follows firstly from the definition of $\text{UCB}_m^t(\cdot)$ in \eqref{eq:ucb} and $\text{UCB}_m^t(\cdot)$ in \eqref{eq:lcb} and the fact that $\beta_m^t$ is increasing in $t$.The last equation makes use of \cite[Lemma~5.4]{srinivas2009gaussian} which provides an upper bound on the sum of posterior standard deviations, where {\small$C_1=\frac{8}{\log(1+\sigma^{-2})}$}.
\end{proof}
    
Next, we proceed with upper-bounding the constrained contextual regret, short regret. 

\begin{lemma}\label{lem:regret_bound_finite_Z}
    Conditioned on equation \eqref{eq:simultaneous_UCB_LCB} in Lemma \ref{lem:union_confidence} holding true,  the constrained contextual regret $R^T$ is upper-bounded with probability $1-\frac{\delta}{2}$ by:
    \begin{align*}
        R^T=\mathcal{O}\bigg(\sqrt{|\mathcal{Z}|T(\log(K)+\log(B)+\log(1+\log(K))}+\sqrt{T\log(2/\delta)}+\beta_0^T\sqrt{T\gamma_0^T}\bigg).
    \end{align*}
\end{lemma}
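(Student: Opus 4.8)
The plan is to fix the optimal feasible policy $\overline{\pi}^*$ guaranteed by Assumption~\ref{ass:feasibility_context} and, writing $r^t(a) := r(a, a_{-i}^t, z^t)$, decompose the regret by inserting the clipped upper-confidence estimate $\hat{r}^t(a) = \min\{1, \text{UCB}_0^t(a, a_{-i}^t, z^t)\}$ from Algorithm~\ref{alg:strategy_finite_Z} together with its mean $\mathbb{E}[\hat{r}^t(a^t)] = \overline{p}^t(z^t)^\top \hat{r}^t$ under the renormalized sampling distribution. This yields
\begin{align*}
R^T = &\sum_{t=1}^T \big(r^t(\overline{\pi}^*(z^t)) - \hat{r}^t(\overline{\pi}^*(z^t))\big) + \sum_{t=1}^T \big(\hat{r}^t(\overline{\pi}^*(z^t)) - \mathbb{E}[\hat{r}^t(a^t)]\big) \\
&+ \sum_{t=1}^T \big(\mathbb{E}[\hat{r}^t(a^t)] - \hat{r}^t(a^t)\big) + \sum_{t=1}^T \big(\hat{r}^t(a^t) - r^t(a^t)\big),
\end{align*}
whose four sums I label (I)--(IV) and bound separately, conditioning on the confidence event \eqref{eq:simultaneous_UCB_LCB}.

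Term (I) is nonpositive, since under \eqref{eq:simultaneous_UCB_LCB} we have $r^t(a) \leq \min\{1,\text{UCB}_0^t(a,\cdot)\} = \hat{r}^t(a)$ for every action. For term (II) I exploit the per-context sleeping-expert reduction: because $\overline{\pi}^*$ is feasible for the true constraints and $\text{LCB}_m^t \leq g_m$ by Lemma~\ref{lem:union_confidence}, the action $\overline{\pi}^*(z)$ is awake at every round where context $z$ is played, so applying Theorem~\ref{thm:pseudo_ada} with losses $l^t(a) = 1 - \hat{r}^t(a)$ to the AdaNormalHedge instance associated with $z$ bounds the restriction of (II) to those rounds by $\sqrt{3 T_z(\log K + \log B + \log(1+\log K))}$. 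Summing over $z \in \mathcal{Z}$ and applying Cauchy--Schwarz with $\sum_z T_z = T$ produces the $\sqrt{|\mathcal{Z}|T(\log K + \log B + \log(1+\log K))}$ term. Term (IV) is handled exactly as in Lemma~\ref{lem:cum_violation_bound}: under \eqref{eq:simultaneous_UCB_LCB}, $\hat{r}^t(a^t) - r^t(a^t) \leq \text{UCB}_0^t(a^t,\cdot) - \text{LCB}_0^t(a^t,\cdot) = 2\beta_0^t \sigma_0^{t-1}(a^t,z^t)$, and summing through \cite[Lemma~5.4]{srinivas2009gaussian} with $\beta_0^t$ nondecreasing gives $\mathcal{O}(\beta_0^T\sqrt{T\gamma_0^T})$.

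Term (III) is the only genuinely stochastic piece. I would treat the other players' sequence $\{a_{-i}^t\}$ as fixed and take the filtration that reveals $(z^t, a_{-i}^t)$ before the player samples $a^t \sim \overline{p}^t(z^t)$; given this information $\hat{r}^t$ is deterministic and $\mathbb{E}[\hat{r}^t(a^t)]$ is precisely the conditional mean of $\hat{r}^t(a^t)$ over the independent sampling of $a^t$. Hence the increments of (III) form a bounded martingale difference sequence in $[-1,1]$, where boundedness uses $\hat{r}^t \in [0,1]^K$, which holds on \eqref{eq:simultaneous_UCB_LCB} since $0 \leq r^t \leq \text{UCB}_0^t$ forces $\text{UCB}_0^t \geq 0$; Azuma--Hoeffding then bounds (III) by $\sqrt{2T\log(2/\delta)}$ with probability $1-\delta/2$, and collecting the four bounds yields the claim. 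The main obstacle I anticipate is exactly this measurability setup: the estimate $\hat{r}^t$ depends on the simultaneously chosen $a_{-i}^t$, which the player never observes before acting, so one must argue that the sampling randomness of $a^t$ is independent of $a_{-i}^t$ given the history in order to identify $\mathbb{E}[\hat{r}^t(a^t)]$ as a genuine conditional expectation. The clipping to $[0,1]$ is a second care point, needed both to keep the losses valid for Theorem~\ref{thm:pseudo_ada} and to keep the martingale increments bounded.
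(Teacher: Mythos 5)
Your proposal is correct and follows essentially the same route as the paper's proof: the four-term decomposition via the clipped UCB estimate $\hat{r}^t$ corresponds exactly to the paper's chain of inequalities (confidence bounds for terms (I) and (IV), Hoeffding--Azuma for (III), and the per-context reduction to the AdaNormalHedge sleeping-expert bound followed by Cauchy--Schwarz for (II)), with the same key observations that $\overline{\pi}^*(z)$ is always awake and that $\hat{r}^t\in[0,1]^K$ on the confidence event. The measurability concern you raise for term (III) is handled in the paper only by a footnote defining the martingale with respect to the uniform sampling variables, so your explicit treatment of the filtration is, if anything, slightly more careful than the original.
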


\begin{proof}
    
Recall the definition of regret:
    \begin{align*}
\begin{split}
    R^T=&\max_{\substack{\pi_i\in\Pi_i}} \sum_{t=1}^T r(\pi_i(z^t),a_{-i}^t,z^t)-\sum_{t=1}^T r(a_i^t,a_{-i}^t,z^t)\\
    &\text{s.t.\ }g_{m}(\pi_i(z^t),z^t)\leq 0,\quad\forall m\in[M], \forall t\geq 1,
\end{split}\\
    =&\sum_{t=1}^T r(\overline{\pi}_i^*(z^t),a_{-i}^t,z^t)-\sum_{t=1}^T r(a_i^t,a_{-i}^t,z^t).
\end{align*}

    Note that for a finite context space $\mathcal{Z}$, c.z.AdaNormalGP maintains a separate distribution $p^t(z)$ for each context $z\in\mathcal{Z}$, where $p^t(z^t)$ is computed according to Algorithm \ref{alg:strategy_finite_Z}. Thus, the regret notion can be rewritten as:
    \begin{align*}
        R^T&=\sum_{z\in\mathcal{Z}}\sum_{t:z^t=z} r(\overline{\pi}_i^*(z^t),a_{-i}^t,z^t)-r(a_i^t,a_{-i}^t,z^t).
    \end{align*}

    The reward and thus $R^T$ can be further upper bounded as follows:
    \begin{align}
        R^T&= \sum_{z\in\mathcal{Z}}\sum_{t:z^t=z}r(\overline{\pi}_i^*(z^t),a_{-i}^t,z^t)-r(a_i^t,a_{-i}^t,z^t)\nonumber\\
        &\leq\sum_{z\in\mathcal{Z}}\sum_{t:z^t=z} \min\{1,\text{UCB}_0^t(\overline{\pi}_i^*(z^t),a_{-i}^t,z^t)\}- (\text{LCB}_0^t(a_i^t,a_{-i}^t,z^t))\label{eq:finite_Z,UCB,LCB}\\
        &\leq\sum_{z\in\mathcal{Z}}\sum_{t:z^t=z} \min\{1,\text{UCB}_0^t(\overline{\pi}_i^*(z^t),a_{-i}^t,z^t)\}- (\text{UCB}_0^t(a_i^t,a_{-i}^t,z^t)-2\beta_0^t\sigma_0^{t-1}(a_i^t,a_{-i}^t,z^t))\nonumber\\
        &\leq \sum_{z\in\mathcal{Z}}\sum_{t:z^t=z} \min\{1,\text{UCB}_0^t(\overline{\pi}_i^*(z^t),a_{-i}^t,z^t)\}- \min\{1,\text{UCB}_0^t(a_i^t,a_{-i}^t,z^t)\}\nonumber\\
        &+\sum_{t=1}^T 2\beta_0^t\sigma_0^{t-1}(a_i^t,a_{-i}^t,z^t)\nonumber\\
        &\leq \sum_{z\in\mathcal{Z}}\sum_{t:z^t=z} (\min\{1,\text{UCB}_0^t(\overline{\pi}_i^*(z^t),a_{-i}^t,z^t)\}- \min\{1,\text{UCB}_0^t(a_i^t,a_{-i}^t,z^t)\})+C_1\beta_0^T\sqrt{T\gamma_0^T}\label{eq:finite_Z,beta,bound},
    \end{align}
    where in equation \eqref{eq:finite_Z,UCB,LCB} we used equation \eqref{eq:simultaneous_UCB_LCB} from Lemma \ref{lem:union_confidence} and in equation \eqref{eq:finite_Z,beta,bound} we again used \cite[Lemma 5.4]{srinivas2009gaussian} with $C_1=\frac{8}{\log(1+\sigma^{-2})}$.

    To further upper-bound $R^T$ we rely on the Hoeffding-Azuma inequality \cite[Lemma~A.7]{cesa-bianchi06}. Note that the collection of random variables $\{V^t\}_{t=1}^T$ defined as $V^t:=\min\{1,\text{UCB}_0^t(a_i^t,a_{-i}^t,z^t)\}-\sum_{a_i\in\mathcal{A}_i} \overline{p}_{a_i}^t(z^t)\min\{1,\text{UCB}_0^t(a_i,a_{-i}^t,z^t)\}$ forms a martingale difference sequence\footnote{{\tiny$\{V^t\}_{t=1}^T$} forms a martingale difference sequence with respect to {\tiny$U^1,\ldots,U^T$}, where {\tiny$U^t\in\big[\sum_{j=1}^{k-1}p_j^t(z^t),\sum_{j=1}^kp_j^t(z^t)\big)$} if and only if $a_i^t=k$.}. Then, the Hoeffding-Azuma inequality is applicable and the following holds with probability at least $1-\delta/2$:
    {\small
    \begin{align}
        \sum_{z\in\mathcal{Z}}\sum_{t:z^t=z}\big|\min\{1,\text{UCB}_0^t(a_i^t,a_{-i}^t,z^t)\}-\sum_{a_i\in\mathcal{A}_i} \overline{p}_{a_i}^t(z^t)\min\{1,\text{UCB}_0^t(a_i,a_{-i}^t,z^t)\}\big|\leq\sqrt{T/2\log(2/\delta)}.\label{eq:finite_Z_hoeffding}
    \end{align}}
    Here, we used that $V^t\in[0,1]$ and $\sum_{z\in\mathcal{Z}}\sum_{t:z^t=z}1= T$. Plugging equation \eqref{eq:finite_Z_hoeffding} into equation \eqref{eq:finite_Z,beta,bound}, with probability at least $1-\delta/2$, we obtain the following upper bound on $R^T$:
    \begin{align}
    \begin{split}\label{eq:finite_Z_AdaNormalHedgeRegret}
        R^T&\leq  \sum_{z\in\mathcal{Z}}\sum_{t:z^t=z} \big(\min\{1,\text{UCB}_0^t(\overline{\pi}_i^*(z^t),a_{-i}^t,z^t)\}- \sum_{a_i\in\mathcal{A}_i} \overline{p}_{a_i}^t(z^t)\min\{1,\text{UCB}_0^t(a_i,a_{-i}^t,z^t)\}\big)\\
        &+\sqrt{T/2\log(2/\delta)}+C_1\beta_0^T\sqrt{T\gamma_0^T}.
    \end{split}
    \end{align}
    Define the function $f^t(\cdot)=\min\{1,\text{UCB}_0^t(\cdot,a_{-i}^t,z^t)\}$. Since we conditioned on equation \eqref{eq:simultaneous_UCB_LCB} in Lemma \ref{lem:union_confidence} holding true, $\text{UCB}_0^t(\cdot)\geq 0$ since $r(\cdot)\geq 0$ and thus $f^t(\cdot)\in[0,1]^K$. For a fixed $z\in\mathcal{Z}$, the first term in equation \eqref{eq:finite_Z_AdaNormalHedgeRegret} can be rewritten as:
    \begin{align}
    &\sum_{t:z^t=z} \min\{1,\text{UCB}_0^t(\overline{\pi}_i^*(z^t),a_{-i}^t,z^t)\}- \sum_{a_i\in\mathcal{A}_i}\overline{p}_{a_i}^t(z^t)\min\{1,\text{UCB}_0^t(a_i,a_{-i}^t,z^t)\}\nonumber\\
        =& \sum_{t:z^t=z} f^t(\overline{\pi}_i^*(z^t))-\sum_{a_i\in\mathcal{A}_i}\overline{p}_{a_i}^t(z^t)f^t(a_i^t)\label{eq:finite_Z_normal_regret}\\
        =& \sum_{t:z^t=z} \mathbbm{1}_{\overline{\pi}_i^*(z^t),t}\bigg(f^t(\overline{\pi}_i^*(z^t))-\sum_{a_i\in\mathcal{A}_i}\overline{p}_{a_i}^t(z^t)f^t(a_i^t)\bigg),\label{eq:finite_Z_sleeping_regret}
    \end{align}
    where for all $a_i\in\mathcal{A}_i$ we define:
    \begin{align}\label{eq:finite_Z_indicator_sleeping}
        \begin{split}
            \mathbbm{1}_{a_i,t}:=\begin{cases}
			1, & \text{if $\text{LCB}_m^t(a_i,z^t)\leq 0$ for every $m\in[M],$}\\
            0, & \text{otherwise}.
		 \end{cases}
        \end{split}
    \end{align}
    Since we conditioned on equation \eqref{eq:simultaneous_UCB_LCB} in Lemma \ref{lem:union_confidence} holding true, $\mathbbm{1}_{\overline{\pi}_i^*(z^t),t}$ equals $1$ for all $t\geq 1$ and therefore equation \eqref{eq:finite_Z_normal_regret} and equation \eqref{eq:finite_Z_sleeping_regret} are equivalent. Now observe that equation \eqref{eq:finite_Z_sleeping_regret} is precisely the regret which a player with reward function $f^t(\cdot)\in[0, 1]$ incurs in a sleeping experts problem after $T_z=\sum_{t=1}^T 1_{\{z^t=z
    \}}$ repetitions of the game. Here $T_z$ denotes the number of times context $z^t$ is revealed. As mentioned previously, for each context $z\in\mathcal{Z}$ a distribution $p^t(z)$ is maintained which is updated whenever the revealed context $z^t$ equals $z$. The update rule for each distribution (Line $3-4$ in Algorithm \ref{alg:strategy_finite_Z}) corresponds exactly to that of general AdaNormalHedge proposed in \cite{luo14a}. Thus, at each round $t$ action $a_i^t$ is chosen according to general AdaNormalHedge with the set of sleeping experts $[\mathbbm{1}_{a_i,t}]_{a_i\in\mathcal{A}_i}$ defined as in equation \eqref{eq:finite_Z_indicator_sleeping} and which receives the full information feedback $\hat{r}^t=[f^t(a_i)]_{a_i\in\mathcal{A}_i}$. Concretely, $a_i^t$ is sampled from a subset of awake experts, respective  available actions, where an action $a_i\in\mathcal{A}_i$ is available at round $t$, if $\mathbbm{1}_{a_i,t}=1$, and unavailable, if $\mathbbm{1}_{a_i,t}=0$.

    In summary, equation \eqref{eq:finite_Z_sleeping_regret} corresponds to the regret incurred by the general AdaNormalHedge algorithm after $T_z$ repetitions and can therefore be upper-bounded by Theorem \ref{thm:pseudo_ada}:
    {\small
    \begin{align}\label{eq:finite_Z_sleeping_regret_bound}
        \sum_{t:z^t=z} \mathbbm{1}_{\overline{\pi}_i^*(z^t),t}\bigg(f^t(\overline{\pi}_i^*(z^t))-\sum_{a_i\in\mathcal{A}_i}\overline{p}_{a_i}^t(z^t)f^t(a_i^t)\bigg)\leq\sqrt{3T_z(\log(K)+\log(B)+\log(1+\log(K)))},
    \end{align}}
    where $B\leq \frac{5}{2}+\frac{3}{2}\log(1+T_z)$. 
    
    Now summing over all contexts $z\in\mathcal{Z}$ and using the Cauchy-Schwarz inequality, we can further upper-bound equation \eqref{eq:finite_Z_AdaNormalHedgeRegret} with probability at least $1-\delta/2$:
    \begin{align}
        R^T\leq&\sum_{z\in\mathcal{Z}}\sum_{t:z^t=z} \min\{1,\text{UCB}_0^t(\overline{\pi}_i^*(z^t),a_{-i}^t,z^t)\}- \sum_{a_i\in\mathcal{A}_i}\overline{p}_{a_i}^t(z^t)\min\{1,\text{UCB}_0^t(a_i,a_{-i}^t,z^t)\}\nonumber\\
        &+\sqrt{T/2\log(2/\delta)}+C_1\beta_0^T\sqrt{T\gamma_0^T}\nonumber\\
    &\leq \sum_{z\in\mathcal{Z}}\sqrt{3T_z(\log(K)+\log(B)+\log(1+\log(K)))}+\sqrt{T/2\log(2/\delta)}+C_1\beta_0^T\sqrt{T\gamma_0^T}\nonumber\\
    &\leq \sqrt{|\mathcal{Z}|\sum_{z\in\mathcal{Z}}3T_z(\log(K)+\log(B)+\log(1+\log(K)))}+\sqrt{T/2\log(2/\delta)}+C_1\beta_0^T\sqrt{T\gamma_0^T}\nonumber\\
    &=\sqrt{|\mathcal{Z}|3T(\log(K)+\log(B)+\log(1+\log(K)))}+\sqrt{T/2\log(2/\delta)}+C_1\beta_0^T\sqrt{T\gamma_0^T}\nonumber
\end{align}

\end{proof}

\begin{proof}(of Theorem \ref{thm:finite_Z})
    By combining Lemma \ref{lem:union_confidence}-\ref{lem:regret_bound_finite_Z} Theorem \ref{thm:finite_Z} follows. In particular it follows by standard probability arguments:
\begin{align*}
    \mathbb{P}(\Tilde{E}_1\cap \Tilde{E}_2)
    &= 1-\mathbb{P}((\Tilde{E}_1\cap \Tilde{E}_2)^C)\\
    &=1-\mathbb{P}(\Tilde{E}_1^C\cup \Tilde{E}_2^C)\\
    &\geq 1-(\mathbb{P}(\Tilde{E}_1^C)+\mathbb{P}(\Tilde{E}_2^C))\\
    &\geq 1-(\frac{\delta}{2}+\frac{\delta}{2})
    = 1-\delta,
\end{align*}
where $\Tilde{E}_1$ corresponds to equation \eqref{eq:simultaneous_UCB_LCB} in Lemma \ref{lem:union_confidence} and $\Tilde{E}_2$ corresponds to equation \eqref{eq:finite_Z_hoeffding}. By Lemma \ref{lem:confidence} and the Hoeffding-Azuma inequality, respectively, the following holds $\mathbb{P}(\Tilde{E}_1^C)\leq \delta/2$ and $\mathbb{P}(\Tilde{E}_2^C)\leq \delta/2$.
\end{proof}

\subsubsection{Infinite (large) number of contexts}\label{sec:infinite_context}

If the context space $\mathcal{Z}$ is large or infinite, we additionally assume that under the optimal feasible policy similar contexts lead to a similar performance \cite{sessa21}. 
In the design of a thermal controller, contexts could refer to weather conditions which forms an infinitely larges set. Then,  similar weather condition, e.g. the outdoor temperature,  result in similar control input parameters under the optimal feasible policy. 

\begin{ass}\label{ass:lipschitz}
The optimal feasible policy in hindsight $\overline{\pi}_i^*(\cdot)$ in \eqref{problem_context}  is $L_p$-Lipschitz, i.e.,
    \begin{align*}
        |\overline{\pi}_i^*(z)-\overline{\pi}_i^*(z')|\leq L_p\|z-z'\|_1,\quad \forall z,z'\in\mathcal{Z}.
    \end{align*}
Moreover, the reward function $r_i(\cdot)$ is $L_r$-Lipschitz with respect to the first component, i.e.,
    \begin{align*}
        |r_i(a_i,a_{-i},z)-r_i(a_i',a_{-i},z)|\leq L_r\|a_i-a_i'\|_1,
    \end{align*}
    for all $a_i,a_i'\in\mathcal{A}_i$, all $a_{-i}\in\mathcal{A}_{-i}$ and $z\in\mathcal{Z}$.
\end{ass} 
Furthermore, we assume $\mathcal{Z}\subseteq[0,1]^d$ to obtain a scale-free regret bound and for simplicity, we consider context-independent constraints $g_m:\mathcal{A}_i\rightarrow\mathbb{R}$, $m\in[M]$. Assumption \ref{ass:feasibility_context} then implies that $g_m(\overline{\pi}_i^*(z^t))\leq 0$ for all $m\in[M]$. 

Similarly to \cite{hazan07, sessa21}, the distribution $p^t(z^t)\in\Delta_K$ (line 6 of Algorithm \ref{alg:contextual_constrainedGPHedge}) is computed by building an $\epsilon$-net \cite{clakson05, krauthgamer2004} of the context space $\mathcal{Z}$ in a greedy fashion as new contexts
are revealed. Concretely, at each round $t$, either a new $L_1$-ball centered at $z^t$ is created, if $z^t$ is more than $\epsilon$ away from the closest ball, or $z^t$ is assigned to the closest $L_1$-ball. 
In the latter case, $p^t(z^t)$ is computed via the general AdaNormalHedge \cite{luo15} update rule using only past game data of those rounds $\tau<t$ for which $z^\tau$ belongs to the same $L_1$-ball as $z^t$. Under this computation rule of $p^t(z^t)$, summarized in Algorithm \ref{alg:strategy_infinite_Z}, c.z.AdaNormalGP achieves the following high-probability bounds.

\begin{algorithm*}[t]
\caption{Update rule for infinite context space $\mathcal{Z}$}\label{alg:strategy_infinite_Z}
\hspace*{\algorithmicindent} \textbf{Set} $\epsilon>0$, $\mathcal{C}=\{z^1\}$, $R^1_{a_i}(z^1)=0$, $C^1_{a_i}(z^1)=0$, and let $p^1(z^1)$ be the uniform distribution.
\begin{algorithmic}[1]
    \For {\texttt{$t=2,\ldots,T$}}
        \State \texttt{Set $\overline{z}^t=\arg\min_{z\in\mathcal{C}}\|z^t-z\|_1$.}
        \If{$\|z^t-\overline{z}^t\|_1>\epsilon$}
            \State \texttt{Add $z^t$ to $\mathcal{C}$, set $\overline{z}^t=z^t, R^1_{a_i}(\overline{z}^t)=0, C^1_{a_i}(\overline{z}^t)=0$, and $p^t(\overline{z}^t)$ as the uniform distribution.}
        \Else
            \State \texttt{Compute reward estimate $\hat{r}^t$ with $\hat{r}^t(a_i)=\min\{1,\text{UCB}_0^t(a_i,a_{-i}^t,z^t)\},\quad\forall a_i$.}
        \State \texttt{Compute for $\overline{z}^t$ and all $a_i\in\mathcal{A}_i$, $R^t_{a_i}(\overline{z}^t)$ and $C_{a_i}^t(\overline{z}^t)$ via eq. \eqref{alg:strategy_finite_Z_R} and  \eqref{alg:strategy_finite_Z_C} in Alg. \ref{alg:strategy_finite_Z}.}
        \State \texttt{Update $p_{a_i}^{t+1}(\overline{z}^t)$ via eq. \eqref{alg:strategy_finite_Z_update} in Alg. \ref{alg:strategy_finite_Z} and set $p^{t+1}(z^t)=p^{t+1}(\overline{z}^t)$.}
        \EndIf
    \EndFor
\end{algorithmic}
\end{algorithm*}

\begin{theorem}\label{thm:infinite_Z}
    Fix $\delta\in(0,1)$. Under Assumptions \ref{ass:feasibility_context}-\ref{ass:lipschitz}, if a player plays according to c.z.AdaNormalGP with $p^t(z^t)$ computed according to Algorithm \ref{alg:strategy_infinite_Z} and {\small$\beta_m^t=B_m+\sigma_m\sqrt{2(\gamma_m^{t-1}+1+\log(2(M+1)/\delta))}$} for all {\small$m\in\{0\}\cup[M]$}, then with probability at least $1-\delta$:
    {\small
    \begin{align*}
    &R^T=\mathcal{O}\big((L_r L_p)^{\frac{d}{d+2}} T^{\frac{d+1}{d+2}}\sqrt{\log(K)+\log(B)+\log(1+\log(K))}+\sqrt{T/2\log(2/\delta)}+2\beta_0^T\sqrt{T\gamma_0^T}\big)\\
    &\mathcal{V}_{m}^T=\mathcal{O}\big(\beta_m^T\sqrt{T\gamma_m^T}\big),\hspace{0,3cm}\forall m\in[M],
    \end{align*}
    }
    where {\small$B=1+\frac{3}{2}\frac{1}{K}\sum_{a_i=1}^K (1+\log(1+C_i^t(a_i)))\leq \frac{5}{2}+\frac{3}{2}\log(1+T)$}.
\end{theorem}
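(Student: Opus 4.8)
The plan is to reuse almost all of the finite-context analysis and to isolate the one genuinely new ingredient, namely the discretization of the context space. First I observe that Lemma~\ref{lem:union_confidence} (simultaneous validity of the UCB/LCB bounds) and Lemma~\ref{lem:infeasibility_not_declared} (infeasibility is not declared) hold verbatim here, since neither proof uses finiteness of $\mathcal{Z}$; they rely only on Assumptions~\ref{ass:feasibility_context} and \ref{ass:regularity_context} together with Lemma~\ref{lem:confidence}. Likewise, the cumulative constraint violation bound $\mathcal{V}_m^T = \mathcal{O}(\beta_m^T\sqrt{T\gamma_m^T})$ follows exactly as in Lemma~\ref{lem:cum_violation_bound}: the chain of inequalities there uses only that the played action $a_i^t$ is feasible with respect to $\text{LCB}_m^t$, together with \cite[Lemma~5.4]{srinivas2009gaussian}, both independent of the cardinality of $\mathcal{Z}$. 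This settles the second displayed bound, so the entire burden falls on the regret.

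For the regret I would first condition on \eqref{eq:simultaneous_UCB_LCB} and pass to the optimistic surrogate exactly as in \eqref{eq:finite_Z,UCB,LCB}--\eqref{eq:finite_Z,beta,bound}, which peels off the GP term $\mathcal{O}(\beta_0^T\sqrt{T\gamma_0^T})$ and reduces the problem to controlling $\sum_t (f^t(\overline{\pi}_i^*(z^t)) - \sum_{a_i}\overline{p}_{a_i}^t(z^t) f^t(a_i))$ with $f^t(\cdot)=\min\{1,\text{UCB}_0^t(\cdot,a_{-i}^t,z^t)\}$. The new step is to group the rounds according to the $\epsilon$-net balls maintained by Algorithm~\ref{alg:strategy_infinite_Z}. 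Let the balls have centers $\{z_p\}_{p=1}^P$; every context assigned to ball $p$ lies within $\ell_1$-distance $\epsilon$ of $z_p$. Within ball $p$ I would benchmark against the fixed action $a_p^\star := \overline{\pi}_i^*(z_p)$ rather than the varying comparator $\overline{\pi}_i^*(z^t)$. Adding and subtracting $r(a_p^\star,a_{-i}^t,z^t)$ and invoking Assumption~\ref{ass:lipschitz} gives, for each such round, $|r(\overline{\pi}_i^*(z^t),a_{-i}^t,z^t) - r(a_p^\star,a_{-i}^t,z^t)| \le L_r\|\overline{\pi}_i^*(z^t)-\overline{\pi}_i^*(z_p)\|_1 \le L_r L_p \epsilon$, so the total discretization error is at most $L_r L_p \epsilon T$. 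Crucially, $a_p^\star$ is feasible, since $z_p$ is itself a revealed context and the constraints are context-independent, whence $g_m(a_p^\star)\le 0$ and $a_p^\star$ is awake under $\text{LCB}_m^t$; this makes the per-ball comparator admissible for the sleeping-experts bound.

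Next, within each ball the selection rule is exactly general AdaNormalHedge run on the sub-sequence of its rounds, so Theorem~\ref{thm:pseudo_ada} bounds its regret against $a_p^\star$ by $\sqrt{3T_p(\log K+\log B+\log(1+\log K))}$, where $T_p$ counts the rounds in ball $p$ and $\sum_p T_p = T$. Summing over balls and applying Cauchy--Schwarz as in the finite-context proof yields $\sum_p \sqrt{3T_p(\cdots)} \le \sqrt{3PT(\log K + \log B + \log(1+\log K))}$. I would then bound the number of balls by a packing argument: the greedy centers are pairwise more than $\epsilon$ apart, hence form an $\epsilon$-packing of $\mathcal{Z}\subseteq[0,1]^d$, giving $P = \mathcal{O}(\epsilon^{-d})$. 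The Hoeffding--Azuma step \eqref{eq:finite_Z_hoeffding} transfers unchanged and contributes $\sqrt{T/2\log(2/\delta)}$, while the at most $P$ ball-creation rounds, on each of which the optimistic surrogate is trivially at most $1$, contribute only a lower-order $\mathcal{O}(P)$ term.

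Collecting everything, the regret is $\mathcal{O}(L_r L_p \epsilon T + \epsilon^{-d/2}\sqrt{T(\log K + \cdots)} + \sqrt{T\log(2/\delta)} + \beta_0^T\sqrt{T\gamma_0^T})$. The main obstacle, and the only place real choices are made, is the trade-off in $\epsilon$: the discretization error $L_r L_p \epsilon T$ grows in $\epsilon$ while the packing-inflated sleeping regret $\epsilon^{-d/2}\sqrt{T}$ shrinks. Balancing the two by setting $\epsilon \asymp (L_r L_p)^{-2/(d+2)} T^{-1/(d+2)}$ makes both scale as $(L_r L_p)^{d/(d+2)} T^{(d+1)/(d+2)}\sqrt{\log K + \log B + \log(1+\log K)}$, which is precisely the leading term in the claim. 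A final union bound over the events of Lemma~\ref{lem:union_confidence} and \eqref{eq:finite_Z_hoeffding}, identical to the end of the finite-context proof, delivers the stated $1-\delta$ probability.
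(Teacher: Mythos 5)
Your proposal is correct and follows essentially the same route as the paper's proof: the same reuse of Lemmas~\ref{lem:union_confidence}--\ref{lem:cum_violation_bound} for the violation bound, the same decomposition of the regret into a Lipschitz discretization term $L_rL_pT\epsilon$ against the ball-center comparator $\overline{\pi}_i^*(z)$ (feasible because $z$ is a revealed context and the constraints are context-independent) plus a per-ball sleeping-experts term bounded via Theorem~\ref{thm:pseudo_ada}, Cauchy--Schwarz, the packing bound $|\mathcal{C}|\leq(1/\epsilon)^d$, and the choice $\epsilon\asymp(L_rL_p)^{-2/(d+2)}T^{-1/(d+2)}$. The only cosmetic difference is that you peel off the GP confidence term before the ball decomposition rather than after, which does not change the argument.
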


In repeated contextual games, c.GPMW \cite{sessa21} algorithm also considers infinite context spaces and achieves similar contextual regret bound for this setting, namely, {\small$\mathcal{O}((L_r L_p)^{\frac{d}{d+2}} T^{\frac{d+1}{d+2}}\sqrt{\log K}+\sqrt{T\log(2/\delta)}+\beta_0^T\sqrt{T\gamma_0^T})$}. In the following, we provide a proof for Theorem \ref{thm:infinite_Z}.



\begin{proof}
    
Lemmas \ref{lem:union_confidence}-\ref{lem:cum_violation_bound} in Appendix \ref{app:finite_Z} remain valid for infinite context space $\mathcal{Z}$ and context-independent constraints $g_m:\mathcal{A}_i\rightarrow \mathbb{R}$, where Assumption \ref{ass:feasibility_context} implies $g_m(\overline{\pi}_i^*(z^t))\leq 0$ for all $m\in[M]$. Thus, to prove Theorem \ref{thm:infinite_Z} it remains to show an upper bound on the regret. We leverage that $|\mathcal{C}|$ separate distributions for all $\epsilon$-close contexts $z\in\mathcal{Z}$ are maintained, where $|\mathcal{C}|$ is the total number of $L_1$-balls created up to round $T$.  Thus, similarly to the proof of Theorem \ref{thm:finite_Z}, we analyze the regret obtained by each $L_1$-ball separately. For each $L_1$-ball (fixed $z\in\mathcal{C}$), the proof further follows by decomposing the regret into the sum of two terms. The first term corresponds to the regret that a player incurs by considering the same distribution for $\epsilon$-close contexts rather than a separate distribution for each context $z\in\mathcal{Z}$. However, this term can be upper-bounded by making use of Lipschitzness (Assumption \ref{ass:lipschitz}). The second term can be bounded using the same proof techniques as in Theorem \ref{thm:finite_Z}. The theorem then follows from the same standard probability arguments as in the proof of Theorem \ref{thm:finite_Z}. 

\begin{lemma}\label{lem:regret_bound_infinite_Z}
Conditioned on equation \eqref{eq:simultaneous_UCB_LCB} in Lemma \ref{lem:union_confidence} holding true, the constrained contextual regret $R^T$ is upper-bounded with probability $1-\frac{\delta}{2}$ by:
{\small
\begin{align*}
    R^T=\mathcal{O}\bigg((L_r L_p)^{\frac{d}{d+2}} T^{\frac{d+1}{d+2}}\sqrt{\log(K)+\log(B)+\log(1+\log(K))}+\sqrt{T/2\log(2/\delta)}+2\beta_0^T\sqrt{T\gamma_0^T}\bigg).
\end{align*}}
\end{lemma}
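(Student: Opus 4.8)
The plan is to mirror the proof of Lemma \ref{lem:regret_bound_finite_Z}, replacing the per-context decomposition by a per-ball decomposition and paying an extra discretization price controlled by Lipschitzness. Conditioned on \eqref{eq:simultaneous_UCB_LCB}, the regret equals $R^T=\sum_{t=1}^T r(\overline{\pi}_i^*(z^t),a_{-i}^t,z^t)-r(a_i^t,a_{-i}^t,z^t)$. For each round $t$ let $\overline{z}^t\in\mathcal{C}$ be the centre of the $L_1$-ball to which $z^t$ is assigned by Algorithm \ref{alg:strategy_infinite_Z}. I would add and subtract $r(\overline{\pi}_i^*(\overline{z}^t),a_{-i}^t,z^t)$ in each summand, splitting $R^T$ into a \emph{discretization term} $\sum_{t=1}^T \big(r(\overline{\pi}_i^*(z^t),a_{-i}^t,z^t)-r(\overline{\pi}_i^*(\overline{z}^t),a_{-i}^t,z^t)\big)$ and a \emph{per-ball regret} $\sum_{t=1}^T \big(r(\overline{\pi}_i^*(\overline{z}^t),a_{-i}^t,z^t)-r(a_i^t,a_{-i}^t,z^t)\big)$. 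The purpose of the split is that within a fixed ball the comparator action $\overline{\pi}_i^*(\overline{z}^t)$ is constant, so each ball becomes an honest sleeping-experts instance.

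For the discretization term, since $z^t$ is assigned to $\overline{z}^t$ we have $\|z^t-\overline{z}^t\|_1\le\epsilon$, and Assumption \ref{ass:lipschitz} (Lipschitzness of $r$ in its first argument and of $\overline{\pi}_i^*$) gives $|r(\overline{\pi}_i^*(z^t),a_{-i}^t,z^t)-r(\overline{\pi}_i^*(\overline{z}^t),a_{-i}^t,z^t)|\le L_rL_p\epsilon$, so this term is at most $L_rL_p\epsilon T$. For the per-ball regret I would repeat the chain \eqref{eq:finite_Z,UCB,LCB}--\eqref{eq:finite_Z,beta,bound}: replace the comparator reward by $\min\{1,\text{UCB}_0^t(\overline{\pi}_i^*(\overline{z}^t),\cdot)\}$ and the played reward by $\text{LCB}_0^t(a_i^t,\cdot)$ via \eqref{eq:simultaneous_UCB_LCB}, peel off the width $2\beta_0^t\sigma_0^{t-1}$ and bound its sum by $2\beta_0^T\sqrt{T\gamma_0^T}$ via \cite[Lemma~5.4]{srinivas2009gaussian}, and use Hoeffding--Azuma exactly as in \eqref{eq:finite_Z_hoeffding} (the event holding with probability $1-\delta/2$) to pass to $\overline{p}^t(z^t)$. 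Grouping the remaining terms by ball, for each centre $c\in\mathcal{C}$ the inner sum is the regret of general AdaNormalHedge on the rounds assigned to $c$ against the fixed expert $\overline{\pi}_i^*(c)$; since the constraints are context-independent and $g_m(\overline{\pi}_i^*(c))\le 0$, \eqref{eq:simultaneous_UCB_LCB} forces $\overline{\pi}_i^*(c)$ to be awake at every such round, so Theorem \ref{thm:pseudo_ada} bounds it by $\sqrt{3T_c(\log K+\log B+\log(1+\log K))}$ with $T_c$ the number of rounds in ball $c$. Summing over balls and applying Cauchy--Schwarz (as in the last display of Lemma \ref{lem:regret_bound_finite_Z}) yields $\sqrt{3|\mathcal{C}|\,T(\log K+\log B+\log(1+\log K))}$, and a packing/volume argument for $\epsilon$-separated centres in $[0,1]^d$ under $\|\cdot\|_1$ gives $|\mathcal{C}|=\mathcal{O}(\epsilon^{-d})$.

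Combining the pieces, $R^T=\mathcal{O}\big(L_rL_p\epsilon T+\sqrt{\epsilon^{-d}T(\log K+\log B+\log(1+\log K))}+\sqrt{T/2\log(2/\delta)}+2\beta_0^T\sqrt{T\gamma_0^T}\big)$, and I would finish by choosing $\epsilon=\Theta\big((L_rL_p)^{-2/(d+2)}T^{-1/(d+2)}\big)$ to balance the first two terms, each becoming $\Theta\big((L_rL_p)^{d/(d+2)}T^{(d+1)/(d+2)}\big)$ up to the $\sqrt{\log K+\log B+\log(1+\log K)}$ factor, reproducing the claimed rate; the probability $1-\delta/2$ and the union bound of Theorem \ref{thm:infinite_Z} follow as in the proof of Theorem \ref{thm:finite_Z}. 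The main obstacle is the per-ball bookkeeping: I must ensure the fixed comparator $\overline{\pi}_i^*(\overline{z}^t)$ remains awake throughout its ball (precisely where context-independence of the constraints is used) so that Theorem \ref{thm:pseudo_ada} applies, and I must tune $\epsilon$ so that the discretization loss $L_rL_p\epsilon T$ trades off against the $\sqrt{\epsilon^{-d}T}$ term to produce the exact $T^{(d+1)/(d+2)}$ scaling, which is the delicate quantitative step.
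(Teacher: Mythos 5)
Your proposal is correct and follows essentially the same route as the paper's proof: the identical add-and-subtract decomposition into a Lipschitz discretization term bounded by $L_rL_p\epsilon T$ and a per-ball sleeping-experts regret handled via the chain from Lemma \ref{lem:regret_bound_finite_Z}, Theorem \ref{thm:pseudo_ada}, Cauchy--Schwarz with $|\mathcal{C}|\le(1/\epsilon)^d$, and the same choice $\epsilon=(L_rL_p)^{-2/(d+2)}T^{-1/(d+2)}$. You also correctly identify the one subtle point—that context-independence of the constraints keeps the fixed comparator $\overline{\pi}_i^*(\overline{z}^t)$ awake on every round of its ball—which is exactly the argument the paper makes.
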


\begin{proof}

Note that for infinite (large) context space $\mathcal{Z}$, c.z.AdaNormalGP builds an $\epsilon$-net of the context space by creating new $L_1$-balls in a greedy fashion to compute $p^t(z^t)$ (Algorithm \ref{alg:strategy_infinite_Z}). Thus, after $T$ game rounds, the $\epsilon$-net consists of the set of contexts $z\in\mathcal{Z}$, denoted by $\mathcal{C}$, that form the centers of the balls created so far. Moreover, the variable $\overline{z}^t$ indicates the center of the ball that context $z^t$ belongs to. Thus, the regret can be rewritten as:
    \begin{align*}
        R^T&=\sum_{z\in\mathcal{C}}\sum_{t:\overline{z}^t=z} r(\overline{\pi}_i^*(z^t),a_{-i}^t,z^t)-r(a_i^t,a_{-i}^t,z^t)\\
        &=\underbrace{\sum_{z\in\mathcal{C}}\sum_{t:\overline{z}^t=z} r(\overline{\pi}_i^*(z^t),a_{-i}^t,z^t)-  r(\overline{\pi}_i^*(z),a_{-i}^t,z^t) }_{L-R^T}+\underbrace{\sum_{z\in\mathcal{C}}\sum_{t:\overline{z}^t=z}r(\overline{\pi}_i^*(z),a_{-i}^t,z^t)-r(a_i^t,a_{-i}^t,z^t)}_{R-R^T},
    \end{align*}
    where in the last equality we add and subtract the term $r(\overline{\pi}_i^*(z),a_{-i}^t,z^t)$.

    The first term $L-R^T$ can be bounded by making use of Assumption \ref{ass:lipschitz}, namely, the $L_r$-Lipschitzness of $r(\cdot)$ in its first argument and $L_p$-Lipschitzness of the the optimal feasible policy $\overline{\pi}_i^*(\cdot)$:
    \begin{align}\label{eq:lipschitz_bound}
        L-R^T\leq \sum_{z\in\mathcal{C}}\sum_{t:\overline{z}^t=z} L_r\|\overline{\pi}_i^*(z^t)-\overline{\pi}_i^*(z)\|_1
        \leq \sum_{z\in\mathcal{C}}\sum_{t:\overline{z}^t=z} L_r L_p\|z^t-z\|_1\leq L_r L_p T\epsilon
    \end{align}
    where we used that $\|z^t-z\|_1\leq\epsilon$ by definition of the $\epsilon$-net (line $3$ and $6$ Algorithm \ref{alg:strategy_infinite_Z}).

    Next, we proceed with bounding $R-R^T$. 
The reward and thus $R-R^T$ can be further upper bounded by the same arguments and steps (see equation \eqref{eq:finite_Z,UCB,LCB} and \eqref{eq:finite_Z,beta,bound}) as in the proof of Lemma \ref{lem:regret_bound_finite_Z}:
    \begin{align}
        R-R^T&\leq \sum_{z\in\mathcal{C}}\sum_{t:\overline{z}^t=z}r(\overline{\pi}_i^*(z),a_{-i}^t,z^t)-r(a_i^t,a_{-i}^t,z^t)\nonumber\\
        &\leq \sum_{z\in\mathcal{C}}\sum_{t:\overline{z}^t=z} (\min\{1,\text{UCB}_0^t(\overline{\pi}_i^*(z),a_{-i}^t,z^t)\}- \min\{1,\text{UCB}_0^t(a_i^t,a_{-i}^t,z^t)\})+C_1\beta_0^T\sqrt{T\gamma_0^T}\label{eq:beta,bound},
    \end{align}
    where $C_1=\frac{8}{\log(1+\sigma^{-2})}$.

    The Hoeffding-Azuma inequality \cite[Lemma~A.7]{cesa-bianchi06} in \eqref{eq:finite_Z_hoeffding} remains applicable in the case where the context set $\mathcal{Z}$ is infinite. Thus, with probability at least $1-\delta/2$ we obtain the following upper bound on $R-R^T$:
    \begin{align}
        R-R^T&\leq  \sum_{z\in\mathcal{C}}\sum_{t:\overline{z}^t=z} \big(\min\{1,\text{UCB}_0^t(\overline{\pi}_i^*(z),a_{-i}^t,z^t)\}- \sum_{a_i\in\mathcal{A}_i} \overline{p}_{a_i}^t(z^t)\min\{1,\text{UCB}_0^t(a_i,a_{-i}^t,z^t)\}\big)\label{eq:AdaNormalHedgeRegret}\\
        &+\sqrt{T/2\log(2/\delta)}+C_1\beta_0^T\sqrt{T\gamma_0^T}\nonumber.
    \end{align}
    
    Define the function $f^t(\cdot)=\min\{1,\text{UCB}_0^t(\cdot,a_{-i}^t,z^t)\}$. Since we conditioned on equation \eqref{eq:simultaneous_UCB_LCB} in Lemma \ref{lem:union_confidence} holding true, $\text{UCB}_0^t(\cdot)\geq 0$ since $r(\cdot)\geq 0$ and thus $f^t(\cdot)\in[0,1]^K$. Similarly to equation \eqref{eq:finite_Z_normal_regret} and \eqref{eq:finite_Z_sleeping_regret}, for a fixed $z\in\mathcal{C}$, the first term in equation \eqref{eq:AdaNormalHedgeRegret} can be rewritten as:
    \begin{align}
    &\sum_{t:\overline{z}^t=z} \min\{1,\text{UCB}_0^t(\overline{\pi}_i^*(z),a_{-i}^t,z^t)\}- \sum_{a_i\in\mathcal{A}_i}\overline{p}_{a_i}^t(z^t)\min\{1,\text{UCB}_0^t(a_i,a_{-i}^t,z^t)\}\nonumber\\
        =& \sum_{t:\overline{z}^t=z} f^t(\overline{\pi}_i^*(z))-\sum_{a_i\in\mathcal{A}_i}\overline{p}_{a_i}^t(z^t)f^t(a_i^t)\label{eq:normal_regret}\\
        =& \sum_{t:\overline{z}^t=z} \mathbbm{1}_{\overline{\pi}_i^*(z),t}\bigg(f^t(\overline{\pi}_i^*(z))-\sum_{a_i\in\mathcal{A}_i}\overline{p}_{a_i}^t(z^t)f^t(a_i^t)\bigg),\label{eq:sleeping_regret}
    \end{align}
    where for all $a_i\in\mathcal{A}_i$ we define:
    \begin{align}\label{eq:indicator_sleeping}
        \begin{split}
            \mathbbm{1}_{a_i,t}:=\begin{cases}
			1, & \text{if $\text{LCB}_m^t(a_i)\leq 0$ for every $m\in[M],$}\\
            0, & \text{otherwise}.
		 \end{cases}
        \end{split}
    \end{align}
    Since we conditioned on equation \eqref{eq:simultaneous_UCB_LCB} in Lemma \ref{lem:union_confidence} holding true, $\mathbbm{1}_{\overline{\pi}_i^*(z),t}$ equals $1$ for all $t\geq 1$ since $z\in\{z^1,\cdots,z^T\}$ and thus $\text{LCB}_m^t(\overline{\pi}_i^*(z))\leq g_m(\overline{\pi}_i^*(z))\leq 0$. Therefore, equation \eqref{eq:normal_regret} and equation \eqref{eq:sleeping_regret} are equivalent. Observe that equation \eqref{eq:sleeping_regret} is precisely the regret with respect to $\overline{\pi}_i^*(z)$ which a player with reward functions $f_i^t(\cdot)\in[0, 1]$ incurs in a sleeping experts problem after $T_z=\sum_{t=1}^T 1_{\{\overline{z}^t=z
    \}}$ repetitions of the game. Here $T_z$ denotes the number of times the revealed context $z^t$ belonged to the ball centered at $z\in\mathcal{C}$. Note, furthermore, that for each context $z\in\mathcal{C}$ a distribution $p^t(z)$ is maintained which is updated whenever context $z^t$ belongs to the ball with center $z$. The update rule for each distribution (Line $7-8$ in Algorithm \ref{alg:strategy_infinite_Z}) corresponds exactly to that of general AdaNormalHedge proposed in \cite{luo14a}. Thus, at each round $t$, action $a_i^t$ is chosen according to general AdaNormalHedge with the set of sleeping experts $[\mathbbm{1}_{a_i,t}]_{a_i\in\mathcal{A}_i}$ defined as in equation \eqref{eq:indicator_sleeping} and which receives the full information feedback $\hat{r}^t=[f^t(a_i)]_{a_i\in\mathcal{A}_i}$.

    In summary, equation \eqref{eq:sleeping_regret} corresponds to the regret incurred by the general AdaNormalHedge algorithm with respect to expert $\overline{\pi}_i^*(z)\in\mathcal{A}_i$ after $T_z$ repetitions and can therefore be upper-bounded by Theorem \ref{thm:pseudo_ada} as follows:
    \begin{align}\label{eq:sleeping_regret_bound}
        \sum_{t:\overline{z}^t=z} \mathbbm{1}_{\overline{\pi}_i^*(z),t}(f^t(\overline{\pi}_i^*(z))-f^t(a_i^t))\leq\sqrt{3T_z(\log(K)+\log(B)+\log(1+\log(K)))},
    \end{align}
    where $B=\leq \frac{5}{2}+\frac{3}{2}\log(1+T_z)$. 
    
    Now summing over all the contexts $z\in\mathcal{C}$ and using the Cauchy-Schwarz inequality, we can further upper-bound equation \eqref{eq:AdaNormalHedgeRegret} with probability at least $1-\delta/2$:
    \begin{align}
        R-R^T&\leq\sum_{z\in\mathcal{C}}\sum_{t:\overline{z}^t=z} \min\{1,\text{UCB}_0^t(\overline{\pi}_i^*(z),a_{-i}^t,z^t)\}- \sum_{a_i\in\mathcal{A}_i}\overline{p}_{a_i}^t(z^t)\min\{1,\text{UCB}_0^t(a_i,a_{-i}^t,z^t)\}\nonumber\\
        &+\sqrt{T/2\log(2/\delta)}+C_1\beta_0^T\sqrt{T\gamma_0^T}\nonumber\\
    &\leq \sum_{z\in\mathcal{C}}\sqrt{3T_z(\log(K)+\log(B)+\log(1+\log(K)))}+\sqrt{T/2\log(2/\delta)}+C_1\beta_0^T\sqrt{T\gamma_0^T}\nonumber\\
    &\leq \sqrt{|\mathcal{C}|\sum_{z\in\mathcal{C}}3T_z(\log(K)+\log(B)+\log(1+\log(K)))}+\sqrt{T/2\log(2/\delta)}+C_1\beta_0^T\sqrt{T\gamma_0^T}\nonumber\\
    &=\sqrt{|\mathcal{C}|3T(\log(K)+\log(B)+\log(1+\log(K)))}+\sqrt{T/2\log(2/\delta)}+C_1\beta_0^T\sqrt{T\gamma_0^T}\nonumber\\
    &\leq \epsilon^{-d/2}\sqrt{3T(\log(K)+\log(B)+\log(1+\log(K)))}+\sqrt{T/2\log(2/\delta)}+C_1\beta_0^T\sqrt{T\gamma_0^T}.\label{eq:R-R_i^T}
\end{align}
In the last inequality we used that $\mathcal{C}\leq (1/\epsilon)^d$, because the contexts space $\mathcal{Z}\subseteq[0,1]^d$ can be covered by at most $(1/\epsilon)^d$ balls of radius $\epsilon$ such that the distance between their centers is at least $\epsilon$ \cite{clakson05}.

Combining the bounds of $L-R^T$ in \eqref{eq:lipschitz_bound} and $R-R^T$ in \eqref{eq:R-R_i^T}, with probability at least $1-\delta/2$ the regret is bounded by:
\begin{align*}
    R^T&= L-R^T+R-R^T\\
    &\leq L_r L_p T\epsilon+\epsilon^{-d/2}\sqrt{3T(\log(K)+\log(B)+\log(1+\log(K)))}\\
&+\sqrt{T/2\log(2/\delta)}+C_1\beta_0^T\sqrt{T\gamma_0^T}\\
    &=(L_r L_p)^{\frac{d}{d+2}} T^{\frac{d+1}{d+2}}(1+\sqrt{3(\log(K)+\log(B)+\log(1+\log(K)))})\\
    &+\sqrt{T/2\log(2/\delta)}+C_1\beta_0^T\sqrt{T\gamma_0^T},
\end{align*}
where in the last equality $\epsilon$ is set as $(L_r L_p)^{-\frac{2}{d+2}}T^{-\frac{1}{d+2}}$.
\end{proof}
\end{proof}

\subsection{Supplementary Material for Remark \ref{rem:general_bounds}}\label{app:general_regret_bounds}

In this section, we show that we can obtain high probability bounds on the regret and the cumulative constraint violations for Algorithm \ref{alg:contextual_constrainedGPHedge} presented in Section \ref{sec:c.z.AdaNormalGP}, when other computation rules of $p^t(z^t)$ (line $6$ of Algorithm \ref{alg:contextual_constrainedGPHedge}) than the update rule of AdaNormalHedge \cite{luo15} are used. Recall that once $p^t(z^t)$ is specified, Algorithm \ref{alg:contextual_constrainedGPHedge} is well-defined.

\subsubsection{Expert algorithm bounds for finite context space}\label{app:general_regret_bounds_finite_Z}

In this subsection, we present general regret bounds that depend on the specified computation rule $p^t(z^t)$ for the case when the context space $\mathcal{Z}$ is finite. As in Section \ref{sec:finite_context}, the distribution $p^t(z^t)$ is computed by maintaining a distribution for each context $z\in\mathcal{Z}$. For a fixed context $z\in\mathcal{Z}$, however, any update rule that corresponds to an update rule of a sleeping expert problem with regret bounds can be used to obtain a high probability regret bound. This follows from the connection between playing a repeated game with unknown constraints and playing a sleeping expert problem. In particular, since any regular expert algorithm such as Hedge \cite{freund95} can be reformulated as a sleeping expert algorithm \cite{freund1997} with the same regret bound, for a fixed context $z\in\mathcal{Z}$, any update rule that corresponds to the reformulated update rule of an expert problem can be used to obtain a high probability regret bound. This bound then depends on the regret obtained by the expert algorithm. Algorithm \ref{alg:c.z.no-regret_finite_Z} provides a computation rule for $p^t(z^t)$ which is based on the reformulation of the update rule of any no-regret expert algorithm $\mathcal{E}$ to obtain a sleeping expert algorithm. The bounds on the cumulative constraint violations obtained in Theorem \ref{thm:finite_Z} are independent of the update rule and therefore remain valid. 

\begin{theorem}\label{thm:general_algo_finite_Z}
    Fix $\delta\in(0,1)$. Under the same assumptions as in Theorem \ref{thm:finite_Z}, if a player plays according to Algorithm \ref{alg:contextual_constrainedGPHedge} with {\small$p^t(z^t)$} computed according to Algorithm \ref{alg:c.z.no-regret_finite_Z} then with probability at least $1-\delta$:
    \begin{align*}
    &R^T=\mathcal{O}\bigg(\sqrt{|\mathcal{Z}|\sum_{z\in\mathcal{Z}}(R^{T_z}(\mathcal{E}))^2}+\sqrt{T\log(2/\delta)}+\beta_0^T\sqrt{T\gamma_0^T}\bigg)\\
    &\mathcal{V}_{m}^T=\mathcal{O}\bigg(\beta_m^T\sqrt{T\gamma_m^T}\bigg),\hspace{0,3cm}\forall m\in[M],
    \end{align*}
    where $R^{T_z}(\mathcal{E})$ is the regret obtained by an expert algorithm $\mathcal{E}$ after $T_z$ rounds and {\small$T_z=\sum_{t=1}^T\mathbbm{1}_{\{z^t=z\}}$}. We denote Algorithm \ref{alg:contextual_constrainedGPHedge} with {\small$p^t(z^t)$} computed according to Algorithm \ref{alg:c.z.no-regret_finite_Z} as c.z.$\mathcal{E}$GP.
\end{theorem}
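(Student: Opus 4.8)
The plan is to mirror the proof of Theorem~\ref{thm:finite_Z} almost verbatim, replacing only the single step where the AdaNormalHedge-specific regret bound (Theorem~\ref{thm:pseudo_ada}) was invoked. The key observation is that Lemmas~\ref{lem:union_confidence}--\ref{lem:cum_violation_bound} make no reference whatsoever to how $p^t(z^t)$ is computed: they only use the confidence-bound construction, the feasibility assumption, and the fact that sampled actions satisfy $\text{LCB}_m^t(a_i^t,z^t)\le 0$. Since Algorithm~\ref{alg:c.z.no-regret_finite_Z} still samples from a renormalized distribution supported on $\text{LCB}$-feasible actions, all three lemmas carry over unchanged. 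In particular, the cumulative-violation bound $\mathcal{V}_m^T=\mathcal{O}(\beta_m^T\sqrt{T\gamma_m^T})$ is update-rule-independent and follows immediately.

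First I would reproduce the regret decomposition of Lemma~\ref{lem:regret_bound_finite_Z} up to equation~\eqref{eq:finite_Z_sleeping_regret}: conditioning on the good event \eqref{eq:simultaneous_UCB_LCB}, the regret splits into the per-context sleeping-expert regret of the optimistic losses $f^t(\cdot)=\min\{1,\text{UCB}_0^t(\cdot,a_{-i}^t,z^t)\}$, plus the Hoeffding--Azuma martingale term $\sqrt{T/2\log(2/\delta)}$, plus the estimation term $C_1\beta_0^T\sqrt{T\gamma_0^T}$ coming from Lemma~5.4 of \cite{srinivas2009gaussian}. These latter two terms are again independent of the update rule, so they contribute exactly as before. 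The only place the analysis must change is the bound on the per-context quantity in \eqref{eq:finite_Z_sleeping_regret}. Instead of plugging in the AdaNormalHedge bound, I would invoke the reduction of \cite{freund1997}: the sleeping-expert regret of the reformulated update equals the standard-expert regret $R^{T_z}(\mathcal{E})$ that the base algorithm $\mathcal{E}$ incurs on the reduced instance after $T_z$ rounds. Thus for each fixed $z\in\mathcal{Z}$,
\begin{align*}
    \sum_{t:z^t=z}\mathbbm{1}_{\overline{\pi}_i^*(z^t),t}\Big(f^t(\overline{\pi}_i^*(z^t))-\sum_{a_i\in\mathcal{A}_i}\overline{p}_{a_i}^t(z^t)f^t(a_i)\Big)\le R^{T_z}(\mathcal{E}),
\end{align*}
where I use that the indicator equals $1$ on the feasible optimal policy under the good event.

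Finally I would sum over contexts and apply Cauchy--Schwarz exactly as in Lemma~\ref{lem:regret_bound_finite_Z}:
\begin{align*}
    \sum_{z\in\mathcal{Z}}R^{T_z}(\mathcal{E})\le\sqrt{|\mathcal{Z}|\sum_{z\in\mathcal{Z}}\big(R^{T_z}(\mathcal{E})\big)^2},
\end{align*}
which yields the stated $\mathcal{O}\big(\sqrt{|\mathcal{Z}|\sum_{z\in\mathcal{Z}}(R^{T_z}(\mathcal{E}))^2}\big)$ leading term. Combining with the two update-rule-independent terms and taking the union bound over the two failure events (as in the closing probability calculation of Theorem~\ref{thm:finite_Z}) gives the $1-\delta$ guarantee. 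I expect the main obstacle to be purely expository rather than technical: one must state the reduction of \cite{freund1997} precisely enough that the sleeping-expert regret is genuinely controlled by $R^{T_z}(\mathcal{E})$ \emph{uniformly over the (adversarially chosen, optimistic) loss sequence} $f^t$, and verify that the base algorithm $\mathcal{E}$'s regret guarantee holds for losses in $[0,1]^K$ against an arbitrary fixed comparator. Since $\mathcal{E}$ is assumed to be a generic no-regret expert algorithm with a known bound $R^{T_z}(\mathcal{E})$, this amounts to checking that the reduction preserves the loss range and comparator class, which it does.
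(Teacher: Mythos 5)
Your proposal is correct and follows essentially the same route as the paper's proof: the violation bound and the UCB/LCB, Hoeffding--Azuma, and GP-estimation terms are carried over unchanged from Lemmas~\ref{lem:union_confidence}--\ref{lem:regret_bound_finite_Z}, and the only substantive change is replacing the AdaNormalHedge bound on the per-context sleeping-expert regret by the \cite{freund1997} reduction to the base algorithm's regret $R^{T_z}(\mathcal{E})$, followed by Cauchy--Schwarz over contexts. The expository step you flag is exactly what the paper makes precise via the modified reward $\hat{r}^t$ in Algorithm~\ref{alg:c.z.no-regret_finite_Z} (asleep experts receive the expected reward $\langle\overline{p}^t(z^t),\hat{r}^t\rangle$, so the sleeping-expert regret under $\overline{p}^t$ equals the regular-expert regret of $\mathcal{E}$ under $p^t$), and your identification of what must be checked there is accurate.
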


For example, when using the update rule of the well-known no-regret algorithm Hedge \cite{freund95} to compute $p^t(z^t)$ in Algorithm \ref{alg:c.z.no-regret_finite_Z}, we obtain the following bound on the regret.
\begin{cor}\label{cor:hedge_finite_Z}
    If the expert algorithm $\mathcal{E}$ corresponds to the Hedge algorithm with step size {\small$\eta^t=2\sqrt{\log K/\sum_{\tau=1}^t\mathbbm{1}_{\{z^\tau=z^t\}}}$} and prediction rule:

\begin{align*}
    p_{a_i}^{t+1}(z)=p_{a_i}^t(z)\exp(\eta^t\text{UCB}_0^t(a_i,a_{-i}^t,z^t)\mathbbm{1}_{\{z^t=z\}}, \quad\forall a_i\in\mathcal{A}_i,\ z\in\mathcal{Z},
\end{align*}

then playing according to the so-called c.z.HedgeGP algorithm\footnote{Here, we consider the case where $p^t(z^t)$ is computed according to Algorithm \ref{alg:c.z.no-regret_finite_Z}.} yields the following high-probability regret bound:
\begin{align*}
    R^T=\mathcal{O}(\sqrt{|\mathcal{Z}|T \log K}+\sqrt{T\log(2/\delta)}+\beta_0^T\sqrt{T\gamma_0^T}).
\end{align*} 

\end{cor}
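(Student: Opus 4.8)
The plan is to instantiate the general bound of Theorem~\ref{thm:general_algo_finite_Z} with the specific expert algorithm $\mathcal{E} = $ Hedge, and then carry out the required calculation of $R^{T_z}(\mathcal{E})$ for Hedge run over $T_z$ rounds. By Theorem~\ref{thm:general_algo_finite_Z}, the regret of c.z.HedgeGP is $\mathcal{O}\big(\sqrt{|\mathcal{Z}|\sum_{z\in\mathcal{Z}}(R^{T_z}(\mathcal{E}))^2}+\sqrt{T\log(2/\delta)}+\beta_0^T\sqrt{T\gamma_0^T}\big)$, so the only work left is to substitute the known regret guarantee of Hedge into the first term. Thus the proof reduces entirely to a clean computation, with no new probabilistic machinery needed.

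First I would recall the standard regret guarantee of Hedge with the anytime/adaptive step size $\eta^t=2\sqrt{\log K/\sum_{\tau=1}^t\mathbbm{1}_{\{z^\tau=z\}}}$: over $n$ rounds with losses (equivalently rewards via $\mathrm{UCB}_0^t(\cdot)\in[0,1]$) in $[0,1]$ and $K$ experts, Hedge achieves $R^n(\mathcal{E}) = \mathcal{O}(\sqrt{n\log K})$. Applying this with $n = T_z$ for each context $z\in\mathcal{Z}$ gives $R^{T_z}(\mathcal{E}) = \mathcal{O}(\sqrt{T_z\log K})$, hence $(R^{T_z}(\mathcal{E}))^2 = \mathcal{O}(T_z\log K)$.

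Next I would substitute this into the first term of the general bound. Since $\sum_{z\in\mathcal{Z}} T_z = T$ (each of the $T$ rounds reveals exactly one context), we get
\begin{align*}
\sqrt{|\mathcal{Z}|\sum_{z\in\mathcal{Z}}(R^{T_z}(\mathcal{E}))^2} = \mathcal{O}\bigg(\sqrt{|\mathcal{Z}|\log K\sum_{z\in\mathcal{Z}} T_z}\bigg) = \mathcal{O}\big(\sqrt{|\mathcal{Z}|\,T\log K}\big).
\end{align*}
Combining this with the two remaining (update-rule-independent) terms $\sqrt{T\log(2/\delta)}$ and $\beta_0^T\sqrt{T\gamma_0^T}$ inherited directly from Theorem~\ref{thm:general_algo_finite_Z} yields precisely the claimed bound $R^T=\mathcal{O}(\sqrt{|\mathcal{Z}|T\log K}+\sqrt{T\log(2/\delta)}+\beta_0^T\sqrt{T\gamma_0^T})$.

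The main obstacle — really the only point requiring any care — is verifying that the Hedge prediction rule stated in the corollary, together with its adaptive step size, genuinely matches the reformulated sleeping-expert update rule used inside Algorithm~\ref{alg:c.z.no-regret_finite_Z}, so that the abstract guarantee $R^{T_z}(\mathcal{E})=\mathcal{O}(\sqrt{T_z\log K})$ can legitimately be invoked. In particular one must confirm that restricting the update to rounds with $\mathbbm{1}_{\{z^t=z\}}$ and feeding the optimistic reward estimate $\mathrm{UCB}_0^t$ (clipped to $[0,1]$) as the gain vector is consistent with the full-information expert setting for which the standard Hedge bound holds; this is exactly the reduction already established in Remark~\ref{rem:general_bounds} and Theorem~\ref{thm:general_algo_finite_Z}, so it amounts to citing rather than re-deriving. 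Everything else is routine algebra.
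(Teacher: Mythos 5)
Your proposal is correct and follows essentially the same route as the paper: the paper's proof likewise invokes Theorem~\ref{thm:general_algo_finite_Z} together with the adaptive-step-size Hedge guarantee $R^{T_z}(\mathcal{E})\leq\sqrt{T_z\log K}$ (citing Proposition~1 of Mourtada and Ga\"iffas) and the identity $\sum_{z\in\mathcal{Z}}T_z=T$ to obtain $\sqrt{|\mathcal{Z}|T\log K}$ for the first term. Your additional remark about verifying that the stated prediction rule matches the reformulated sleeping-expert update is a reasonable point of care, and you correctly observe that it is already covered by the reduction in Theorem~\ref{thm:general_algo_finite_Z}.
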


\begin{proof}
    The proof follows from Theorem \ref{thm:general_algo_finite_Z} and \cite[Proposition~1]{mourtada2019} which states that with $\eta^t$ defined as in Corollary \ref{cor:hedge_finite_Z}, the regret of Hedge which receives full-information feedback $\textbf{r}^t\in[0,1]^K$ is bounded as follows $\sum_{t=1}^T \big(r^t(a_i^*)-\sum_{a_i\in\mathcal{A}_i}p_{a_i}^tr^t(a_i^t)\big)\leq\sqrt{T\log K}$.
\end{proof}

\begin{algorithm}[t]
\caption{Update rule for finite context space $\mathcal{Z}$}\label{alg:c.z.no-regret_finite_Z}
\hspace*{\algorithmicindent} \textbf{Input:} Full-information \textbf{no-regret algorithm $\mathcal{E}$}. Let $p^1(z)$ be the uniform distribution $\forall z\in\mathcal{Z}$.
\begin{algorithmic}[1]
    \For {\texttt{$t=2,\ldots,T$}}
        \State \texttt{Compute reward estimate $\hat{\textbf{r}}^t$ with
        \begin{align}
            \hat{r}^t(a_i)=\begin{cases}
        \min\{1,\text{UCB}_0^t(a_i,a_{-i}^t,z^t)\}, & \text{if $\text{LCB}_m^t(a_i,z^t)\leq 0$, $\forall m\in[M],$}\\
        \langle\overline{p}^t(z^t),\hat{r}^t\rangle, & \text{otherwise}.
        \end{cases}
        \end{align}}
        \State \texttt{Pass $\hat{\textbf{r}}^t\in\mathbb{R}^K$ and $p^t(z^t)\in\mathbb{R}^K$ to algorithm $\mathcal{E}$.}
        \State \texttt{Let $p^{t+1}(z^t)$ be the prediction of algorithm $\mathcal{E}$.}
    \EndFor
\end{algorithmic}
\end{algorithm}

\begin{proof}(of Theorem \ref{thm:general_algo_finite_Z}) Lemma \ref{lem:union_confidence}-\ref{lem:cum_violation_bound} in Appendix \ref{app:finite_Z} remain valid since the same assumptions as in Theorem \ref{thm:finite_Z} are made. Thus, to prove Theorem \ref{thm:general_algo_finite_Z} it is left to show that conditioned on equation \eqref{eq:simultaneous_UCB_LCB} in Lemma \ref{lem:union_confidence} holding true,  the regret $R^T$ is upper-bounded with probability at least $1-\frac{\delta}{2}$ by:
    \begin{align}\label{eq:general_regret}
        R^T\leq\sqrt{|\mathcal{Z}|\sum_{z\in\mathcal{Z}}(R^{T_z}(\mathcal{E}))^2}+\sqrt{T/2\log(2/\delta)}+C_1\beta_0^T\sqrt{T\gamma_0^T}.
    \end{align}
The theorem then follows from the same standard probability arguments as in the proof of Theorem \ref{thm:finite_Z}.

The main difference between Lemma 
\ref{lem:regret_bound_finite_Z} used to prove Theorem \ref{thm:finite_Z} and the statement in equation \eqref{eq:general_regret} is that the update rule $p^t(z^t)$ is computed according to Algorithm \ref{alg:strategy_finite_Z} for the former and according to Algorithm \ref{alg:c.z.no-regret_finite_Z} for the latter. Thus, to show the statement in equation \eqref{eq:general_regret}, any steps and arguments from Lemma \ref{lem:regret_bound_finite_Z} that are independent of the computation rule of $p^t(z^t)$ can be repeated.

Since c.z.$\mathcal{E}$GP (Algorithm \ref{alg:contextual_constrainedGPHedge} with $p^t(z^t)$ computed according to Algorithm \ref{alg:c.z.no-regret_finite_Z}) maintains a separate distribution $p^t(z)$ for each context $z\in\mathcal{Z}$ the regret notion can be rewritten as:
\begin{align*}
        R^T&=\sum_{z\in\mathcal{Z}}\sum_{t:z^t=z} r(\overline{\pi}_i^*(z^t),a_{-i}^t,z^t)-r(a_i^t,a_{-i}^t,z^t).
\end{align*}

The regret $R^T$ can be further upper-bounded by the same arguments and steps as in equations \eqref{eq:finite_Z,UCB,LCB}, \eqref{eq:finite_Z,beta,bound} and \eqref{eq:finite_Z_hoeffding}. Then, with probability at least {\small$1-\frac{\delta}{2}$} the following upper bound holds:
    \begin{align}
        R^T&\leq \sum_{z\in\mathcal{Z}}\sum_{t:z^t=z}r(\overline{\pi}_i^*(z^t),a_{-i}^t,z^t)-r(a_i^t,a_{-i}^t,z^t)\nonumber\\
        &\leq \sum_{z\in\mathcal{Z}}\sum_{t:z^t=z} (\min\{1,\text{UCB}_0^t(\overline{\pi}_i^*(z^t),a_{-i}^t,z^t)\}- \min\{1,\text{UCB}_0^t(a_i^t,a_{-i}^t,z^t)\})+C_1\beta_0^T\sqrt{T\gamma_0^T}\nonumber\\
        &\leq  \sum_{z\in\mathcal{Z}}\sum_{t:z^t=z} \big(\min\{1,\text{UCB}_0^t(\overline{\pi}_i^*(z^t),a_{-i}^t,z^t)\}- \sum_{a_i\in\mathcal{A}_i} \overline{p}_{a_i}^t(z^t)\min\{1,\text{UCB}_0^t(a_i,a_{-i}^t,z^t)\}\big)\label{eq:finite_Z_c.z.AGP}\\
        &+\sqrt{T/2\log(2/\delta)}+C_1\beta_0^T\sqrt{T\gamma_0^T}\nonumber.
    \end{align}
    where $C_1=\frac{8}{\log(1+\sigma^{-2})}$.

    Recall the function $\hat{\textbf{r}}^t$ (line $2$ of Algorithm \ref{alg:c.z.no-regret_finite_Z}) defined as:
     \begin{align}
            \hat{r}^t(a_i)=\begin{cases}
        \min\{1,\text{UCB}_0^t(a_i,a_{-i}^t,z^t)\}, & \text{if $\text{LCB}_m^t(a_i,z^t)\leq 0$, $\forall m\in[M],$}\\
        \langle\overline{p}^t(z^t),\hat{r}^t\rangle, & \text{otherwise}.
        \end{cases}
        \end{align}
    For a fixed $z\in\mathcal{Z}$, the first term in equation \eqref{eq:finite_Z_c.z.AGP} can be rewritten as:
    \begin{align}
    &\sum_{t:z^t=z} \min\{1,\text{UCB}_0^t(\overline{\pi}_i^*(z^t),a_{-i}^t,z^t)\}- \sum_{a_i\in\mathcal{A}_i}\overline{p}_{a_i}^t(z^t)\min\{1,\text{UCB}_0^t(a_i,a_{-i}^t,z^t)\}\label{eq:def_reformulated_r}\\
        =& \sum_{t:z^t=z} \hat{r}^t(\overline{\pi}_i^*(z^t))-\sum_{a_i\in\mathcal{A}_i}\overline{p}_{a_i}^t(z^t)\hat{r}^t(a_i)\label{eq:general_finite_Z_normal_regret}\\
        =& \sum_{t:z^t=z} \mathbbm{1}_{\overline{\pi}_i^*(z^t),t}\bigg(\hat{r}^t(\overline{\pi}_i^*(z^t))-\sum_{a_i\in\mathcal{A}_i}\overline{p}_{a_i}^t(z^t)\hat{r}^t(a_i)\bigg),\label{eq:general_finite_Z_sleeping_regret}\\
        =&\sum_{t:z^t=z} \bigg(\hat{r}^t(\overline{\pi}_i^*(z^t))-\sum_{a_i\in\mathcal{A}_i}p_{a_i}^t(z^t)\hat{r}^t(a_i)\bigg),\label{eq:conversion_sleeping_normal_regret}
    \end{align}
    where for all $a_i\in\mathcal{A}_i$ the indicator function $\mathbbm{1}_{a_i,t}$ is defined as in equation \eqref{eq:finite_Z_indicator_sleeping}

Equation \eqref{eq:general_finite_Z_normal_regret} follows by definition of $\hat{\textbf{r}}$ (line $2$ Algorithm \ref{alg:c.z.no-regret_finite_Z}). Concretely, since we conditioned on equation \eqref{eq:simultaneous_UCB_LCB} in Lemma \ref{lem:union_confidence} holding true, $\text{LCB}_m^t(\overline{\pi}_i^*(z^t),z^t)\leq 0$ for all $m\in[M]$ and thus {\small$\hat{r}^t(\overline{\pi}_i^*(z^t))=\min\{1,\text{UCB}_0^t(\overline{\pi}_i^*(z^t),a_{-i}^t,z^t)\}$}. Furthermore, note that $\overline{p}_{a_i}^t(z^t)=0$ if there exists an $m\in[M]$ such that $\text{LCB}_m^t(a_i,z^t)>0$ and in this case {\small$\overline{p}_{a_i}^t(z^t)\min\{1,\text{UCB}_0^t(a_i,a_{-i}^t,z^t)\}=\overline{p}_{a_i}^t(z^t)\hat{r}^t(a_i)=0$}. Since we conditioned on equation \eqref{eq:simultaneous_UCB_LCB} in Lemma \ref{lem:union_confidence} holding true, $\mathbbm{1}_{\overline{\pi}_i^*(z^t),t}$ equals $1$ for all $t\geq 1$ and therefore equation \eqref{eq:general_finite_Z_normal_regret} and equation \eqref{eq:general_finite_Z_sleeping_regret} are equivalent. Equation \eqref{eq:conversion_sleeping_normal_regret} follows from the fact that any expert problem can be reformulated as a sleeping expert problem and vice versa \cite{freund1997}. Specifically, if we are given a no-regret expert algorithm $\mathcal{E}$ with prediction $p^t(z^t)$ on round $t$ then the prediction of the corresponding sleeping expert algorithm is given by $\overline{p}_{a_i}^t(z^t)\propto \mathbbm{1}_{a_i,t} p_{a_i}^t(z^t)$ on round $t$. Furthermore, the reward vector $\hat{\textbf{r}}^t$ which is given to $\mathcal{E}$ as full-information feedback is designed such that:
\begin{align*}   
    \sum_{a_i:\mathbbm{1}_{a_i,t}=1}p_{a_i}^t(z^t)\hat{r}^t(a_i)+\sum_{a_i:\mathbbm{1}_{a_i,t}=0}p_{a_i}^t(z^t)\hat{r}^t(a_i)=\sum_{a_i:\mathbbm{1}_{a_i,t}=1}\overline{p}_{a_i}^t(z^t)\hat{r}^t(a_i).
\end{align*}
thus, by design the equivalence between equation \eqref{eq:general_finite_Z_sleeping_regret} and equation \eqref{eq:conversion_sleeping_normal_regret} follows. Note furthermore, that since we conditioned on equation \eqref{eq:simultaneous_UCB_LCB} in Lemma \ref{lem:union_confidence} holding true, $\text{UCB}_0^t(\cdot)\geq 0$ since $r(\cdot)\geq 0$ and thus $\hat{\textbf{r}}^t\in[0,1]^K$. Now observe that equation \eqref{eq:conversion_sleeping_normal_regret} is precisely the regret which a player with reward function $\hat{r}^t(\cdot)\in[0, 1]$ incurs in an expert problem after $T_z=\sum_{t=1}^T 1_{\{z^t=z\}}$ repetitions of the game. Here $T_z$ denotes the number of times context $z^t$ is revealed. The prediction $p^t(z^t)$ for each context $z\in\mathcal{Z}$ (Line $4$ in Algorithm \ref{alg:c.z.no-regret_finite_Z}) is given by the no-regret algorithm $\mathcal{E}$ which receives full-information feedback $\hat{\textbf{r}}^t=[\hat{r}^t(a_i)]_{a_i\in\mathcal{A}_i}$. Thus, equation \eqref{eq:conversion_sleeping_normal_regret} corresponds to the regret incurred by the algorithm $\mathcal{E}$ after $T_z$ repetitions, i.e., 
\begin{align}
    \sum_{t:z^t=z} \bigg(\hat{r}^t(\overline{\pi}_i^*(z^t))-\sum_{a_i\in\mathcal{A}_i}p_{a_i}^t(z^t)\hat{r}^t(a_i^t)\bigg)\leq R^{T_z}(\mathcal{E}).
\end{align}
    
    Now summing over all contexts $z\in\mathcal{Z}$ and using the Cauchy-Schwarz inequality, we can further upper-bound equation \eqref{eq:finite_Z_c.z.AGP} with probability at least $1-\delta/2$:
    \begin{align}
        R^T\leq&\sum_{z\in\mathcal{Z}}\sum_{t:z^t=z} \min\{1,\text{UCB}_0^t(\overline{\pi}_i^*(z^t),a_{-i}^t,z^t)\}- \sum_{a_i\in\mathcal{A}_i}\overline{p}_{a_i}^t(z^t)\min\{1,\text{UCB}_0^t(a_i,a_{-i}^t,z^t)\}\nonumber\\
        &+\sqrt{T/2\log(2/\delta)}+C_1\beta_0^T\sqrt{T\gamma_0^T}\nonumber\\
    &\leq \sum_{z\in\mathcal{Z}}R^{T_z}(\mathcal{E})+\sqrt{T/2\log(2/\delta)}+C_1\beta_0^T\sqrt{T\gamma_0^T}\nonumber\\
    &\leq \sqrt{|\mathcal{Z}|\sum_{z\in\mathcal{Z}}(R^{T_z}(\mathcal{E}))^2}+\sqrt{T/2\log(2/\delta)}+C_1\beta_0^T\sqrt{T\gamma_0^T}\nonumber
\end{align}
\end{proof}

\subsubsection{Expert algorithm bounds for infinite context space}\label{app:general_regret_bounds_infinite_Z}

In this subsection, we present general regret bounds that depend on the specified computation rule $p^t(z^t)$ for the case when the context space $\mathcal{Z}$ is infinite. As in Appendix \ref{sec:infinite_context}, the distribution $p^t(z^t)$ is computed by maintaining $|\mathcal{C}|$ separate distributions for all $\epsilon$-close contexts $z\in\mathcal{Z}$, where $|\mathcal{C}|$ is the total number of $L_1$-balls created up to round $T$. For each $L_1$-ball (fixed $z\in\mathcal{C}$), any update rule that corresponds to the reformulated update rule of an expert problem can be used to obtain a high probability regret bound  which then depends on the regret obtained by the expert algorithm.

\begin{theorem}\label{thm:general_algo_infinite_Z}
    Fix $\delta\in(0,1)$. Under the same assumptions as in Theorem \ref{thm:infinite_Z}, if a player plays according to Algorithm \ref{alg:contextual_constrainedGPHedge} with {\small$p^t(z^t)$} computed according to Algorithm \ref{alg:c.z.no-regret_infinite_Z}, then with probability at least $1-\delta$
    \begin{align*}
    &R^T=\mathcal{O}\bigg((L_r L_p)^{\frac{d}{d+2}} T^{\frac{d}{d+2}} \bigg(\sum_{z\in\mathcal{C}} (R^{T_z}(\mathcal{E}))^2\bigg)^{\frac{1}{d+2}}+\sqrt{T\log(2/\delta)}+\beta_0^T\sqrt{T\gamma_0^T}\bigg)\\
    &\mathcal{V}_{m}^T=\mathcal{O}\bigg(\beta_m^T\sqrt{T\gamma_m^T}\bigg),\hspace{0,3cm}\forall m\in[M],
    \end{align*}
    where $R^{T_z}(\mathcal{E})$ is the regret obtained by algorithm $\mathcal{E}$ after $T_z$ rounds and {\small$T_z=\sum_{t=1}^T\mathbbm{1}_{\{\overline{z}^t=z\}}$} is the number of times the revealed context $z^t$ belonged to a ball centered at $z\in\mathcal{Z}$.
\end{theorem}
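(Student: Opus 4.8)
The plan is to follow exactly the two-stage template that proved Theorem \ref{thm:infinite_Z}, but to replace its single AdaNormalHedge step with the generic sleeping-to-regular expert reduction used in Theorem \ref{thm:general_algo_finite_Z}. First I would observe that the feasibility and constraint parts are inherited verbatim: Lemmas \ref{lem:union_confidence}, \ref{lem:infeasibility_not_declared} and \ref{lem:cum_violation_bound} depend only on the confidence-bound construction and the sampling rule (line 7 of Algorithm \ref{alg:contextual_constrainedGPHedge}), neither of which changes when $p^t(z^t)$ is produced by Algorithm \ref{alg:c.z.no-regret_infinite_Z} rather than Algorithm \ref{alg:strategy_infinite_Z}. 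Since the same assumptions as in Theorem \ref{thm:infinite_Z} hold and context-independent constraints are assumed, the violation bound $\mathcal{V}_m^T=\mathcal{O}(\beta_m^T\sqrt{T\gamma_m^T})$ follows unchanged, so it remains only to bound the regret conditioned on the confidence event \eqref{eq:simultaneous_UCB_LCB}.

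Second, I would decompose the regret over the $L_1$-balls exactly as in Lemma \ref{lem:regret_bound_infinite_Z}, writing $R^T=(L\text{-}R^T)+(R\text{-}R^T)$ by adding and subtracting $r(\overline{\pi}_i^*(z),a_{-i}^t,z^t)$ for each ball center $z\in\mathcal{C}$. The discretization term $L\text{-}R^T$ is handled identically, bounded by $L_r L_p T\epsilon$ via the Lipschitz Assumption \ref{ass:lipschitz} and the defining property $\|z^t-z\|_1\le\epsilon$ of the $\epsilon$-net. For $R\text{-}R^T$ I would reuse the UCB/LCB sandwiching and Hoeffding-Azuma steps \eqref{eq:beta,bound}--\eqref{eq:AdaNormalHedgeRegret} to reduce, for each fixed $z\in\mathcal{C}$, to the regret of a sleeping-expert instance with feedback $\hat{r}^t(\cdot)=\min\{1,\text{UCB}_0^t(\cdot,a_{-i}^t,z^t)\}$; then, exactly as in \eqref{eq:general_finite_Z_sleeping_regret}--\eqref{eq:conversion_sleeping_normal_regret}, I would invoke the reduction of \cite{freund1997} so that this per-ball sleeping regret equals the regret of the underlying expert algorithm $\mathcal{E}$ on the $T_z$ rounds assigned to the ball, hence is at most $R^{T_z}(\mathcal{E})$.

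Third, I would sum over $z\in\mathcal{C}$, apply Cauchy-Schwarz to get $\sum_{z\in\mathcal{C}} R^{T_z}(\mathcal{E})\le\sqrt{|\mathcal{C}|\sum_{z\in\mathcal{C}}(R^{T_z}(\mathcal{E}))^2}$, and use the packing bound $|\mathcal{C}|\le(1/\epsilon)^d$ for $\mathcal{Z}\subseteq[0,1]^d$, so that $R\text{-}R^T\le\epsilon^{-d/2}\sqrt{\sum_{z}(R^{T_z}(\mathcal{E}))^2}+\sqrt{T/2\log(2/\delta)}+C_1\beta_0^T\sqrt{T\gamma_0^T}$. Combining with $L\text{-}R^T\le L_r L_p T\epsilon$ gives a bound of the form $L_r L_p T\epsilon+\epsilon^{-d/2}S$ with $S=\sqrt{\sum_{z}(R^{T_z}(\mathcal{E}))^2}$, which is minimized by choosing $\epsilon\propto(S/(L_r L_p T))^{2/(d+2)}$; substituting back yields $\mathcal{O}\big((L_r L_p)^{\frac{d}{d+2}} T^{\frac{d}{d+2}}(\sum_{z}(R^{T_z}(\mathcal{E}))^2)^{\frac{1}{d+2}}\big)$, as claimed. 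The final probability bookkeeping (intersecting the confidence event with the Hoeffding-Azuma event, each of probability $1-\delta/2$) is identical to the proof of Theorem \ref{thm:finite_Z}.

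The main delicate point is the tuning of $\epsilon$: although it formally balances the two terms, the choice is slightly circular, since the set $\mathcal{C}$ and the counts $T_z$ --- and hence $S$ itself --- depend on $\epsilon$. To make the choice rigorous I would fix $\epsilon$ using a worst-case sublinear bound on $\mathcal{E}$ (e.g. $R^{T_z}(\mathcal{E})\le C\sqrt{T_z}$, whence $\sum_{z}(R^{T_z}(\mathcal{E}))^2\le C^2 T$ and $\epsilon=(L_r L_p)^{-\frac{2}{d+2}}T^{-\frac{1}{d+2}}$ as in Theorem \ref{thm:infinite_Z}), and only then re-express the resulting rate in terms of $\sum_{z}(R^{T_z}(\mathcal{E}))^2$; I would flag this explicitly and verify that, for the AdaNormalHedge choice of $\mathcal{E}$, the two formulas coincide, confirming consistency with Theorem \ref{thm:infinite_Z}.
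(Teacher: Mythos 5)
Your proposal follows essentially the same route as the paper's proof: the violation bound is inherited from Lemmas \ref{lem:union_confidence}--\ref{lem:cum_violation_bound}, the regret is split into the Lipschitz discretization term $L\text{-}R^T\le L_rL_pT\epsilon$ and a per-ball expert-regret term bounded by $R^{T_z}(\mathcal{E})$ via the sleeping-expert reduction, and the two are balanced by Cauchy--Schwarz, the packing bound $|\mathcal{C}|\le(1/\epsilon)^d$, and the same choice of $\epsilon$. Your observation that the tuning of $\epsilon$ is circular --- since $\mathcal{C}$ and the counts $T_z$ themselves depend on $\epsilon$ --- is a legitimate subtlety that the paper's own proof does not acknowledge (it simply sets $\epsilon$ as a function of $\sum_{z\in\mathcal{C}}(R^{T_z}(\mathcal{E}))^2$), and your proposed resolution via a worst-case bound $R^{T_z}(\mathcal{E})\le C\sqrt{T_z}$ is a reasonable way to make that step rigorous.
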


For example, when using the update rule of the well-known no-regret algorithm Hedge \cite{freund95} to compute $p^t(z^t)$ in Algorithm \ref{alg:c.z.no-regret_infinite_Z}, we obtain the following bound on the regret.

\begin{cor}
    If algorithm $\mathcal{E}$ corresponds to the Hedge algorithm with step size {\small$\eta^t=2\sqrt{\log K/\sum_{\tau=1}^t\mathbbm{1}_{\{\overline{z}^t=\overline{z}^\tau\}}}$} and prediction rule:

\begin{align*}
    p_{a_i}^{t+1}(z)=p_{a_i}^t(z)\exp(\eta^t\text{UCB}_0^t(a_i,a_{-i}^t,z^t)\mathbbm{1}_{\{\overline{z}^t=z\}}, \quad\forall a_i\in\mathcal{A}_i,\ z\in\mathcal{Z},
\end{align*}

then playing according to the so-called c.z.HedgeGP algorithm\footnote{Here, we consider the case where $p^t(z^t)$ is computed according to Algorithm \ref{alg:c.z.no-regret_infinite_Z}.} yields the following high-probability regret bound:
\begin{align*}
    R^T=\mathcal{O}((L_r L_p)^{\frac{d}{d+2}} T^{\frac{d+1}{d+2}}\log(K)^{\frac{1}{d+2}}+\sqrt{T\log(2/\delta)}+\beta_0^T\sqrt{T\gamma_0^T}).
\end{align*} 

\end{cor}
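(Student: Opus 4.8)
The plan is to invoke Theorem~\ref{thm:general_algo_infinite_Z} and then substitute the per-ball regret of Hedge, exactly mirroring the proof of Corollary~\ref{cor:hedge_finite_Z} but adapted to the $\epsilon$-net decomposition. First I would observe that, for each fixed ball center $z\in\mathcal{C}$, the instance of $\mathcal{E}$ associated with $z$ is run only on those rounds $t$ with $\overline{z}^t=z$, of which there are $T_z=\sum_{t=1}^T\mathbbm{1}_{\{\overline{z}^t=z\}}$, and on each such round it receives full-information feedback $\hat{\mathbf{r}}^t\in[0,1]^K$ (by the argument in the proof of Theorem~\ref{thm:general_algo_infinite_Z}, conditioning on Lemma~\ref{lem:union_confidence} ensures $\hat{r}^t(\cdot)\geq 0$, hence $\hat{\mathbf{r}}^t\in[0,1]^K$). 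With the step size $\eta^t=2\sqrt{\log K/\sum_{\tau=1}^t\mathbbm{1}_{\{\overline{z}^t=\overline{z}^\tau\}}}$ and the stated exponential prediction rule, this is precisely the Hedge algorithm restricted to the subsequence of rounds landing in ball $z$, so by \cite[Proposition~1]{mourtada2019} its regret satisfies $R^{T_z}(\mathcal{E})\leq\sqrt{T_z\log K}$.

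Next I would plug this bound into the first term of the regret estimate from Theorem~\ref{thm:general_algo_infinite_Z}. Squaring and summing over balls gives
\begin{align*}
\sum_{z\in\mathcal{C}}(R^{T_z}(\mathcal{E}))^2\leq\sum_{z\in\mathcal{C}}T_z\log K=(\log K)\sum_{z\in\mathcal{C}}T_z=T\log K,
\end{align*}
where the last equality uses that every round is assigned to exactly one ball, so $\sum_{z\in\mathcal{C}}T_z=T$. Raising to the power $1/(d+2)$ yields $(T\log K)^{1/(d+2)}=T^{1/(d+2)}(\log K)^{1/(d+2)}$, and multiplying by the prefactor $(L_rL_p)^{d/(d+2)}T^{d/(d+2)}$ collapses the $T$-exponents to $T^{(d+1)/(d+2)}$, producing the leading term $(L_rL_p)^{d/(d+2)}T^{(d+1)/(d+2)}(\log K)^{1/(d+2)}$ of the claimed bound. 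The remaining two terms $\sqrt{T\log(2/\delta)}$ and $\beta_0^T\sqrt{T\gamma_0^T}$ carry over verbatim from Theorem~\ref{thm:general_algo_infinite_Z}, and the high-probability guarantee holds with probability at least $1-\delta$ by the same union bound used there.

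Since every ingredient is already in place, I do not expect a genuine obstacle; the only point requiring care is confirming that the time-varying step size $\eta^t$, which depends on the running count of rounds assigned to the \emph{current} ball, matches the hypothesis of \cite[Proposition~1]{mourtada2019} when $\mathcal{E}$ is instantiated separately on each ball's subsequence. Concretely, I would check that the Hedge instance for ball $z$ effectively sees the adaptive step size $2\sqrt{\log K/T_z^{(t)}}$ on its own local clock $T_z^{(t)}=\sum_{\tau=1}^t\mathbbm{1}_{\{\overline{z}^\tau=z\}}$, so that Proposition~1 applies unchanged to that subsequence. Once this correspondence is verified, the corollary is an immediate specialization of Theorem~\ref{thm:general_algo_infinite_Z}.
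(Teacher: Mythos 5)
Your proposal is correct and follows the same route as the paper, which proves the corollary by citing Theorem~\ref{thm:general_algo_infinite_Z} together with \cite[Proposition~1]{mourtada2019}; your added computation $\sum_{z\in\mathcal{C}}(R^{T_z}(\mathcal{E}))^2\leq(\log K)\sum_{z\in\mathcal{C}}T_z=T\log K$ and the resulting exponent bookkeeping $T^{d/(d+2)}\cdot T^{1/(d+2)}=T^{(d+1)/(d+2)}$ is exactly the intended (but unspelled-out) specialization. The point you flag about the per-ball local clock for the adaptive step size is the right thing to verify and is consistent with how the paper instantiates $\mathcal{E}$ separately on each ball's subsequence.
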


\begin{proof}
    The proof follows from Theorem \ref{thm:general_algo_infinite_Z} and \cite[Proposition~1]{mourtada2019}.
\end{proof}

\begin{algorithm}[t]
\caption{Update rule for infinite (large) context space $\mathcal{Z}$}\label{alg:c.z.no-regret_infinite_Z}
\hspace*{\algorithmicindent} \textbf{Input:} Full-information \textbf{no-regret algorithm $\mathcal{E}$}.\\
\hspace*{\algorithmicindent} \textbf{Set:} $\epsilon>0$ and $\mathcal{C}=\{z^1\}$ and let $p^1(z^1)$ be the uniform distribution.
\begin{algorithmic}[1]
    \For {\texttt{$t=2,\ldots,T$}}
        \State \texttt{Set $\overline{z}^t=\arg\min_{z\in\mathcal{C}}\|z^t-z\|_1$.}
        \If{$\|z^t-\overline{z}^t\|_1>\epsilon$}
            \State \texttt{Add $z^t$ to $\mathcal{C}$, set $\overline{z}^t=z^t$ and let $p^t(\overline{z}^t)$ be\\ \hspace*{\algorithmicindent}
         \ \ the uniform distribution.}
        \Else
            \State \texttt{Compute reward estimate $\hat{\textbf{r}}^t$ with
        \begin{align}
            \hat{r}^t(a_i)=\begin{cases}
            \min\{1,\text{UCB}_0^t(a_i,a_{-i}^t,z^t)\}, & \text{if $\text{LCB}_m^t(a_i)\leq 0$, $\forall m\in[M],$}\\
            \langle\overline{p}^t(z^t),\hat{r}^t\rangle, & \text{otherwise}.
            \end{cases}
            \end{align}}
        \State \texttt{Pass $\hat{\textbf{r}}^t\in\mathbb{R}^K$ and $p^t(\overline{z}^t)\in\mathbb{R}^K$ to algorithm $\mathcal{E}$.}
        \State \texttt{Let $p^{t+1}(\overline{z}^t)$ be the prediction of $\mathcal{E}$ and set $p^{t+1}(z^t)=p^{t+1}(\overline{z}^t)$.}
        \EndIf
    \EndFor
\end{algorithmic}
\end{algorithm}

\begin{proof} (of Theorem \ref{thm:general_algo_infinite_Z}) Lemma \ref{lem:union_confidence}-\ref{lem:cum_violation_bound} in Appendix \ref{app:finite_Z} remain valid for infinite context space $\mathcal{Z}$ and context-independent constraints $g_m:\mathcal{A}_i\rightarrow \mathbb{R}$, where Assumption \ref{ass:feasibility_context} implies $g_m(\overline{\pi}_i^*(z^t))\leq 0$ for all $m\in[M]$. Thus, to prove Theorem \ref{thm:general_algo_infinite_Z} it is left to show that conditioned on equation \eqref{eq:simultaneous_UCB_LCB} in Lemma \ref{lem:union_confidence} holding true, the regret is upper-bounded with probability at least $1-\frac{\delta}{2}$ by:

\begin{align}\label{eq:general_regret_infinite_Z}
    R^T\leq(L_r L_p)^{\frac{d}{d+2}} T^{\frac{d}{d+2}} \bigg(\sum_{z\in\mathcal{C}} (R^{T_z}(\mathcal{E}))^2\bigg)^{\frac{1}{d+2}}+\sqrt{T/2\log(2/\delta)}+C_1\beta_0^T\sqrt{T\gamma_0^T}.
\end{align}
The theorem then follows from the same standard probability arguments as in the proof of Theorem \ref{thm:finite_Z}.

The main difference between Lemma 
\ref{lem:regret_bound_infinite_Z} and the statement in equation \eqref{eq:general_regret_infinite_Z} is that the update rule $p^t(z^t)$ is computed according to Algorithm \ref{alg:strategy_finite_Z} for the former and according to Algorithm \ref{alg:c.z.no-regret_finite_Z} for the latter. Thus, to show the statement in equation \eqref{eq:general_regret_infinite_Z}, any steps and arguments from Lemma \ref{lem:regret_bound_infinite_Z} that are independent of the computation rule of $p^t(z^t)$ can be repeated.

Analogously to c.z.AdaNormalGP (Algorithm \ref{alg:contextual_constrainedGPHedge} with $p^t(z^t)$ computed by Algorithm \ref{alg:strategy_infinite_Z}), c.z.$\mathcal{E}$GP (Algorithm \ref{alg:contextual_constrainedGPHedge} with $p^t(z^t)$ computed by Algorithm \ref{alg:c.z.no-regret_infinite_Z}) builds an $\epsilon$-net of the context space by creating new $L_1$-balls in a greedy fashion. Here, the variable $\overline{z}^t$ indicates the ball that context $z^t$ belongs to. Then, the regret can be rewritten as:
    \begin{align*}
        R^T&=\sum_{z\in\mathcal{C}}\sum_{t:\overline{z}^t=z} r(\overline{\pi}_i^*(z^t),a_{-i}^t,z^t)-r(a_i^t,a_{-i}^t,z^t)\\
        &=\underbrace{\sum_{z\in\mathcal{C}}\sum_{t:\overline{z}^t=z} r(\overline{\pi}_i^*(z^t),a_{-i}^t,z^t)-  r(\overline{\pi}_i^*(z),a_{-i}^t,z^t) }_{L-R^T}+\underbrace{\sum_{z\in\mathcal{C}}\sum_{t:\overline{z}^t=z}r(\overline{\pi}_i^*(z),a_{-i}^t,z^t)-r(a_i^t,a_{-i}^t,z^t)}_{R-R^T},
    \end{align*}
    where in the last equality we add and subtract the term $r(\overline{\pi}_i^*(z),a_{-i}^t,z^t)$.

    The first term $L-R^T$ can be bounded by making use of Assumption \ref{ass:lipschitz} and then applying the same arguments as in the proof of Lemma \ref{lem:regret_bound_infinite_Z}:
    \begin{align}\label{eq:general_lipschitz_bound}
        L-R^T\leq L_r L_p T\epsilon.
    \end{align}

    Next, we proceed with bounding $R-R^T$. 
    It can be further upper-bounded by the same arguments and steps as in equations \eqref{eq:finite_Z,UCB,LCB}, \eqref{eq:finite_Z,beta,bound} and \eqref{eq:finite_Z_hoeffding}. Then, with probability at least {\small$1-\frac{\delta}{2}$} the following upper bound holds:
    \begin{align}
        R-R^T&\leq \sum_{z\in\mathcal{C}}\sum_{t:\overline{z}^t=z}r(\overline{\pi}_i^*(z),a_{-i}^t,z^t)-r(a_i^t,a_{-i}^t,z^t)\nonumber\\
        &\leq \sum_{z\in\mathcal{C}}\sum_{t:\overline{z}^t=z} (\min\{1,\text{UCB}_0^t(\overline{\pi}_i^*(z),a_{-i}^t,z^t)\}- \min\{1,\text{UCB}_0^t(a_i^t,a_{-i}^t,z^t)\})\nonumber\\
        &+C_1\beta_0^T\sqrt{T\gamma_0^T}\nonumber\\
        &\sum_{z\in\mathcal{C}}\sum_{t:\overline{z}^t=z} \big(\min\{1,\text{UCB}_0^t(\overline{\pi}_i^*(z),a_{-i}^t,z^t)\}- \sum_{a_i\in\mathcal{A}_i} \overline{p}_{a_i}^t(z^t)\min\{1,\text{UCB}_0^t(a_i,a_{-i}^t,z^t)\}\big)\label{eq:infinite_Z_c.z.AGP}\\
        &+\sqrt{T/2\log(2/\delta)}+C_1\beta_0^T\sqrt{T\gamma_0^T}\nonumber.
    \end{align}
    where $C_1=\frac{8}{\log(1+\sigma^{-2})}$.
    
    Recall the function $\hat{\textbf{r}}^t$ (line $7$ of Algorithm \ref{alg:c.z.no-regret_infinite_Z}) defined as:
     \begin{align}
            \hat{r}^t(a_i)=\begin{cases}
        \min\{1,\text{UCB}_0^t(a_i,a_{-i}^t,z^t)\}, & \text{if $\text{LCB}_m^t(a_i)\leq 0$, $\forall m\in[M],$}\\
        \langle\overline{p}^t(z^t),\hat{r}^t\rangle, & \text{otherwise}.
        \end{cases}
        \end{align}
    For a fixed $z\in\mathcal{C}$, the first term in equation \eqref{eq:infinite_Z_c.z.AGP} can be rewritten as:
    \begin{align}
    &\sum_{t:z^t=z} \min\{1,\text{UCB}_0^t(\overline{\pi}_i^*(z),a_{-i}^t,z^t)\}- \sum_{a_i\in\mathcal{A}_i}\overline{p}_{a_i}^t(z^t)\min\{1,\text{UCB}_0^t(a_i,a_{-i}^t,z^t)\}\nonumber\\
        =& \sum_{t:z^t=z} \hat{r}^t(\overline{\pi}_i^*(z^t))-\sum_{a_i\in\mathcal{A}_i}\overline{p}_{a_i}^t(z)\hat{r}^t(a_i^t)\label{eq:general_infinite_Z_normal_regret}\\
        =& \sum_{t:z^t=z} \mathbbm{1}_{\overline{\pi}_i^*(z),t}\bigg(\hat{r}^t(\overline{\pi}_i^*(z))-\sum_{a_i\in\mathcal{A}_i}\overline{p}_{a_i}^t(z^t)\hat{r}^t(a_i^t)\bigg),\label{eq:general_infinite_Z_sleeping_regret}\\
        =&\sum_{t:z^t=z} \bigg(\hat{r}^t(\overline{\pi}_i^*(z))-\sum_{a_i\in\mathcal{A}_i}p_{a_i}^t(z^t)\hat{r}^t(a_i^t)\bigg),\label{eq:infinite_Z_conversion_sleeping_normal_regret}
    \end{align}

Equation \eqref{eq:general_infinite_Z_normal_regret} follows by definition of $\hat{\textbf{r}}$ (line $2$ Algorithm \ref{alg:c.z.no-regret_finite_Z}). Since we conditioned on equation \eqref{eq:simultaneous_UCB_LCB} in Lemma \ref{lem:union_confidence} holding true, $\mathbbm{1}_{\overline{\pi}_i^*(z),t}$ equals $1$ for all $t\geq 1$ and therefore equation \eqref{eq:general_infinite_Z_normal_regret} and equation \eqref{eq:general_infinite_Z_sleeping_regret} are equivalent. Equivalence between equation \eqref{eq:general_infinite_Z_sleeping_regret} and equation \eqref{eq:infinite_Z_conversion_sleeping_normal_regret} follows from the same arguments which are used in the proof of Theorem \ref{thm:general_algo_finite_Z} to show equivalence between equation \eqref{eq:general_finite_Z_sleeping_regret} and equation \eqref{eq:conversion_sleeping_normal_regret}. Note furthermore, that since we condition on equation \eqref{eq:simultaneous_UCB_LCB} in Lemma \ref{lem:union_confidence} holding true, $\text{UCB}_0^t(\cdot)\geq 0$ since $r(\cdot)\geq 0$ and thus $\hat{r}^t(\cdot)\in[0,1]$. Now observe that equation \eqref{eq:infinite_Z_conversion_sleeping_normal_regret} is precisely the regret with respect to expert $\overline{\pi}_i^*(z)$ which a player with reward function $\hat{r}^t(\cdot)\in[0, 1]$ incurs in an expert problem after $T_z=\sum_{t=1}^T 1_{\{\overline{z}^t=z\}}$ repetitions of the game. Here $T_z$ denotes the number of times the revealed context $z^t$ belonged to the ball centered at $z\in\mathcal{C}$. Recall that for each context $z\in\mathcal{C}$ a distribution $p^t(z)$ is maintained which is updated whenever context $z^t$ belongs to the ball with center $z$. The prediction $p^t(z^t)$ (Line $9$ in Algorithm \ref{alg:c.z.no-regret_infinite_Z}) is given by the no-regret algorithm $\mathcal{E}$ which receives full-information feedback $\hat{\textbf{r}}^t=[\hat{r}^t(a_i)]_{a_i\in\mathcal{A}_i}$. Thus, equation \eqref{eq:infinite_Z_conversion_sleeping_normal_regret} corresponds to the regret incurred by the algorithm $\mathcal{E}$ after $T_z$ repetitions and is upper-bounded by the regret bound of $\mathcal{E}$:
\begin{align}
    \sum_{t:\overline{z}^t=z} \bigg(\hat{r}^t(\overline{\pi}_i^*(z))-\sum_{a_i\in\mathcal{A}_i}p_{a_i}^t(z^t)\hat{r}^t(a_i^t)\bigg)\leq R^{T_z}(\mathcal{E}).
\end{align}
    
Summing over all the contexts $z\in\mathcal{C}$, we can further upper-bound equation \eqref{eq:infinite_Z_c.z.AGP} with probability at least $1-\delta/2$:
    \begin{align}
        R-R^T\leq&\sum_{z\in\mathcal{C}}\sum_{t:\overline{z}^t=z} \min\{1,\text{UCB}_0^t(\overline{\pi}_i^*(z),a_{-i}^t,z^t)\}- \sum_{a_i\in\mathcal{A}_i}\overline{p}_{a_i}^t(z^t)\min\{1,\text{UCB}_0^t(a_i,a_{-i}^t,z^t)\}\nonumber\\
        &+\sqrt{T/2\log(2/\delta)}+C_1\beta_0^T\sqrt{T\gamma_0^T}\nonumber\\
    &\leq \sum_{z\in\mathcal{C}}R^{T_z}(\mathcal{E})+\sqrt{T/2\log(2/\delta)}+C_1\beta_0^T\sqrt{T\gamma_0^T}\nonumber\\
    &\leq \sqrt{|\mathcal{C}|\sum_{z\in\mathcal{C}}(R^{T_z}(\mathcal{E}))^2}+\sqrt{T/2\log(2/\delta)}+C_1\beta_0^T\sqrt{T\gamma_0^T}\nonumber\\
    &\leq \epsilon^{-d/2}\sqrt{\sum_{z\in\mathcal{C}}(R^{T_z}(\mathcal{E}))^2}+\sqrt{T/2\log(2/\delta)}+C_1\beta_0^T\sqrt{T\gamma_0^T}.\label{eq:general_R-R_i^T}
\end{align}
In the last equality we used $\mathcal{C}\leq (1/\epsilon)^d$, because the contexts space $\mathcal{Z}\subseteq[0,1]^d$ can be covered by at most $(1/\epsilon)^d$ balls of radius $\epsilon$ such that the distance between their centers is at least $\epsilon$ \cite{clakson05}.

Then, with probability at least {\small$1-\frac{\delta}{2}$} the regret is bounded by:
\begin{align*}
    R^T&= L-R^T+R-R^T\\
    &\leq L_r L_p T\epsilon+\epsilon^{-d/2}\sqrt{\sum_{z\in\mathcal{C}}(R^{T_z}(\mathcal{E}))^2}+\sqrt{T/2\log(2/\delta)}+C_1\beta_0^T\sqrt{T\gamma_0^T}\\
    &=(L_r L_p)^{\frac{d}{d+2}} T^{\frac{d}{d+2}} \bigg(\sum_{z\in\mathcal{C}} (R^{T_z}(\mathcal{E}))^2\bigg)^{\frac{1}{d+2}}+\sqrt{T/2\log(2/\delta)}+C_1\beta_0^T\sqrt{T\gamma_0^T},
\end{align*}
where in the last equality $\epsilon$ is set as $(L_r L_p)^{-\frac{2}{d+2}}T^{-\frac{2}{d+2}}\bigg(\sum_{c\in\mathcal{C}}(R^{T_z}(\mathcal{E}))^2\bigg)^{\frac{1}{d+2}}$.
\end{proof}

\subsubsection{Proof of Proposition \ref{prop:c.z.CEE}}\label{app:game_equilibria}

In this section, we provide a proof for Proposition \ref{prop:c.z.CEE}.

For any player $i$ following a no-regret, no-violation algorithm the constrained contextual regret after $T$ rounds of game play is upper bounded by:
    \begin{align}\label{eq:proof_prop_1}
    R^T\geq&\max_{\substack{\pi_i\in\Pi_i}} \sum_{t=1}^T r(\pi_i(z^t),a_{-i}^t,z^t)-\sum_{t=1}^T r(a_i^t,a_{-i}^t,z^t)
\end{align}
for all $\pi_i\in\Pi_i$ such that $g_{i,m}(\pi_i(z^t))\leq 0$ for all $m\in[M]$ and $t\in[T]$.

By definition of the empirical joint policy $\rho^T$ at round $T$ (see equation \eqref{eq:emirical_joint_policy}) the following holds:
\begin{align}
\begin{split}\label{eq:proof_prop_2}
    \frac{1}{T} \sum_{t=1}^T \mathbb{E}_{a\sim\rho_{z^t}^T}\big[r_i(a_i,a_{-i},z^t)\big]&= \frac{1}{T} \sum_{t=1}^T \sum_{a\in\mathcal{A}} \rho_{z^t}^T(a)r_i(a_i,a_{-i},z^t)\\
    &\frac{1}{T} \sum_{t=1}^T \sum_{z\in\mathcal{Z}^T}\sum_{t: z^t=z} r_i(a_i^t,a_{-i}^t,z^t)\\
    &=\frac{1}{T} \sum_{t=1}^T  r_i(a_i^t,a_{-i}^t,z^t).
\end{split}
\end{align}
Following the same argumentation it further holds that:
\begin{align}\label{eq:proof_prop_3}
    \frac{1}{T} \sum_{t=1}^T\mathbb{E}_{a\sim\rho_{z^t}^T}\big[r_i(\pi_i(z^t),a_{-i},z^t)\big] = \frac{1}{T} \sum_{t=1}^T  r_i(\pi_i(z^t),a_{-i}^t,z^t).
\end{align}
By combining equation \eqref{eq:proof_prop_1}-\eqref{eq:proof_prop_3} and by definition of $\epsilon$ in Proposition \ref{prop:c.z.CEE} we obtain the following inequality:
    
\begin{align*}
     \frac{1}{T} \sum_{t=1}^T \mathbb{E}_{a\sim\rho_{z^t}^T}\big[r_i(a_i,a_{-i},z^t)\big]&\geq \frac{1}{T} \sum_{t=1}^T\mathbb{E}_{a\sim\rho_{z^t}^T}\big[r_i(\pi_i(z^t),a_{-i},z^t)\big] - \frac{R^T}{T}\\
     &\geq \frac{1}{T} \sum_{t=1}^T\mathbb{E}_{a\sim\rho_{z^t}^T}\big[r_i(\pi_i(z^t),a_{-i},z^t)\big] - \epsilon,
\end{align*}
for all $\pi_i\in\Pi_i$ such that $g_{i,m}(\pi_i(z^t))\leq 0$ for all $m\in[M]$ and $t\in[T]$.

To prove that $\rho^T$ is an $\epsilon$-c.z.CCE it is left to show that the time-averaged expected constraint violations are bounded by $\epsilon$.
For any player $i$ following a no-regret, no-violation algorithm the cumulative constraint violations after $T$ rounds of game play are upper bounded by:

\begin{align}\label{eq:proof_prop_5}
    \mathcal{V}_{i,m}\geq \sum_{t=1}^T [g_{i,m}(a_i^t,z^t)]_+
\end{align}
for all $m\in[M]$. Furthermore, it holds that:

\begin{align}
\begin{split}\label{eq:proof_prop_6}
    \frac{1}{T}\sum_{t=1}^T \mathbb{E}_{a\sim\rho_{z^t}^T}\big[[g_{i,m}(a_i,z^t)]_+\big]
    & = \frac{1}{T}\sum_{t=1}^T \rho_{z^t}^T(a)[g_{i,m}(a_i,z^t)]_+\\
    &= \frac{1}{T}\sum_{t=1}^T \sum_{z\in\mathcal{Z}^T}\sum_{t:z^t=z}[g_{i,m}(a_i^t,z^t)]_+\\
    &= \frac{1}{T}\sum_{t=1}^T [g_{i,m}(a_i^t,z^t)]_+
\end{split}
\end{align}

By combining equation \eqref{eq:proof_prop_5} and \eqref{eq:proof_prop_6} and by definition of $\epsilon$ in Proposition \ref{prop:c.z.CEE} we obtain the following inequality:
\begin{align*}
    \frac{1}{T}\sum_{t=1}^T\mathbb{E}_{a\sim\rho_{z^t}}\big[[g_{i,m}(a_i, z^t)]_+\big]&\leq \frac{\mathcal{V}_{i,m}^T}{T}\leq \epsilon, \quad\forall m\in[M]
\end{align*}
which completes the proof.

\subsection{Supplementary Material for Section \ref{sec:experiments}}\label{app:experiments}
In the following, we demonstrate c.z.AdaNormalGP on a random $N$-player game with unknown constraints. We furthermore, provide a complete description of our experimental setup for the application of controller design of robots presented in Section \ref{sec:experiments_robots}. 

\subsubsection{Random $N$-player games}\label{app:random_game}

We consider a repeated game between three players with action sets $\mathcal{A}_1=\mathcal{A}_2=\mathcal{A}_3=\{0,1,\ldots,K-1\}$ and contextual information of the form $\mathcal{Z}=\{1,\ldots,Z\}$. For each player, we simulate $10$ different reward functions $r_i:\mathcal{A}\times\mathcal{Z}\rightarrow[0,1]$ as the posterior mean computed from $10$ samples of a GP, i.e., $r_i\sim\mathcal{GP}(0,k_r(\cdot,\cdot))$. Kernel $k_r=k_a*k_z$ is a composite kernel which can encode different dependences of $r_i$ on $a\in\mathcal{A}$ and $z\in\mathcal{Z}$. Furthermore, the  actions of each player are subject to a constraint $g_i(a_i)\leq 0$ for all $a_i\in\mathcal{A}_i$. We simulate $10$ different constraint functions $g_i:\mathcal{A}_i\rightarrow\mathbb{R}$ as the posterior mean computed from $10$ samples of a GP, i.e., $g_i\sim\mathcal{GP}(0,k_g(\cdot,\cdot))$. Concretely, we set $K=7$, $Z=5$, and $T=1000$, the kernels are set as $k_a=k_{SE}$ with lengthscale $l=2$ and $k_z=k_g=k_{SE}$ with lengthscale $l=0.5$ and the noise is given by $\epsilon_i^t\sim\mathcal{N}(0,1)$. 

For every game, we consider player $1$ to either play 1) random, i.e., selects actions uniformly at random or according to 2) GPMW \cite{sessa2019no}, a no-regret algorithm, 3) z.GPMW \cite{sessa21}, a contextual no-regret algorithm, 4) c.AdaNormalGP, a version of c.z.AdaNormalGP which ignores contextual information, and 5) c.z.AdaNormalGP. Player $2$ and player $3$ are random players. The regret obtained by ``GPMW'', ``z.GPMW'', ``c.AdaNormalGP'' and ``c.z.AdaNormalGP'' decreases over $T$, shown in Figure \ref{fig:1} (\textit{left}). In particular, the two contextual no-regret algorithms ``z.GPMW'' and ``c.z.AdaNormalGP'' obtain smaller regret compared to their non-contextual counterparts since they exploit contextual information in their update rule. In addition, ``c.AdaNormalGP'' and ``c.z.AdaNormalGP'' learn the constraints on the action set and thus their cumulative constraint violations do not increase after a few rounds of game play, whereas ``GPMW'' and ``z.GPMW'' keep violating the constraints on the action set, shown in Figure \ref{fig:1} (\textit{right}).

\begin{figure}[t]
    \centering
    \includegraphics[scale = 0.57] {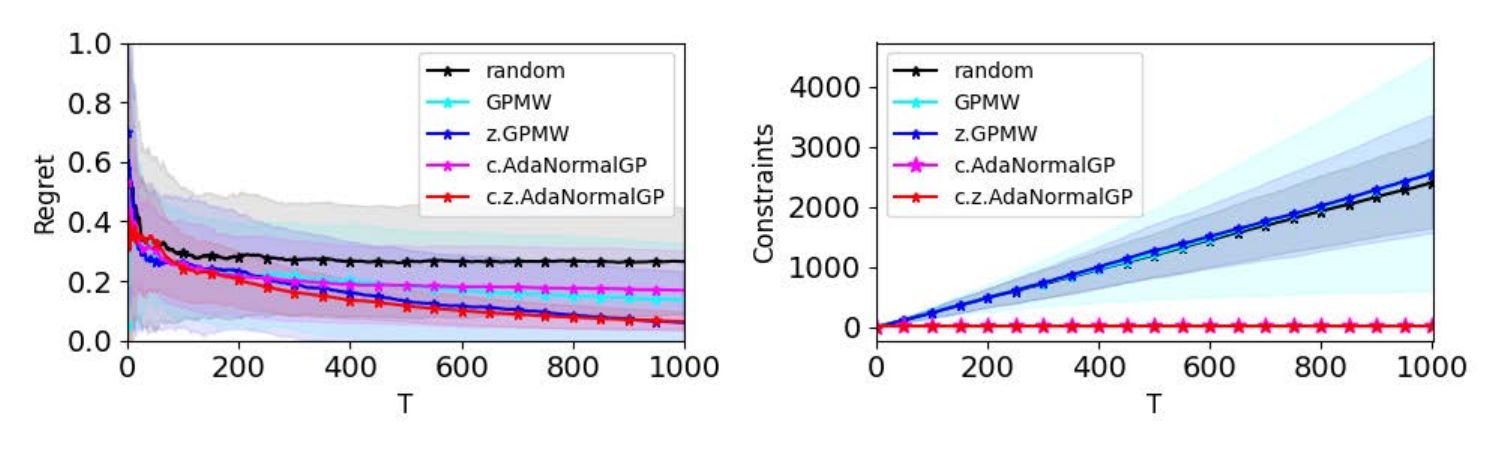}
    \caption{Regret and cumuluative constraint violations for players ``random'', ``GPMW'', ``c.GPMW'', ``c.AdaNormalGP'', and ``c.z.AdaNormalGP''. Shaded areas represent $\pm$ one standard deviation.} \label{fig:1}
\end{figure}

\end{document}